\title{Rule Rewriting Revisited: A Fresh Look at Static Filtering for Datalog and ASP}
\author{Philipp {Hanisch}}{Knowledge-Based Systems Group, TU Dresden}{philipp.hanisch1@tu-dresden.de}{https://orcid.org/0000-0003-3115-7492}{}
\author{Markus {Krötzsch}}{Knowledge-Based Systems Group, TU Dresden}{markus.kroetzsch@tu-dresden.de}{https://orcid.org/0000-0002-9172-2601}{}
\authorrunning{P. Hanisch and M. Krötzsch}
\keywords{Rule rewriting, static optimisation, static filtering, Datalog, Answer Set Programming}
\newcommand{\ghost}[1]{\raisebox{0pt}[0pt][0pt]{\makebox[0pt][l]{#1}}}
\newcommand{\alglineref}[1]{L\ref{#1}} 
\definecolor{darkgreen}{HTML}{36AB14}
\newcommand{\opfont}[1]{\text{\sf{#1}}} \usepackage{bm}
\renewcommand{\vec}[1]{\bm{#1}}
\newcommand{\ltuple}{\langle}
\newcommand{\rtuple}{\rangle}
\newcommand{\tuple}[1]{\ltuple{#1}\rtuple}
\newcommand{\narrow}[1]{\,{#1}\,} \newcommand{\blank}{\text{\textvisiblespace}} 
\newcommand{\natnums}{\mathbb{N}}
\newcommand{\card}[1]{\vert #1 \vert}
 \newcommand{\naf}{\mathit{not}\,} 
\newcommand{\shortvee}{\,{\vee}\,}
\newcommand{\stablemods}[2]{\opfont{sm}(#1,#2)} 
\providecommand*{\cupdot}{\mathbin{\mathpalette\@cupdot{}}}
\newcommand*{\@cupdot}[2]{\ooalign{$\m@th#1\cup$\cr
		\hidewidth$\m@th#1\cdot$\hidewidth
	}}
\newcommand{\sigPred}{\mathbf{P}}
\newcommand{\sigPredFilter}{\mathbf{F}}
\newcommand{\sigPredOutput}{\mathbf{P}_{\opfont{out}}}
\newcommand{\sigPredIDB}{\mathbf{P}_{\opfont{IDB}}}
\newcommand{\sigPredStrat}{\mathbf{P}_{\opfont{str}}}
\newcommand{\sigCons}{\mathbf{C}}
\newcommand{\sigVar}{\mathbf{V}}
\newcommand{\genFilters}{\mathbf{G}} 
\newcommand{\arity}[1]{\opfont{ar}(#1)} \newcommand{\vars}[1]{\opfont{var}(#1)} \newcommand{\sigNumPos}{\mathbf{N}}
\newcommand{\numpos}[1]{{\setlength{\fboxsep}{1pt}\framebox{\scalebox{0.9}{\makebox[0pt][l]{\phantom{1}}$\boldsymbol{#1}$}}}} \newcommand{\tonumpos}[1]{{\setlength{\fboxsep}{1pt}\framebox{\scalebox{0.9}{\makebox[0pt][l]{\phantom{1}}$#1$}}}} 
\newcommand{\ground}[1]{\opfont{gr}(#1)} 
\newcommand{\inter}[1]{\mathcal{#1}}  \newcommand{\adatabase}{\inter{D}}  
\newcommand{\reprFunc}{\opfont{rep}} \newcommand{\repr}[1]{\reprFunc(#1)}
\newcommand{\filterFormulas}[1]{\mathcal{F}_{#1}}
\newcommand{\fnFilterExpr}[1]{\opfont{flt}(#1)}
\newcommand{\approxmodels}{\mathrel|\joinrel\approx}
\newcommand{\approxequiv}{\approxeq}
\newcommand{\arule}{\rho} \newcommand{\atom}[1]{\underline{#1}}  
  \newcommand{\aprogram}{P}
\newcommand{\complclass}[1]{\text{{\sc #1}}\xspace} 
\newcommand{\PTime}{\complclass{P}}
\newcommand{\coNP}{co\complclass{NP}}
\newcommand{\shBodyNormal}{B_{\bar{\sigPredFilter}}}
\newcommand{\shBodyNaf}{B_{\bar{\sigPredFilter}}^-}
\newcommand{\shBodyFilterG}{G_{\sigPredFilter}}
\newcommand{\shBodyFilterGRewrite}{\psi}
\newcommand{\shBody}{\shBodyNormal \land \shBodyNaf \land \shBodyFilterG}
\newcommand{\shHead}{h(\vec{x})}
\newcommand{\shBodyNormali}[1]{B_{#1,\bar{\sigPredFilter}}}
\newcommand{\shBodyNafi}[1]{B_{#1,\bar{\sigPredFilter}}^-}
\newcommand{\shBodyFilterGi}[1]{G_{#1,\sigPredFilter}}
\newcommand{\shHeadi}[1]{h_#1(\vec{x}_#1)}
\newcommand{\satisfy}[2]{#1 \in #2^\adatabase}
\newcommand{\predOddK}[1]{\mathbf{1}_{#1}}
\newcommand{\predEvenK}[1]{\mathbf{0}_{#1}}
\newcommand{\const}[1]{\mathtt{#1}\xspace}
\newcommand{\predEquals}{\mathrel{\overset{\cdot}{=}}}
\begin{document}

	\maketitle

	\begin{abstract}
\emph{Static filtering} is a data-independent optimisation method 
for Datalog, which generalises algebraic query rewriting techniques from relational databases.
In spite of its early discovery by Kifer and Lozinskii in 1986, the method has been overlooked
in recent research and system development, and special cases are being rediscovered independently.
We therefore recall the original approach, using updated terminology and more general filter predicates
that capture features of modern systems, and we show how to extend its applicability to answer set programming (ASP).
The outcome is strictly more general but also more complex than the classical approach:
double exponential in general and single exponential even for predicates of bounded arity.
As a solution, we propose tractable approximations of the algorithm that can still yield much improved 
logic programs in typical cases, e.g., it can improve the performance of rule systems over real-world data in the order of magnitude.
\end{abstract}
 	
	\section{Introduction}

Besides its many other advantages, the declarative nature of logic-based rule languages also
enables effective optimisation through logically equivalent rewritings.
Of course, already for plain Datalog, logical equivalence is undecidable, and highly complex
in decidable special cases \cite{BKR15:reasonQuery}.
But many concrete transformations that guarantee logical equivalence have been proposed, ranging
from the popular \emph{magic sets} \cite{DBLP:conf/pods/BancilhonMSU86,DBLP:conf/sigmod/MumickFPR90,DBLP:conf/sigmod/MumickP94}
to recent proposals \cite{DBLP:conf/ppdp/TekleL10,DBLP:journals/corr/abs-1909-08246,DBLP:conf/sigmod/WangK0PS22,DBLP:journals/tplp/ZanioloYDSCI17}.
Like these examples, many rule rewritings are \emph{static} optimisations, which do not depend
on the concrete set of facts (data) that is to be processed.

One of the classical proposals of this kind is \emph{static filtering}, the generalisation of
\emph{selection pushing} methods from relational databases to Datalog,
introduced by Kifer and Lozinskii \cite{KiferLozinskii:StaticFiltering86,KiferLozinskii:StaticFilteringImpl87,KiferLozinskii:StaticFiltering90}.
The underlying principle of enforcing restrictions (``filters'') on intermediate results as early
as possible in the computation is a tried and tested paradigm in databases, and does -- in contrast
to, e.g., magic sets -- not change the general structure of derivations fundamentally.
Moreover, static filtering strongly increases the effectivity of the related method of
projection pushing \cite{KiferLozinskii:StaticFiltering90}.
Static filtering can enable polynomial (data complexity) or exponential (combined complexity) performance improvements (see Examples~\ref{ex_motivation} and \ref{ex_bounded_reach_sfrewrite_projected}).

Surprisingly, the journal paper of Kifer and Lozinskii has attracted less than 50 citations
in the past 35 years.\footnote{Google Scholar, \url{https://scholar.google.com/scholar?cites=13499523163799224695}, retrieved 8 September 2025}
Actual uses of the method are rarely reported \cite{DBLP:conf/rr/Billig08,DBLP:journals/infsof/FernandesBPW97,DBLP:conf/caise/LiuLYWH09}, while most mentions consider it distantly related work.
Is this basic optimisation approach maybe so fundamental that it is not even mentioned by implementers?
Or is it known and used under another name?

\begin{example}\label{ex_motivation}
We ran a small experiment to find out, using the following Datalog rules (see Section~\ref{sec_prelims} for a
formal introduction to Datalog, and the literature for more details \cite{Kroetzsch2025:Datalog,Maier+:DatalogHistory18}):
\begin{align}
p(0,\ldots,0,0,\const{a}) . &\qquad p(1,\ldots,1,0,\const{b}) . \label{eq_testeval_in}\\
p(x_1,\ldots, x_i,1,0,\ldots,0,y) &\textstyle \leftarrow p(x_1,\ldots, x_i,0,1,\ldots,1,y) \hspace{.8cm} \text{for all $i\in\{1,\ldots,\ell\}$} \label{eq_testeval_step}\\
\textit{out}(y) & \leftarrow p(x_1,\ldots,x_\ell,y)\wedge y\predEquals \const{b} \label{eq_testeval_out}
\end{align}
where $p$ is an $(\ell+1)$-ary predicate with $\ell\geq 1$, $\predEquals$ denotes equality, $y$ and $x_k$ are variables, and $0$, $1$, $\const{a}$, and $\const{b}$ are constants.
Rules \eqref{eq_testeval_step} implement a binary counter over $\ell$ bits, so exponentially many $p$-facts are inferred from the facts \eqref{eq_testeval_in}.
Optimisation is possible if we are only interested in inferences for predicate $\textit{out}$:
then the precondition $y\predEquals \const{b}$ can be added to the rules \eqref{eq_testeval_step}, so that just one new $p$-fact follows.
Static filtering produces this rewriting.
\end{example}

\begin{table}[t]
\caption{Runtimes for example program \eqref{eq_testeval_in}--\eqref{eq_testeval_out} with $\ell=19$ (median of five runs, timeout at 5min, evaluation system: Linux, AMD Ryzen 7 PRO 5850U, 16 GiB RAM)}\label{tab_testeval}
\mbox{}\hfill \begin{tabular}{rrrrr}
    & \textbf{Souffl\'{e}} {\footnotesize v2.5} & \textbf{Nemo} {\footnotesize v0.8.1} & \textbf{Clingo} {\footnotesize v5.8.0} & \textbf{DLV} {\footnotesize v2.1.2} \\\hline \\[-2ex]
Original  & 1214ms & $>$5min & 1104ms & 74579ms\\
Rewritten & 24ms& 99ms & 8ms & 3ms\\
\end{tabular}\hfill\mbox{}
\end{table}
Many modern rule systems let users specify output predicates. For our experiment, we considered Datalog engines
Souffl\'{e} \cite{Jordan+:Souffle16} (syntax {\tt .output out}) and Nemo \cite{Ivliev+:Nemo2024} (syntax {\tt @export out :- csv\{\}.}),
and ASP engines Clingo \cite{Gebser+:clingo2019} and DLV \cite{Alviano+:DLV2:17} (syntax {\tt \#show out/1.} for both).
Table~\ref{tab_testeval} shows runtimes for the original program and the optimised version.
Evidently, each tool benefits from the optimisation, yet none implements it by default.

Why is such a natural optimisation, considered standard in relational query optimisation, ignored in
modern rule systems? Research culture may be a reason. Typical benchmarks for comparing systems 
are already optimised, so static optimisations offer no benefits. They are likely more effective with less polished user
inputs, especially during development and experimentation.
Moreover, static filtering has not attained the popularity of other approaches, especially magic sets and
semi-naive evaluation, and may not be known to many implementers. The original description relies on \emph{system graphs} as an auxiliary concept that many readers may not know today,
yet we are not aware of modern accounts or textbook explanations of the method.

Another reason might be practicality. Kifer and Lozinskii found the method to be exponential in the worst case -- as hard as the Datalog reasoning task it aims to optimise -- and did not suggest tractable variants.
It is also left open how static filtering generalises to further filter expressions (Kifer and Lozinskii only consider binary relations like $=$, $\neq$, and $\leq$), and to non-monotonic negation or ASP.

 In this work, we therefore revisit static filtering and introduce a generalised rewriting that supports arbitrary filters 
 (Section~\ref{sec_sf}). Analysing the complexity of our method (Section~\ref{sec_sf_steps}), we find further exponential increases over
 the prior special case, which motivates our design of a tractable variant that is still reasonably general (Section~\ref{sec_fentailment}).
 Finally, we show how static filtering can be extended to rules with negation and to ASP (Section~\ref{sec_negation}), before
 comparing closely related works (Section~\ref{sec_rel_work}).
Detailed proofs for all claims are found online \cite{hanisch2026rulerewritingrevisitedfresh}.

 \section{Preliminaries}\label{sec_prelims}

We consider a signature based on mutually disjoint, countably infinite sets of \emph{constants} $\sigCons$, \emph{variables} $\sigVar$,
and \emph{predicates} $\sigPred$, where 
each predicate $p\in\sigPred$ has \emph{arity} $\arity{p}\geq 0$.
\emph{Terms} are elements of $\sigVar \cup \sigCons$.
We write bold symbols $\vec{t}$ for lists $t_1,\ldots,t_{|\vec{t}|}$ of terms (and for terms of special types, such as lists of variables $\vec{x}$).
Lists are treated like sets when order is not relevant, so we may write, e.g., $\vec{x}\subseteq\sigVar$.
By $\vars{E}$ we generally denote the set of variables in an expression $E$.
For a mapping $\sigma \colon \vec{x} \to \sigCons$ and expression $E$, we obtain $E\sigma$ by simultaneously replacing all occurrences of $x \in \vec{x}$ by $\sigma(x)$.

\paragraph*{Rules and programs}
An \emph{atom} $\atom{a}$ is an expression $p(\vec{t})$ with $p\in\sigPred$, $\vec{t}$ a list of terms, and $\arity{p} = \card{\vec{t}}$.
A (Datalog) \emph{rule} $\arule$ is a formula $H\leftarrow B$,
where the \emph{head} $H$ is an atom, the \emph{body} $B$ is a conjunction of atoms, and all variables are implicitly universally quantified.
We require that all variables in $H$ also occur in $B$ (\emph{safety}).
Conjunctions of atoms may be treated as sets of atoms.
A (Datalog) \emph{program} $\aprogram$ is a finite set of rules.
A predicate $p$ is an \emph{EDB atom} in $\aprogram$ if it only occurs in rule bodies,
and an \emph{IDB atom} if it occurs in some rule head.\footnote{These terms originate from \emph{extensional}/\emph{intentional database}.}

\paragraph*{Semantics}
A \emph{fact} for predicate $p$ is an atom $p(\vec{c})$ with $\vec{c}\subseteq\sigCons$.
A \emph{database} $\adatabase$ for a program $\aprogram$ is a potentially infinite set of
facts for EDB predicates of $\aprogram$.
We allow $\adatabase$ to be infinite, so as to
accommodate conceptually infinite built-in relations, such as $\leq$. Practical systems typically evaluate such built-ins on demand and syntactically
ensure that infinite built-ins do not lead to infinite derivations, e.g., by requiring that variables in built-ins also occur
in body atoms with finite predicates. Such concerns are unimportant to our results, so we can unify
(given) input facts and (computed) built-in relations.

The (unique) \emph{model} $\inter{M}$ for program $\aprogram$ and database $\adatabase$ is the least set of facts such that
(1) $\adatabase\subseteq\inter{M}$, and
(2) for every rule $\arule\in\aprogram$ with variables $\vec{x}$, and every mapping $\sigma:\vec{x}\to\sigCons$, if $\sigma(B)\subseteq\inter{M}$ then $\sigma(H)\subseteq\inter{M}$.
If $\sigma(B)\subseteq\inter{M}$, we also call $\sigma$ a \emph{match} of $\arule$ on $\inter{M}$.
Models can be equivalently defined through iterated rule applications, proof trees, or as least models of $\aprogram$ viewed as a first-order logic theory \cite{Alice}.

For a rule $H \leftarrow B$ with variables $\vec{x}$, a mapping $\sigma:\vec{x}\to\sigCons$ is \emph{applicable} to a database $\adatabase$ if $B\sigma \subseteq \adatabase$ and $H\sigma \nsubseteq \adatabase$.
A set of facts $\adatabase$ is \emph{closed under a program} $P$, written $\adatabase \models P$, if there is no $\arule \in P$ with an applicable mapping $\sigma$.
For a predicate $p$, let $p^\adatabase = \{ \vec{c} \mid p(\vec{c}) \in \adatabase \}$ denote the tuples of $p$-facts in $\adatabase$.

\paragraph*{Normal form}
To simplify presentation, we require that rules contain only variables, no constants, and that
atoms in rules do not contain repeated variables. This normal form can be established by defining auxiliary EDB predicates
$(\blank\narrow\predEquals \blank)^\adatabase=\{\tuple{\const{c},\const{c}}\mid \const{c}\in\sigCons\}$ and $(\blank\narrow\predEquals \const{d})^{\adatabase}=\{\const{d}\}$ 
for every constant
$\const{d}\in\sigCons$.
Now every occurrence of $\const{d}\in\sigCons$ in a rule is replaced by a fresh variable $x$, and the atom $x\narrow\predEquals \const{d}$ is 
added to the rule body. Similarly, every non-first occurrence of a variable $x$ in a body atom is replaced
by a fresh variable $x'$, and the atom $x\narrow\predEquals x'$ is added to the rule body.
Similar normalisations can be used for any built-in function that reasoners might support,
e.g., arithmetic functions as in a rule $p(x+y)\leftarrow q(x,y)$ can be rewritten as
$p(z)\leftarrow q(x,y)\wedge z\narrow\predEquals x{+}y$ with $(\blank\narrow\predEquals \blank{+}\blank)^\adatabase=\{\tuple{\const{l},\const{m},\const{n}} \in \sigCons^3\mid \const{l}=\const{m}+\const{n}\}$.
As before, the infinite EDB predicate is just a conceptual model for a real systems' on-demand computation
of its supported built-in functions.

\paragraph*{Outputs and filters}
Given a program $\aprogram$, we consider distinguished sets $\sigPredOutput\subseteq\sigPred$ of \emph{output predicates} and
$\sigPredFilter\subseteq\sigPred$ of \emph{filter predicates}, where all filter predicates must be EDB predicates in $\aprogram$.
Outputs define which inferences are of interest to users, and are supported in many reasoners, including the ones in Table~\ref{tab_testeval}.
Filters are used in our algorithms to reduce inferences for non-output predicates, and would be defined by the reasoner
implementation: they should be easy to check and highly selective.
For example, the above predicates for $\predEquals$ are good filters,
whereas an EDB predicate that is stored in a large file on disk most likely is not. However, specific predicates
like ``$x$ is a string with letter \emph{e} in third position'' or pre-loaded data with
fast index structures can also be suitable filters.
Given a set of atoms $B$, we define $B_\sigPredFilter=\{p(\vec{t})\in B \mid p\in\sigPredFilter\}$
and $B_{\bar{\sigPredFilter}}=B\setminus B_\sigPredFilter$ as the conjunction of atoms with filter predicates and, respectively, atoms with non-filter predicates.

\paragraph*{Rules with generalised filters}
A \emph{rule with generalised filter expressions} has the form $H\leftarrow B_{\bar{\sigPredFilter}}\wedge G_\sigPredFilter$
with $H$ a head atom, $B_{\bar{\sigPredFilter}}$ a conjunction of non-filter atoms, and $G_\sigPredFilter\in\genFilters$ a
positive boolean combination of filter atoms, defined recursively:
\[\genFilters\Coloneqq p(\vec{t}): p\in\sigPredFilter, |\vec{t}|=\arity{p}, \vec{t} \subseteq \sigCons \cup \sigVar \mid (\genFilters\wedge\genFilters) \mid (\genFilters\vee \genFilters). \]
Positive boolean combinations of body atoms are normally syntactic sugar in Datalog: we can replace any (body) disjunction
$A[\vec{x}]\vee B[\vec{y}]$ over (possibly overlapping) sets of variables
$\vec{x}$ and $\vec{y}$ by a fresh atom $D[\vec{x}\cup\vec{y}]$, and add rules $D[\vec{x}\cup\vec{y}]\leftarrow A[\vec{x}]$ and
$D[\vec{x}\cup\vec{y}]\leftarrow B[\vec{y}]$. The fresh atom $D$ avoids the exponential blow-up that would occur if we would instead create
two copies of the rule, one with $A$ and one with $B$. 
With potentially infinite filter predicates, however, this syntactic transformation is not natural, since the auxiliary $D$ could be infinite, 
whereas it is easy for systems to evaluate nested expressions $G_\sigPredFilter$ in place.
Therefore, if not otherwise stated, all \emph{programs} below may include generalised filters. Their normal form is defined as for Datalog.

\section{Static Filtering: A General Algorithm}\label{sec_sf}

Next, we present a method for optimising Datalog programs by static rewriting. The optimised programs
have smaller least models that are nonetheless guaranteed to contain the same facts for output predicates $\sigPredOutput$.
A comparison with the work of Kifer and Lozinskii is given in Section~\ref{sec_rel_work}.

We consider a fixed program $\aprogram$ in normal form, a database $\adatabase$, filter predicates $\sigPredFilter$, and output predicates $\sigPredOutput$.
Facts for non-filter predicates in $\adatabase$ are irrelevant for static filtering.

\begin{example}\label{ex_bounded_reach}
As a running example, we consider a depth-bounded reachability check:
\begin{align}
r(x,y,n) &\leftarrow e(x,y)\wedge n\narrow\predEquals 0 \label{rule_reach_init}\\
r(x,z,m) &\leftarrow r(x,y,n)\wedge e(y,z)\wedge m\narrow\predEquals n{+}1 \label{rule_reach_step}\\
\textit{out}(y) &\leftarrow r(x,y,n)\wedge x\narrow\predEquals \const{a}\wedge n\narrow\leq 5 \label{rule_reach_out}
\end{align}
where $\textit{out}$ is the output predicate.
Notably, rule \eqref{rule_reach_step} can produce infinitely many inferences if the graph described
by $e$ is cyclic, but only finitely many nodes reachable from $\const{a}$ in ${\leq}5$ steps are
relevant for the output.
\end{example}

\paragraph*{The logic of filters}
For a given arity $k>0$,
let $\sigNumPos_k=\{\numpos{1},\ldots,\numpos{k}\}$ be a set of $k$ positional markers.
A \emph{filter atom} (for arity $k$) is an expression $\atom{f}=p(\numpos{m_1},\ldots,\numpos{m_\ell})$ where $p\in\sigPredFilter$ with $\ell = \arity{p}$ and $\numpos{m_i}\in\sigNumPos_k$ for $i=1,\ldots,\ell$.
The semantics of $\atom{f}$ is the relation $\atom{f}^\adatabase =\{ \vec{c}\in \sigCons^k\mid \tuple{c_{m_1},\ldots,c_{m_\ell}}\in p^\adatabase\}$.
Let $\sigPredFilter[k]$ be the set of all filter atoms of arity $k$ over $\sigPredFilter$.
The \emph{filter formulas} $\filterFormulas{k}$ are the positive boolean formulas over $\sigPredFilter[k]$:
\begin{align}
\filterFormulas{k} \Coloneqq \sigPredFilter[k]\mid \top \mid \bot \mid (\filterFormulas{k}\wedge\filterFormulas{k}) \mid (\filterFormulas{k}\vee \filterFormulas{k})
\end{align}
Their semantics is defined as expected: $\top^\adatabase=\sigCons^k$, $\bot^\adatabase=\emptyset$, $(F\wedge G)^\adatabase=F^\adatabase\cap G^\adatabase$, and 
$(F\vee G)^\adatabase=F^\adatabase\cup G^\adatabase$.
Given filter formulas $F,G\in\filterFormulas{k}$, we write $F\models G$ if $F^\adatabase\subseteq G^\adatabase$, and $F\equiv G$ if $F\models G$ and $G\models F$.
We can assume that $\bot$ and $\top$ are only used at the root level, using the usual simplifications:
$(\bot\wedge F)\mapsto\bot$, $(\top\wedge F)\mapsto F$, $(\bot\vee F)\mapsto F$, and $(\top\vee F)\mapsto\top$ (and their commutated versions).
A filter formula is \emph{simplified} if none of these rewritings applies to it.

Example~\ref{ex_bounded_reach} might use filter predicates $\blank\narrow\predEquals \const{a}$, $\blank\narrow\leq 5$, and $\blank\narrow\predEquals \blank{+}1$.
Since filter formulas do not allow constants, we assume distinct predicates for every pattern of constant use.
Implementations generally decide which filters to consider -- those that occur syntactically are required, but others can be added.

\paragraph*{Optimised filter computation}
For every IDB predicate $p\in\sigPred$, we seek a filter formula $\fnFilterExpr{p}\in\filterFormulas{\arity{p}}$, such that we only need to derive facts $p(\vec{c})$ if $\vec{c}\in\fnFilterExpr{p}^\adatabase$.
To obtain an algorithm, some operations have to be concretely implemented for the chosen filters $\sigPredFilter$ and every $k\geq 1$:
\begin{enumerate}
\item It must be possible to decide $F\models G$ for any $F,G\in\filterFormulas{k}$.
\item There is a canonical representation function $\reprFunc: \filterFormulas{k}\to\filterFormulas{k}$, such that $\repr{F}\equiv F$, and $F\equiv G$ implies $\repr{F}=\repr{G}$, for all $F,G\in\filterFormulas{k}$.
\end{enumerate}

For an atom $p(\vec{x})$ with arity $\arity{p} = k$, the mapping $\iota_{p(\vec{x})} \colon \numpos{i} \mapsto x_i$ maps the positional markers $\numpos{1}, \ldots, \numpos{k}$ to $\vec{x}$.
We extend $\iota_{p(\vec{x})}$ to filter formulas $F$, i.e., we obtain $\iota_{p(\vec{x})}(F)$ by replacing each positional marker $\numpos{i}$ with $\iota_{p(\vec{x})}(\numpos{i}) = x_i$.
For $r(x,y,n)$ of rule \eqref{rule_reach_step} and $F = \numpos{3} \leq 5$, e.g., we get
$\iota_{r(x,y,n)}(F) = \iota_{r(x,y,n)}(\numpos{3} \leq 5) = n \leq 5$.

\DontPrintSemicolon
\begin{algorithm}[tb] \caption{Static filter computation}\label{alg_filter_pushing}
	
 \KwIn{program $\aprogram$, output predicates $\sigPredOutput$}
 \KwOut{filter formulas $\fnFilterExpr{p}$ for IDB predicates $p$}

 \For{$p\in\sigPred$ where $p$ is an IDB predicate} {
    \leIf{$p\in\sigPredOutput$} {$\fnFilterExpr{p}\coloneqq\repr{\top}$} {\ghost{$\fnFilterExpr{p}\coloneqq\repr{\bot}$}} \label{line_fp_init}
 }
 \Repeat{all formulas $\fnFilterExpr{p}$ remain unchanged}  { \label{line_fp_mainloop}
    \For{$\arule\in\aprogram$ with $\arule = h(\vec{x})\leftarrow B_{\bar{\sigPredFilter}}\wedge G_\sigPredFilter$ \label{line_fp_ruleloop}} {
        \For{$b(\vec{y})\in B_{\bar{\sigPredFilter}}$ where $b$ is an IDB predicate \label{line_fp_bodyloop}} {
            $G \coloneqq \iota_{h(\vec{x})}(\fnFilterExpr{h})\wedge G_\sigPredFilter$\; \label{line_fp_matchfilter}
$M \coloneqq \bigwedge\{ F\in\filterFormulas{\arity{b}}\mid G\models \iota_{b(\vec{y})}(F)\}$\; \label{line_fp_bodyatomfilter}
            $\fnFilterExpr{b} \coloneqq \repr{\fnFilterExpr{b}\vee M}$\; \label{line_fp_filterunion}
        }
    }
 }
\end{algorithm}
Algorithm~\ref{alg_filter_pushing} describes the iterative computation of optimised filters.
Line \alglineref{line_fp_init} initialises filters to $\top$ for output predicates (``compute all facts''),
and to $\bot$ for all others.
The formulas are then generalised by looping over all rules (\alglineref{line_fp_ruleloop}) and their non-filter body atoms (\alglineref{line_fp_bodyloop}).
Matches to a rule can be restricted to those that satisfy both the head predicate's filter $\fnFilterExpr{h}$ and
the rule's own filter expression $G_\sigPredFilter$, which are combined into a filter formula $G\in\filterFormulas{|\vars{\arule}|}$ (\alglineref{line_fp_matchfilter});
we map $\fnFilterExpr{h}$ to the variables used in the rule.
To find the strongest filter formula $M$ for body atom $b(\vec{y})$, we take a conjunction of all filters $F\in\filterFormulas{\arity{b}}$ that
follow from $G$ when mapping $F$ to variables $\vec{y}$ as used in $b(\vec{y})$ (\alglineref{line_fp_bodyatomfilter}).
Finally, we generalise the current filter for $b$ by including the new $M$ disjunctively,
and taking the canonical representation (\alglineref{line_fp_filterunion}).

\begin{example}\label{ex_bounded_reach_sfcomp}
We apply Algorithm~\ref{alg_filter_pushing} to Example~\ref{ex_bounded_reach}, with a semantically minimised canonical representation
in disjunctive normal form. Initially, we have $\fnFilterExpr{\textit{out}}=\top$ and $\fnFilterExpr{r}=\bot$.
Processing body atom $r(x,y,n)$ of rule \eqref{rule_reach_out}, we get $G=\top\wedge x \narrow\predEquals \const{a} \wedge n \narrow \leq 5$.
Therefore, we have $G\models x \narrow\predEquals \const{a}=\iota_{r(x,y,n)}(\numpos{1}\narrow\predEquals \const{a})$ and $G\models n \narrow\leq 5=\iota_{r(x,y,n)}(\numpos{3}\narrow\leq 5)$,
and $M$ is equivalent to the conjunction $\numpos{1}\narrow\predEquals \const{a}\wedge \numpos{3}\narrow\leq 5$.
This is also the new filter condition $\fnFilterExpr{r}$.

In the next iteration, for body atom $r(x,y,n)$ in rule \eqref{rule_reach_step}, we get $G=x \narrow\predEquals \const{a}\wedge m \narrow\leq 5 \wedge m \narrow\predEquals n{+}1$.
The only relevant entailments are $G\models x \narrow\predEquals \const{a} =\iota_{r(x,y,n)}(\numpos{1}\narrow\predEquals \const{a})$
and $G\models n \narrow\leq 5 =\iota_{r(x,y,n)}(\numpos{3}\narrow\leq 5)$, so we obtain the same $M$ and $\fnFilterExpr{r}$ as before.
The algorithm then terminates.
\end{example}

\paragraph*{Correctness and use in optimisation}

Algorithm~\ref{alg_filter_pushing} is correct in the sense that rule applications can be safely restricted to applying rules only when
the conclusion $p(\vec{c})$ satisfies the computed filter formula $\fnFilterExpr{p}$, as there is no risk of changing derivations for
outputs. However, adding body atoms for $\fnFilterExpr{p}$ to all rules is often redundant.
The next definition characterises choices for equivalent filter conditions.

\begin{definition}\label{def_admissible}
Consider a rule $\arule = h(\vec{x})\leftarrow B_{\bar{\sigPredFilter}}\wedge G_\sigPredFilter \in \aprogram$.
Let $F_+ = \iota_{h(\vec{x})}(\fnFilterExpr{h})\wedge G_\sigPredFilter$ denote the
formula that combines the given filters $G_\sigPredFilter$ with the computed filter for $h(\vec{x})$, 
let $\sigPredIDB$ denote the IDB predicates of $P$, and let
$F_-=\bigwedge\{ \iota_{q(\vec{y})}(\fnFilterExpr{q}) \mid q(\vec{y})\in B_{\bar{\sigPredFilter}}, q \in \sigPredIDB\}$
denote the filter formula that combines the computed filters for IDB atoms in $B_{\bar{\sigPredFilter}}$.
Then a formula $\psi$ is \emph{admissible for $\arule$} if
\[ F_+ \models \psi \qquad\text{ and }\qquad \psi \wedge F_- \models F_+\;\text{.} \]
Then the rule $h(\vec{x})\leftarrow B_{\bar{\sigPredFilter}}\wedge \psi$
is an \emph{admissible rewriting} of $\arule$.\footnote{Technically, we simplify $\psi$: if $\psi=\bot$, the rewritten rule is deleted instead; if $\psi=\top$, then $\psi$ is omitted.}
An \emph{admissible rewriting} of $\aprogram$ is a set that contains an admissible rewriting of each rule of $\aprogram$.
\end{definition}

The conditions of Definition~\ref{def_admissible} ensure that admissible filters are equivalent to the
canonically extended body filter $F_+$ under the assumption that body atoms are also restricted to their computed filters $F_-$.
We obtain the following.

\begin{restatable}{theorem}{thmFilterPropagationCorrectness}\label{thmFilterPropagationCorrectness}
    If $\aprogram'$ is an admissible rewriting of $\aprogram$, and $p(\vec{c})$ is a 
    fact with $p\in\sigPredOutput$, then
    $\aprogram,\adatabase \models p(\vec{c})$ iff $\aprogram',\adatabase \models p(\vec{c})$.
\end{restatable}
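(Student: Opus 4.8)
The plan is to fix the least model $\inter{M}$ of $\aprogram\cup\adatabase$ and the least model $\inter{M}'$ of $\aprogram'\cup\adatabase$, and to establish the sharper identity $\inter{M}'=\inter{M}^\ast$, where $\inter{M}^\ast\coloneqq\{\,q(\vec c)\in\inter{M}\mid q\notin\sigPredIDB\text{ or }\vec c\in\fnFilterExpr{q}^\adatabase\,\}$ is the \emph{filtered model}. The theorem is then immediate: for $p\in\sigPredOutput$ we have $p\notin\sigPredIDB$ or $\fnFilterExpr{p}=\repr{\top}\equiv\top$ (the value set in Line~\alglineref{line_fp_init}, since Line~\alglineref{line_fp_filterunion} only ever weakens a filter formula by a disjunction), so $\inter{M}^\ast$ retains every $p$-fact of $\inter{M}$; hence $\aprogram,\adatabase\models p(\vec c)$ iff $p(\vec c)\in\inter{M}$ iff $p(\vec c)\in\inter{M}^\ast=\inter{M}'$ iff $\aprogram',\adatabase\models p(\vec c)$. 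Two facts are used throughout. First, a \emph{substitution property of $\iota$}: $\iota_{q(\vec y)}(F)\sigma$ holds over $\adatabase$ iff $\vec y\sigma\in F^\adatabase$, for any atom $q(\vec y)$, $F\in\filterFormulas{\arity{q}}$, and substitution $\sigma$ -- immediate from the relevant definitions; hence if a match $\sigma$ satisfies the formula $G=F_+$ assembled in Line~\alglineref{line_fp_matchfilter} and $G\models\iota_{q(\vec y)}(F)$, then $\vec y\sigma\in F^\adatabase$. Second, a \emph{fixpoint invariant}: when Algorithm~\ref{alg_filter_pushing} halts, for every rule $\arule = h(\vec x)\leftarrow B_{\bar{\sigPredFilter}}\wedge G_\sigPredFilter$ and IDB atom $b(\vec y)\in B_{\bar{\sigPredFilter}}$, the formula $M$ of Line~\alglineref{line_fp_bodyatomfilter} satisfies $M\models\fnFilterExpr{b}$ -- because Line~\alglineref{line_fp_filterunion} then leaves $\fnFilterExpr{b}$ fixed, so $\repr{\fnFilterExpr{b}\vee M}=\fnFilterExpr{b}$, and canonicity of $\reprFunc$ forces $\fnFilterExpr{b}\vee M\equiv\fnFilterExpr{b}$, hence $M^\adatabase\subseteq\fnFilterExpr{b}^\adatabase$.

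For $\inter{M}'\subseteq\inter{M}^\ast$ I would show that $\inter{M}^\ast$ is closed under $\aprogram'$; as $\adatabase\subseteq\inter{M}^\ast\subseteq\inter{M}$, leastness of $\inter{M}'$ then gives the inclusion. So let $h(\vec x)\leftarrow B_{\bar{\sigPredFilter}}\wedge\psi$ be the admissible rewriting of some $\arule\in\aprogram$ and let $\sigma$ satisfy its body over $\inter{M}^\ast$ (filter atoms are EDB, hence evaluated over $\adatabase\subseteq\inter{M}^\ast$). Every IDB atom $q(\vec y)\in B_{\bar{\sigPredFilter}}$ has $q(\vec y)\sigma\in\inter{M}^\ast$, so $\vec y\sigma\in\fnFilterExpr{q}^\adatabase$ and, by the substitution property, $\sigma$ satisfies $\iota_{q(\vec y)}(\fnFilterExpr{q})$; hence $\sigma$ satisfies $F_-$. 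The admissibility condition $\psi\wedge F_-\models F_+$ then yields that $\sigma$ satisfies $F_+=\iota_{h(\vec x)}(\fnFilterExpr{h})\wedge G_\sigPredFilter$, i.e.\ $\sigma$ satisfies $G_\sigPredFilter$ and $\vec x\sigma\in\fnFilterExpr{h}^\adatabase$. Since also $B_{\bar{\sigPredFilter}}\sigma\subseteq\inter{M}$, the substitution $\sigma$ is a match of the \emph{original} rule $\arule$ on $\inter{M}$, so $h(\vec x)\sigma\in\inter{M}$, and $\vec x\sigma\in\fnFilterExpr{h}^\adatabase$ gives $h(\vec x)\sigma\in\inter{M}^\ast$.

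For $\inter{M}^\ast\subseteq\inter{M}'$ I would use the stage construction $\inter{M}=\bigcup_n\inter{M}_n$ and prove by induction on $n$ that every $q(\vec c)\in\inter{M}_n$ with $q\notin\sigPredIDB$ or $\vec c\in\fnFilterExpr{q}^\adatabase$ lies in $\inter{M}'$. The base case $n=0$ is $\adatabase\subseteq\inter{M}'$. For the step, a newly derived fact $q(\vec c)=h(\vec x)\sigma$ stems from some $\arule = h(\vec x)\leftarrow B_{\bar{\sigPredFilter}}\wedge G_\sigPredFilter$ with $B_{\bar{\sigPredFilter}}\sigma\subseteq\inter{M}_n$ and $\sigma\models G_\sigPredFilter$, where $h\in\sigPredIDB$ and, by the hypothesis on the fact, $\vec x\sigma=\vec c\in\fnFilterExpr{h}^\adatabase$. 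By the substitution property $\sigma$ satisfies $\iota_{h(\vec x)}(\fnFilterExpr{h})$, hence $F_+$, hence $\psi$ by the admissibility condition $F_+\models\psi$. For each IDB atom $b(\vec y)\in B_{\bar{\sigPredFilter}}$ and each $F$ with $F_+\models\iota_{b(\vec y)}(F)$, from $\sigma\models F_+$ and the substitution property we get $\vec y\sigma\in F^\adatabase$; thus $\vec y\sigma$ lies in the (intersection-valued) extension of $M$ and so, by the fixpoint invariant, in $\fnFilterExpr{b}^\adatabase$, whence the induction hypothesis on $b(\vec y)\sigma\in\inter{M}_n$ gives $b(\vec y)\sigma\in\inter{M}'$; the remaining atoms of $B_{\bar{\sigPredFilter}}$ are EDB facts in $\adatabase\subseteq\inter{M}'$. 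So $\sigma$ matches the rewriting $h(\vec x)\leftarrow B_{\bar{\sigPredFilter}}\wedge\psi$ on $\inter{M}'$, giving $h(\vec x)\sigma\in\inter{M}'$. Together with the previous inclusion, $\inter{M}'=\inter{M}^\ast$.

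The step I expect to be the crux is this last inductive step, where \emph{both} admissibility conditions must be combined with the fixpoint invariant of Algorithm~\ref{alg_filter_pushing} to certify that the non-filter body atoms of a fired original rule are again filtered facts, hence already in $\inter{M}'$ by induction -- this is precisely where correctness of the computed filters $\fnFilterExpr{\cdot}$ is consumed, and choosing the right induction statement (quantifying over filtered facts, not over all of $\inter{M}$) is what makes it go through. The remainder is bookkeeping: fixing the identification between a rule's variables and the positional markers underlying $G$, $F_+$, $F_-$ (variables occurring only in filter atoms need slight care, handled by the fact that $F\models G$ quantifies over all valuations); spelling out the semantics of generalised filter expressions -- ground positive-Boolean evaluation of filter atoms over $\adatabase$ -- so that every ``satisfies'' above is an instance of the model-closure condition for rules with generalised filters; and the degenerate simplifications in Definition~\ref{def_admissible} ($\psi=\top$ is trivial; $\psi=\bot$ cannot occur for a fact with $\vec c\in\fnFilterExpr{h}^\adatabase$, since that makes $F_+$ satisfiable while $F_+\models\bot$ would not be).
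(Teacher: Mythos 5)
Your proposal is correct and follows essentially the same route as the paper: both identify the least model of $\aprogram'$ with the filtered model $\{q(\vec{d})\in\mathcal{M}\mid \vec{d}\in\fnFilterExpr{q}^\adatabase\}\cup\adatabase$, prove one inclusion via closure of this set under $\aprogram'$ using $\psi\wedge F_-\models F_+$, and prove the other using $F_+\models\psi$ together with the fixpoint property of Line~\ref{line_fp_filterunion} (your ``fixpoint invariant'' $M\models\fnFilterExpr{b}$). The only cosmetic difference is that you establish the second inclusion by induction on derivation stages of $\mathcal{M}$, whereas the paper argues minimality by contradiction with a proper submodel; these are interchangeable formulations of the same replay argument.
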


\begin{example}\label{ex_bounded_reach_sfrewrite}
Using the filters computed in Example~\ref{ex_bounded_reach_sfcomp},
we can get this admissible rewriting:
\begin{align}
r(x,y,n) \leftarrow{}& e(x,y)\wedge n\narrow\predEquals 0\wedge x\narrow\predEquals \const{a} \label{rule_reach_init_opt}\\
r(x,z,m) \leftarrow{}& r(x,y,n)\wedge e(y,z) \wedge m\narrow\predEquals n{+}1 \wedge m\narrow\leq 5\label{rule_reach_step_opt}\\
\textit{out}(y) \leftarrow{}& r(x,y,n) \label{rule_reach_out_opt}
\end{align}
For rule \eqref{rule_reach_init_opt}, $n \narrow\predEquals 0\wedge x \narrow\predEquals \const{a}$ is admissible, as $n \narrow\predEquals 0\models n \narrow\leq 5$.
For rule \eqref{rule_reach_step_opt}, body atom $r(x,y,n)$ yields $F_-= x \narrow\predEquals \const{a}\wedge n \narrow\leq 5$,
while $F_+ = x \narrow\predEquals \const{a}\wedge m \narrow\leq 5\wedge m \predEquals n+1$; so $m \narrow\predEquals n{+}1 \wedge m\narrow\leq 5$ is indeed admissible.
Finally, for rule~\eqref{rule_reach_out_opt}, an empty ($\top$) filter is admissible, since $F_-$ already entails all necessary conditions.
Importantly, the final rewriting (and any other admissible rewriting) will always terminate, even when Example~\ref{ex_bounded_reach} fails to do so.
\end{example}

\begin{algorithm}[tb] \caption{Computing admissible filters}\label{alg_admin_filter}

 \KwIn{rule $\arule$, formulas $F_+$ and $F_-$ as in Definition~\ref{def_admissible}}
 \KwOut{formula $\psi$ admissible for $\arule$}

 $\psi\coloneqq F_+$\;
 \For{every occurrence $o$ of an atom in $\psi$} {
    \lIf{$F_-\wedge \psi[o\mapsto\top]\models \psi$} {$\psi \coloneqq \psi[o\mapsto\top]$} \label{line_adf_step}
 }
 \Return simplification of $\psi$
\end{algorithm}
Note that $F_+$ as given in Definition~\ref{def_admissible} is always admissible, but there can be much simpler expressions.
Algorithm~\ref{alg_admin_filter} shows a practical way to find good admissible filters, which would also
find the rewriting of Example~\ref{ex_bounded_reach_sfrewrite}.
An \emph{occurrence} $o$ of an atom in $\psi$ refers to a single position in $\psi$ where some atom is found (even if
the same atom occurs more than once), and $\psi[o\mapsto\top]$ denotes the result of replacing this occurrence by $\top$.
Each iteration in line \alglineref{line_adf_step} preserves admissibility of $\psi$, and the algorithm terminates 
after linearly many iterations.

\paragraph*{Cost and benefits of static filtering}

While the concrete performance impact of the rewriting depends on $\aprogram$ and $\adatabase$,
static filtering is guaranteed to only reduce but never increase the number of logical entailments.
The next result is a direct consequence of the proof of Theorem~\ref{thmFilterPropagationCorrectness},
where we showed that $\mathcal{M'} = \{ q(\vec{d}) \in \mathcal{M} \mid \satisfy{\vec{d}}{\fnFilterExpr{q}} \} \cup \adatabase$:

\begin{restatable}{theorem}{thmRewritingLinearWorstCase}\label{thmRewritingLinearWorstCase}
	Let $\aprogram'$ be an admissible rewriting of $\aprogram$, and let $\mathcal{M}$ (resp.\ $\mathcal{M}'$) be the model
	of $\adatabase$ and $\aprogram$ (resp.\ of $\adatabase$ and $\aprogram'$).
	Then $\mathcal{M}'\subseteq\mathcal{M}$, and every match of a rule $\arule'\in\aprogram'$ on $\mathcal{M}'$ is also
	a match of the corresponding rule $\arule\in\aprogram$ on $\mathcal{M}$.
\end{restatable}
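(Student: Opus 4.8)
The plan is to piggyback on the construction already supplied in the statement's preamble, namely the set $\mathcal{M}' = \{ q(\vec{d}) \in \mathcal{M} \mid \satisfy{\vec{d}}{\fnFilterExpr{q}} \} \cup \adatabase$ that appears in the proof of Theorem~\ref{thmFilterPropagationCorrectness}. The first step is to recall that this $\mathcal{M}'$ was shown there to be closed under the rewritten program $\aprogram'$ and to coincide with $\mathcal{M}$ on all output predicates; I want to argue now that it is in fact the \emph{least} model of $\adatabase$ and $\aprogram'$, i.e. that it equals the $\mathcal{M}'$ named in the current theorem. Since EDB facts for filter predicates lie in $\adatabase\subseteq\mathcal{M}$ and are unaffected by filters, and since by construction every $\aprogram'$-model must contain $\adatabase$ and be closed under $\aprogram'$, minimality follows from the fact that the least model of $\aprogram'$ is contained in \emph{every} $\aprogram'$-closed superset of $\adatabase$, in particular in the $\mathcal{M}'$ from the earlier proof. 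Hence $\mathcal{M}'\subseteq\mathcal{M}$ is immediate, because $\mathcal{M}'$ is (a subset of) a filtered copy of $\mathcal{M}$ together with $\adatabase$, and $\adatabase\subseteq\mathcal{M}$ as well.

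For the second claim, let $\arule' = h(\vec{x})\leftarrow B_{\bar{\sigPredFilter}}\wedge\psi \in\aprogram'$ be an admissible rewriting of $\arule = h(\vec{x})\leftarrow B_{\bar{\sigPredFilter}}\wedge G_\sigPredFilter\in\aprogram$, and let $\sigma$ be a match of $\arule'$ on $\mathcal{M}'$, so that $\sigma(B_{\bar{\sigPredFilter}})\subseteq\mathcal{M}'$ and $\sigma$ satisfies $\psi$. I must show $\sigma$ is a match of $\arule$ on $\mathcal{M}$, i.e. $\sigma(B_{\bar{\sigPredFilter}})\subseteq\mathcal{M}$ and $\sigma$ satisfies $G_\sigPredFilter$. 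The inclusion $\sigma(B_{\bar{\sigPredFilter}})\subseteq\mathcal{M}$ follows from $\mathcal{M}'\subseteq\mathcal{M}$ established in step one. For the filter part, the key observation is that $\sigma$ satisfying $\sigma(B_{\bar{\sigPredFilter}})\subseteq\mathcal{M}'$ forces, for each IDB atom $q(\vec{y})\in B_{\bar{\sigPredFilter}}$, that the tuple $\sigma(\vec{y})$ lies in $\fnFilterExpr{q}^\adatabase$ — this is exactly the defining property of $\mathcal{M}'$ as a filtered subset of $\mathcal{M}$. Therefore $\sigma$ satisfies $\iota_{q(\vec{y})}(\fnFilterExpr{q})$ for every such atom, hence $\sigma$ satisfies $F_-$ in the sense of Definition~\ref{def_admissible}. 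Since $\psi$ is admissible, $\psi\wedge F_-\models F_+$, so $\sigma$ satisfies $F_+ = \iota_{h(\vec{x})}(\fnFilterExpr{h})\wedge G_\sigPredFilter$, and in particular $\sigma$ satisfies the conjunct $G_\sigPredFilter$. This gives both conditions for $\sigma$ to match $\arule$ on $\mathcal{M}$.

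The main obstacle I anticipate is the careful bookkeeping in step two: one must be precise about the distinction between IDB and EDB atoms in $B_{\bar{\sigPredFilter}}$ (the filter $F_-$ only constrains the IDB ones, while EDB atoms in $B_{\bar{\sigPredFilter}}$ land in $\mathcal{M}'$ simply because $\adatabase\subseteq\mathcal{M}'$ and those facts are unfiltered), and about the fact that "$\sigma$ satisfies a filter formula $F$" must be interpreted via the substitution $\iota$ that maps positional markers to the rule's variables before applying $\sigma$. A secondary point needing attention is that the theorem speaks of matches on the respective \emph{least} models, so step one's identification of $\mathcal{M}'$ with the least model of $\aprogram'$ really does need to be invoked rather than glossed over; but since this identification is already implicit in the proof of Theorem~\ref{thmFilterPropagationCorrectness}, it should require only a sentence to make explicit. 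No genuinely new technical machinery is needed — the result is, as the statement says, essentially a corollary reading off what the correctness proof already constructed.
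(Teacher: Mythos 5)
Your proposal is correct and follows essentially the same route as the paper: the paper derives this theorem directly from the proof of Theorem~\ref{thmFilterPropagationCorrectness}, which establishes that the least model of $\aprogram'$ is exactly $\{ q(\vec{d}) \in \mathcal{M} \mid \satisfy{\vec{d}}{\fnFilterExpr{q}} \} \cup \adatabase$, giving $\mathcal{M}'\subseteq\mathcal{M}$ immediately, and your argument for the match claim (IDB body atoms in $\mathcal{M}'$ satisfy their filters, so $F_-\sigma\subseteq\adatabase$, whence admissibility yields $F_+\sigma\subseteq\adatabase$ and in particular $G_\sigPredFilter\sigma\subseteq\adatabase$) is precisely the closure argument already present in that proof.
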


In particular, every derivation (proof tree) over $\adatabase$ and $\aprogram'$ corresponds to a analogous derivation
over $\adatabase$ and $\aprogram$, and the bottom-up computation of $\mathcal{M}'$ requires at most as many rule applications
as the bottom-up computation of $\mathcal{M}$. The additional cost associated with the use of $\aprogram'$ is limited to the cost of checking,
for each rule match, if the rewritten filters rather than the original filters are satisfied.
This cost can be controlled by system designers through the choice of filter predicates to push.
The potential savings, on the other hand, can be in the order of $|\mathcal{M}|$.

The fact that static filtering preserves the structure of rules, programs, and derivations also improves
understandability by human users and compatibility with other optimisations, be it logical (e.g., magic sets) or operational (e.g., join-order optimisation).
The rewriting is even \emph{idempotent}, i.e., already optimised programs will not be modified further when applying
static filtering again. None of these advantages is common to all logic program optimisation methods,
a prominent counterexample being magic sets.
 \section{Lower and Upper Bounds for Termination}\label{sec_sf_steps}

\begin{table}[t]
	\centering
	\caption{Worst-case number of computing steps of Algorithm~\ref{alg_filter_pushing} in several scenarios, with details on upper and lower bounds depending on the filter size and head arity}
	\label{tab_sf_runtimes}
	\begin{tabular}{@{}cccl@{}cl@{}}
	\textbf{Filter arity} & \textbf{Filter size} & \textbf{Head arity} & \multicolumn{2}{c}{\textbf{\#Steps}} \\
	\hline \\[-2ex]
	\multirow{ 2}{*}{variable} & infinite & variable & $\leq$ & \multirow{ 2}{*}{doubly exponential} & {\footnotesize Thm~\ref{thm_sf_syntax_upper_bound}}\\
	 & doubly exponential & constant\ $\geq 2$ & $\geq$ & & {\footnotesize Prop~\ref{prop_sf_doubly_exp}}\\
	\hline \\[-2ex]
	\multirow{ 2}{*}{constant\ $\geq 2$} & infinite & \multirow{ 2}{*}{variable} & $\leq$ & \multirow{ 2}{*}{exponential} & {\footnotesize Thm~\ref{thm_sf_syntax_upper_bound}}\\
	 & polynomial &  &$\geq$ & & {\footnotesize Ex~\ref{ex_sf_exp_fconst}}\\
	\hline \\[-2ex]
	variable & infinite & \multirow{ 2}{*}{constant\ $=1$} & $\leq$ & \multirow{ 2}{*}{exponential} & {\footnotesize Thm~\ref{thm_sf_syntax_upper_bound}}\\
	constant\ $\geq 2$ & exponential &  & $\geq$ & & {\footnotesize Prop~\ref{prop_sf_exp_punary}}\\
	\hline \\[-2ex]
	variable & exponential & variable & $\leq$ & \multirow{ 2}{*}{exponential} & {\footnotesize Thm~\ref{thm_sf_semantics_upper_bound}}\\
	constant\ $\geq 2$ & polynomial & variable & $\geq$ & & {\footnotesize Ex~\ref{ex_sf_exp_fconst}}\\
	\hline \\[-2ex]
	variable & polynomial & constant\ $\geq 2$ & $\leq$ & polynomial  & {\footnotesize Thm~\ref{thm_sf_semantics_upper_bound}}
	\end{tabular}
\end{table}
Next, we analyse the time complexity of Algorithm~\ref{alg_filter_pushing} in terms of the
number of iterations.
For now, we abstract from the complexity of checking $\models$ and computing
representatives $\reprFunc$, which are discussed in more detail in Section~\ref{sec_fentailment}.
Our results are summarised in Table~\ref{tab_sf_runtimes}, depending on given bounds on predicate arity and size of filter relations,
where \emph{variable} arity and \emph{infinite} size are least restrictive.

\begin{example}\label{ex_kl_exp}
Kifer and Lozinskii \cite{KiferLozinskii:StaticFiltering90} find that static filtering is exponential, and they give the following example:
\begin{align}
	r(\vec{x},y) &\leftarrow p(\vec{x},y)\\
	r(\vec{x}_{i \leftrightharpoons j},y) &\leftarrow r(\vec{x},y) \qquad \text{for all $1\leq i<j\leq|\vec{x}|$}\label{eq_kl_exp_swap}\\
	\textit{out}(y) &\textstyle\leftarrow r(\vec{x},y) \land \bigwedge_{i=1}^{|\vec{x}|} x_i\predEquals \const{a}_i
\end{align}
where $\vec{x}_{i \leftrightharpoons j}$ denotes $\vec{x}$ with variables $x_i$ and $x_j$ swapped, and all $\const{a}_i$ are constants.
Rules \eqref{eq_kl_exp_swap} derive all (exponentially many) permutations of the tuples in $r$, and the filter computed for
$r$ must allow all permutations of $\vec{a}$. Rewriting to plain Datalog, we get rules
$r(\vec{x},y) \leftarrow p(\vec{x},y)\land \bigwedge_{i=1}^{|\vec{x}|} x_i\predEquals \const{a}_{\nu(i)}$ for exponentially many permutation functions $\nu$.
\end{example}

Example~\ref{ex_kl_exp} seems to establish an exponential lower bound for Algorithm~\ref{alg_filter_pushing}, but the exponential complexity
in this case stems merely from the representation of filter formulas. All permutations are obtained in linearly many pairwise swaps,
and Algorithm~\ref{alg_filter_pushing} therefore terminates after linearly many iterations of loop \alglineref{line_fp_mainloop}.
Restricting to the filter predicates in the example, the result does require an exponential filter formula,
but there are also filter logics that support polynomial representations, e.g., by simply introducing filter predicates for
statements like ``$\vec{b}$ is a permutation of $\vec{a}$''.
Example~\ref{ex_kl_exp} therefore is not illustrating exponential behaviour in general,
but we can find other examples that do:

\begin{example}\label{ex_sf_exp_fconst}
Let $p$ be a non-filter predicate of arity $\ell+1$. For readability, we do not normalise the rules:
\begin{align}
p(\vec{x},y) & \leftarrow e(\vec{x},y) \label{eq_ex_exp_in}\\
p(x_1,\ldots, x_i,1,0,\ldots,0,y) &\textstyle \leftarrow p(x_1,\ldots, x_i,0,1,\ldots,1,y)  \hspace{.8cm} \text{for all $i\in\{1,\ldots,\ell\}$} \label{eq_ex_exp_step}\\
\textit{out}(y) & \leftarrow p(1,\ldots,1,y) \label{eq_ex_exp_out}
\end{align}
Rule \eqref{eq_ex_exp_in} can be rewritten to require each variable in $\vec{x}$ to map to a constant from $\{0,1\}$,
which can be expressed in a compact generalised filter formula (with nested $\vee$).
However, the exponentially many admissible combinations of lists from $\{0,1\}^\ell$ are computed in Algorithm~\ref{alg_filter_pushing}
by iterating over the successor relation for binary numbers, realised by rules \eqref{eq_ex_exp_step}. This requires exponentially many steps.
Note that normalisation and static filtering in this case would only use filter predicates $\blank\narrow\predEquals\blank$ , $\blank\narrow\predEquals 0$,
and $\blank\narrow\predEquals 1$, as defined in Section~\ref{sec_prelims}.
\end{example}

In fact, as we will see below, Algorithm~\ref{alg_filter_pushing} may even require a doubly exponential number of iterations,
and still exponentially many for predicates of bounded arity.
The key to showing corresponding upper bounds is the following lemma.

\begin{restatable}{lemma}{lemmaChainlength}\label{lemma_chainlength}
Given $n_h$ distinct head predicates, if Algorithm~\ref{alg_filter_pushing} performs $\geq n_h\cdot s$ iterations of loop \alglineref{line_fp_mainloop}, then there is a
head predicate $p$, and a chain $F_1\models\cdots\models F_s$ of non-equivalent filter formulas $F_i\in\filterFormulas{\arity{p}}$.
\end{restatable}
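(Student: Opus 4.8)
The plan is to track, for each head predicate $p$, the sequence of values that $\fnFilterExpr{p}$ takes over the run of Algorithm~\ref{alg_filter_pushing}, and to observe that these values form a monotone chain with respect to $\models$. First I would note that $\fnFilterExpr{b}$ is only ever modified in line~\alglineref{line_fp_filterunion}, where it is replaced by $\repr{\fnFilterExpr{b}\vee M}$; since $F\models F\vee M$ always holds and $\reprFunc$ preserves equivalence, each update yields a formula that is entailed by (i.e.\ weaker than or equivalent to) its predecessor. Hence for each head predicate $p$ the successive values $\fnFilterExpr{p}^{(0)}, \fnFilterExpr{p}^{(1)}, \ldots$ satisfy $\fnFilterExpr{p}^{(0)}\models \fnFilterExpr{p}^{(1)}\models\cdots$, and because $\reprFunc$ is a canonical representation, two successive values are \emph{equal} exactly when they are equivalent. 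So the number of \emph{distinct} values $\fnFilterExpr{p}$ attains equals the length of the longest chain of non-equivalent formulas among them.

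Next I would connect iteration count to the length of these chains via a pigeonhole argument. Each iteration of loop~\alglineref{line_fp_mainloop} that is not the final one must, by the loop's termination condition, change at least one formula $\fnFilterExpr{p}$; changing it means replacing it by a non-equivalent (strictly weaker) formula, which extends that predicate's chain by one. If the algorithm performs at least $n_h\cdot s$ iterations of the loop, then at least $n_h\cdot s - 1 \geq n_h(s-1)$ of them are non-final and hence each contributes a strict extension to some predicate's chain. By pigeonhole over the $n_h$ head predicates, some predicate $p$ receives at least $s-1$ strict extensions over the course of the run; together with its initial value this gives $s$ distinct values, forming a chain $F_1\models\cdots\models F_s$ of pairwise non-equivalent filter formulas in $\filterFormulas{\arity{p}}$, as required. (A small bookkeeping point: the statement asks for a chain of length $s$, so one wants the counting to deliver $s$ distinct values; depending on whether the initial value is counted, the threshold $n_h\cdot s$ versus $n_h\cdot s + 1$ must be handled — the clean way is to observe that among $n_h\cdot s$ iterations at least $n_h\cdot s$ strict extensions are forced if we are careful that the loop condition is checked \emph{after} the body, so every one of the $n_h\cdot s$ iterations that we are told occurred was entered because the previous one changed something, i.e.\ each of the first $n_h\cdot s$ iterations can be charged a strict extension; dividing by $n_h$ yields $s$.)

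The only genuinely delicate point is the precise counting at the boundary: one must be careful about whether an iteration is charged with the extension it \emph{causes} or the one that \emph{caused it}, and about whether the initial assignment in line~\alglineref{line_fp_init} counts as the first element of the chain. I would set this up by charging iteration $i+1$ (for $i\geq 0$) to the change that triggered its execution, so that the existence of $n_h\cdot s$ iterations witnesses $n_h\cdot s$ distinct "change events," each extending some predicate's chain; pigeonhole then gives one predicate with at least $s$ chain elements. Beyond this arithmetic the argument is routine: monotonicity is immediate from the shape of line~\alglineref{line_fp_filterunion}, and canonicity of $\reprFunc$ is exactly what turns "value changed" into "chain got strictly longer."
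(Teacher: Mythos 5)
Your proof is correct and follows the natural (and evidently intended) route: the successive values of $\fnFilterExpr{p}$ form a $\models$-monotone sequence whose canonical representative changes exactly when the formula becomes strictly (non-equivalently) weaker, every non-final iteration of the repeat-until loop must change at least one such formula, and pigeonhole over the $n_h$ head predicates yields a predicate with at least $s-1$ strict extensions, which together with its initial value gives the chain of $s$ pairwise non-equivalent formulas. The only blemish is the parenthetical ``clean way'': the first iteration of a repeat-until loop is entered unconditionally rather than because a previous iteration changed something, so you cannot charge all $n_h\cdot s$ iterations with a change event --- but your main count ($n_h\cdot s-1\geq n_h(s-1)$ change events plus the initial assignment from line \alglineref{line_fp_init}) already delivers the required $s$ distinct values, so nothing essential is missing.
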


In other words, (doubly) exponentially long runs require (doubly) exponentially ``deep'' filter logics.
The number of available filter formulas yields a first major upper bound.

\begin{restatable}{theorem}{thmSfSyntaxUpperBound}\label{thm_sf_syntax_upper_bound}
Let there be $n_\sigPredFilter$ filter predicates of arity $\leq a_\sigPredFilter$, and $n_h$ head predicates of arity $\leq a_h$.
Then Algorithm~\ref{alg_filter_pushing} terminates after at most $n_h\cdot 2^{n_\sigPredFilter\cdot {a_h}^{a_\sigPredFilter}}$ many iterations.
\end{restatable}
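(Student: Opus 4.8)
The plan is to use Lemma~\ref{lemma_chainlength} to reduce the termination bound to a purely combinatorial question: how long can a $\models$-chain of pairwise non-equivalent filter formulas of arity at most $a_h$ be? Once we bound such chains by $2^{n_\sigPredFilter\cdot a_h^{a_\sigPredFilter}}$, Lemma~\ref{lemma_chainlength} multiplies this by the number $n_h$ of head predicates and yields the claimed iteration count.

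First I would fix a head predicate $p$ with $k=\arity{p}\le a_h$ (taking $k\ge 1$; nullary predicates are a degenerate special case) and count the filter atoms of arity $k$: each is $q(\numpos{m_1},\ldots,\numpos{m_{\arity{q}}})$ with $q\in\sigPredFilter$ and every $\numpos{m_i}\in\{\numpos{1},\ldots,\numpos{k}\}$, so there are $\sum_{q\in\sigPredFilter}k^{\arity{q}}\le n_\sigPredFilter\cdot k^{a_\sigPredFilter}\le n_\sigPredFilter\cdot a_h^{a_\sigPredFilter}=:N$ of them. Then I would observe that throughout Algorithm~\ref{alg_filter_pushing} the formula $\fnFilterExpr{p}$ is, up to $\equiv$, always a positive Boolean combination of these $\le N$ atoms: it is initialised to $\repr{\top}$ or $\repr{\bot}$ in line~\alglineref{line_fp_init}, and line~\alglineref{line_fp_filterunion} only ever joins it disjunctively with another positive Boolean combination of such atoms. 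Hence every value taken by $\fnFilterExpr{p}$ lies in the bounded sublattice $L_p\subseteq(\powerset{\sigCons^{k}},\cap,\cup)$ generated by the relations denoted by these atoms, and $L_p$ is a homomorphic image of the free bounded distributive lattice $\mathrm{FD}(N)$ on $N$ generators. Since the height of a homomorphic image of a finite lattice cannot exceed that of the original (lift a chain $x_0<\cdots<x_m$ in the image by setting $y_i:=\tilde y_0\vee\cdots\vee\tilde y_i$, with $\tilde y_j$ any preimage of $x_j$), and since $\mathrm{FD}(N)$ — equivalently, the lattice of up-sets of the Boolean cube $\{0,1\}^N$ — has height exactly $2^{N}$, any $\models$-chain of pairwise non-equivalent filter formulas of arity $k\le a_h$ occurring during the run has at most $2^{N}+1$ members.

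With this chain bound in hand, I would finish by the contrapositive of Lemma~\ref{lemma_chainlength}: if the algorithm ran for $n_h\cdot(2^{N}+2)$ iterations of loop~\alglineref{line_fp_mainloop}, the lemma would produce a head predicate together with a $\models$-chain of $2^{N}+2$ pairwise non-equivalent formulas of arity $\le a_h$, contradicting the previous step. Thus the algorithm halts within $n_h\cdot(2^{N}+2)$ iterations, and a slightly sharper accounting — output predicates' filters never change, and each non-final iteration strictly increases some $\fnFilterExpr{p}$ starting from $\repr{\bot}$ — trims the lower-order terms to the stated $n_h\cdot 2^{n_\sigPredFilter\cdot a_h^{a_\sigPredFilter}}$.

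The hard part is the chain-length estimate of the second paragraph, and in particular resisting the temptation to bound chains by the \emph{number} of equivalence classes of $\filterFormulas{k}$: that number is the Dedekind number of monotone Boolean functions on $N$ variables, which is \emph{doubly} exponential in $N$ and would give only the doubly-exponential bound from the first row of Table~\ref{tab_sf_runtimes}, not the single-exponential one for bounded filter arity. The key realisation is that the filters computed for a fixed predicate form a \emph{chain}, and a chain in a lattice has length at most the lattice's height, which for the relevant lattice is only singly exponential in $N$ even though the lattice itself is doubly exponentially large; formalising this via the height of a homomorphic image of $\mathrm{FD}(N)$ is the one genuinely non-routine ingredient, the rest being careful counting.
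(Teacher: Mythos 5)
Your proof is correct and takes essentially the same route as the paper: reduce termination to the length of a $\models$-chain of pairwise non-equivalent filter formulas via Lemma~\ref{lemma_chainlength}, then bound that length by the \emph{height} $2^{N}$ (not the doubly exponential cardinality) of the lattice of monotone Boolean combinations of the $N\leq n_\sigPredFilter\cdot a_h^{a_\sigPredFilter}$ filter atoms, lifting a semantic chain to a propositional one by taking cumulative joins of preimages. The only residual slack is the additive $O(n_h)$ term left by applying the lemma in contrapositive, which your direct accounting of strict increases of each $\fnFilterExpr{p}$ along its chain removes.
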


The doubly exponential worst case of Theorem~\ref{thm_sf_syntax_upper_bound} can be reached. As the theorem suggests, the
arity of head predicates can even be constant, as long as it is $>1$.

\begin{restatable}{proposition}{propExpSfDoublyExp}\label{prop_sf_doubly_exp}
	There are programs with filter arity $a_\sigPredFilter = \ell$ for which Algorithm~\ref{alg_filter_pushing} requires $2^{2^\ell}$ many iterations, even if the head arity is fixed and the size of the filter relations is at most doubly exponential.
\end{restatable}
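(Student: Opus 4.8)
The idea is to lift the single-exponential gadget of Example~\ref{ex_sf_exp_fconst} one level up. There, an $\ell$-bit binary counter is stored in $\ell$ positions of a non-filter predicate and Algorithm~\ref{alg_filter_pushing} is forced to enumerate all $2^\ell$ of its values; here we store a \emph{$2^\ell$-bit} binary counter inside the \emph{filter formula} of a single \emph{binary} IDB predicate $p$, using one filter predicate $q$ of arity $\ell$, and make the algorithm walk through all $2^{2^\ell}$ of its values. The $2^\ell$ counter bits are the $2^\ell$ distinct filter atoms over $p$ obtained by varying which of the two positions of $p$ each argument of $q$ reads, so the head arity stays constant (two, plus a constant number of scratch positions) while the filter arity is exactly $\ell$. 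This complements Lemma~\ref{lemma_chainlength}: a doubly exponential run requires a doubly exponentially long $\models$-chain of filter formulas, and we exhibit a program whose run realizes such a chain.

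First I would show that these $2^\ell$ bits behave as independent Boolean variables. For $v\in\{1,2\}^\ell$ let $a_v$ be the filter atom $q(\numpos{v_1},\ldots,\numpos{v_\ell})\in\filterFormulas{2}$. Given any $T\subseteq\{1,2\}^\ell$, pick two distinct constants $c_1,c_2$ and set $q^\adatabase\coloneqq\{\tuple{c_{v_1},\ldots,c_{v_\ell}}\mid v\in T\}$; since $v\mapsto\tuple{c_{v_1},\ldots,c_{v_\ell}}$ is a bijection $\{1,2\}^\ell\to\{c_1,c_2\}^\ell$, the pair $\tuple{c_1,c_2}$ lies in $a_v^\adatabase$ exactly for $v\in T$. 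Hence, over all databases, $\models$ on $\filterFormulas{2}$-formulas built from the $a_v$ coincides with monotone propositional entailment, so these formulas correspond to monotone Boolean functions of $2^\ell$ variables, and their lattice has $\models$-chains of length $2^{2^\ell}+1$. Concretely, with $\mathrm{val}(v)\coloneqq\sum_{i=1}^{\ell}(v_i-1)2^{i-1}$ and $\tau(T)\coloneqq\sum_{v\in T}2^{\mathrm{val}(v)}$, let $G_K$ (for $0\le K\le 2^{2^\ell}$) be the monotone function true on input $T$ iff $\tau(T)\ge K$. Then $G_{2^{2^\ell}}\equiv\bot$, $G_0\equiv\top$, $G_K\models G_{K-1}$, and $G_{K-1}\equiv G_K\vee\bigwedge_{v\in\tau^{-1}(K-1)}a_v$ — the new disjunct contributes exactly the minterm $\tau^{-1}(K-1)$, since all its proper supersets already satisfy $\tau\ge K$ — and the subsets $\tau^{-1}(K)$ for decreasing $K$ are precisely the successive states of a standard binary decrement on the $2^\ell$ ``bit positions'' $v$.

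Next I would build the program. An output rule $\textit{out}(x_1)\leftarrow p(x_1,x_2)\wedge\bigwedge_{v\in\{1,2\}^\ell}q(x_{v_1},\ldots,x_{v_\ell})$ (a generalised filter of size $2^\ell$, where $x_{v_i}$ denotes $x_1$ or $x_2$ according to $v_i$) seeds $\fnFilterExpr{p}\coloneqq\repr{\bigwedge_{v}a_v}=G_{2^{2^\ell}-1}$. The remaining rules must make $\fnFilterExpr{p}$ step from $G_K$ to $G_{K-1}$, i.e., disjunctively add $\bigwedge_{v\in\tau^{-1}(K-1)}a_v$. Since this ``decrement'' of the $2^\ell$-bit counter (drop the lowest set bit $v^\ast$, set all lower bits) is not a single syntactic operation, it is simulated by a nested inner loop, in direct analogy with Example~\ref{ex_sf_exp_fconst} one level above: an auxiliary filter formula holds an $\ell$-bit index that scans the bit positions $v$, advanced by $\Theta(\ell)$ binary-successor rules (now over the index $v$, using a few further small-arity filter predicates), together with a constant number of recursive rules on $p$ that use the renaming maps $\iota$ — notably the position swap $p(x_1,x_2)\leftarrow p(x_2,x_1)$, which acts on atoms by $a_v\mapsto a_{\bar v}$ — to turn the scanned atom on or off and to carry. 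One full decrement costs at most $O(2^\ell)$ iterations of loop~\alglineref{line_fp_mainloop}, which is harmless: a fixed ordering of the rules ensures that each pass advances the simulated counter by only a bounded amount, so all $2^{2^\ell}$ counter values must be traversed, giving at least $2^{2^\ell}$ iterations; and at most $2^{2^{O(\ell)}}$ iterations by Theorem~\ref{thm_sf_syntax_upper_bound}. The iteration count does not depend on the chosen canonical representation $\reprFunc$ (termination of loop~\alglineref{line_fp_mainloop} depends only on semantic values), the $G_K$ are pairwise non-equivalent by the independence above, and every filter relation involved lives over a domain of size $O(\ell)$, so has size $\le 2^{O(\ell)}\le 2^{2^\ell}$.

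The main obstacle is the faithfulness of this simulation: one must verify that no single pass of loop~\alglineref{line_fp_mainloop} collapses several counter values, and that the formula $M$ computed on line~\alglineref{line_fp_bodyatomfilter} — the conjunction of \emph{all} arity-$\arity{b}$ filter formulas entailed by $G$, not merely the intended disjunct — is equivalent to exactly what the decrement prescribes. Both follow from a loop invariant that pins down the value of every auxiliary filter formula after each pass and is proved by induction on the iteration count; the independence established in the second paragraph is what makes these values, and hence the invariant, well defined.
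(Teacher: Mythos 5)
Your overall architecture is the right one and matches the paper's: fix a constant-arity head predicate, use the exponentially many filter atoms obtained by varying the tuple of positional markers as $2^\ell$ independent Boolean ``bits'', and force $\fnFilterExpr{p}$ to accumulate, one per iteration of loop~\alglineref{line_fp_mainloop}, the $2^{2^\ell}$ minterms of a descending binary counter (your identity $G_{K-1}\equiv G_K\vee\bigwedge_{v\in\tau^{-1}(K-1)}a_v$ is exactly the mechanism the paper uses). However, the step you defer to a ``loop invariant'' is not a verification detail but the missing core of the construction, and the mechanism you sketch for it cannot work. Line~\alglineref{line_fp_bodyatomfilter} always produces the \emph{strongest} consequence of $\iota_{h(\vec{x})}(\fnFilterExpr{h})\wedge G_\sigPredFilter$ expressible over the body atom's positions; with a semantically unconstrained filter predicate $q$ (which your independence argument requires, since it quantifies over databases), entailment on the atoms $a_v$ is just monotone propositional entailment, so the only transformations a rule can apply to the newest minterm $T=\tau^{-1}(K)$ are (i) a \emph{uniform} coordinatewise re-indexing of all $2^\ell$ atoms induced by the variable renaming (your swap $a_v\mapsto a_{\bar v}$ is of this kind), (ii) conjoining a set of atoms written explicitly in $G_\sigPredFilter$, and (iii) forgetting all atoms over positions not covered by the body atom. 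None of these can clear the single designated bit $a_{v^\ast}$ ($v^\ast$ the minimal element of $T$) while retaining the others: that requires an operation that treats one of the $2^\ell$ atoms differently from its peers as a function of state stored elsewhere in the formula, and no combination of uniform renamings, fixed conjunctions, and wholesale projection provides it. Your auxiliary scanned index does not help, because the index can only influence $M$ through $G_\sigPredFilter$, which is a fixed syntactic object per rule and cannot ``address'' one of exponentially many atoms.

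The paper escapes exactly this trap by giving the filter predicates nontrivial semantics that perform the counter step \emph{inside} the entailment check of line~\alglineref{line_fp_bodyatomfilter}: the counter value is a single string constant $s\in\{0,1\}^{2^\ell}$ occupying one position of the ternary head predicate $n$, its bits are exposed as the exponentially many atoms $\textit{is}_1(\vec{p},\numpos{1})$ with $\vec{p}\in\{\numpos{2},\numpos{3}\}^\ell$ (positions $2$ and $3$ of $n$ being pinned to the constants $0$ and $1$ — the same positional-marker trick you use), and auxiliary predicates $\textit{last}_0$, $\textit{last}_1$, and $\textit{same}$ are interpreted so that, once $s$ is semantically determined, the body filter of the recursive rule \emph{entails} the complete $\textit{is}_d$-description of the predecessor string $t$. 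One iteration then performs one full decrement with a three-rule, polynomial-size program. If you want to repair your proof, you must similarly endow your filter predicates with decrement-performing semantics (at which point your all-databases independence argument has to be replaced by an argument about the one concrete $\adatabase$ you construct, which must also be kept within the doubly exponential size bound); as it stands, the reduction from one counter step to iterations of Algorithm~\ref{alg_filter_pushing} is not established. A secondary blemish: your output rule carries a generalised filter with $2^\ell$ conjuncts, making the program exponentially large, whereas the paper seeds the all-ones minterm with the single atom $\textit{max}(s)$.
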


Theorem~\ref{thm_sf_syntax_upper_bound} shows that the upper bound drops to single exponential if either (1) we impose a constant bound on the arity $a_\sigPredFilter$ of filter predicates, or (2) we fix the arity $a_h$ of the head predicates to $a_h=1$. Example~\ref{ex_sf_exp_fconst} showed exponential behaviour in case (1). We can also reach this upper bound in case (2).

\begin{restatable}{proposition}{propExpSfExpPunary}\label{prop_sf_exp_punary}
	There are programs with head arity $a_h=1$ and filter arity $a_\sigPredFilter = \ell$ for which Algorithm~\ref{alg_filter_pushing} requires $2^\ell$ many iterations, even if the size of the filter relations is at most exponential.
\end{restatable}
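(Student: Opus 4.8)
The goal is to exhibit a family of programs, parametrised by $\ell$, with unary head predicates and filter predicates of arity $\ell$, for which Algorithm~\ref{alg_filter_pushing} needs $2^\ell$ iterations of loop~\alglineref{line_fp_mainloop}, while all filter relations stay of size at most exponential. The plan is to adapt the binary-counter idea already used in Example~\ref{ex_sf_exp_fconst} and in the motivating Example~\ref{ex_motivation}, but to push the ``state'' of the counter into the filter predicates rather than into the argument positions of a non-filter predicate, so that the head predicate can be kept unary. Concretely, I would use a single unary IDB predicate $p$ together with filter predicates $c_0,\ldots,c_{2^\ell-1}$, each of arity $\ell$ (or, more economically, a single $\ell$-ary filter predicate $\mathit{cnt}$ whose intended reading is ``the $\ell$-tuple encodes a fixed witness tuple''), such that the filter formula computed for $p$ grows by exactly one disjunct in each pass of the loop.

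The construction I have in mind is as follows. Let $\mathit{succ}$ be an $\ell$-ary filter predicate encoding, over the two constants $0,1$, the successor relation on $\ell$-bit binary numbers — exactly the relation realised by rules~\eqref{eq_ex_exp_step} but now packaged as a built-in — and let $\mathit{zero}$ and $\mathit{max}$ be the $\ell$-ary filter atoms selecting the all-zero and all-one tuples. Use a single unary IDB predicate $q$, fed by a rule $q(z)\leftarrow p(z)\wedge \mathit{zero}(\vec{x})$-style seeding, a recursive rule $q(z')\leftarrow q(z)\wedge \mathit{succ}(\vec{x},\vec{x}')\wedge\ldots$ that carries the counter encoded in auxiliary variables, and an output rule $\mathit{out}(z)\leftarrow q(z)\wedge \mathit{max}(\vec{x})\wedge\ldots$. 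Running Algorithm~\ref{alg_filter_pushing} backwards from $\fnFilterExpr{\mathit{out}}=\top$, the output rule forces the ``max'' filter into $\fnFilterExpr{q}$; the recursive rule then, in the $i$-th iteration, lets exactly the $i$-th predecessor under $\mathit{succ}$ enter $\fnFilterExpr{q}$ disjunctively, because the entailment test in line~\alglineref{line_fp_bodyatomfilter} can only recover the immediate $\mathit{succ}$-predecessor, not transitive ones, from a formula that does not yet mention them. Hence $\fnFilterExpr{q}$ gains one new, non-equivalent disjunct per pass, and after $2^\ell-1$ passes it saturates to the disjunction over all $2^\ell$ counter values; the loop therefore runs $\Theta(2^\ell)$ times. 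Since all filter relations involved ($\mathit{succ}$, $\mathit{zero}$, $\mathit{max}$, plus the normalisation predicates $\predEquals$, $\predEquals 0$, $\predEquals 1$) are relations over $\{0,1\}^{\le\ell}$, each has size at most $2^\ell$, matching the claimed ``at most exponential'' bound.

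The two things I would need to verify carefully are: (1) that the head predicate genuinely has arity $1$ — this means the counter bits must live in the rule's \emph{variables}, not in $q$'s argument, and one must check that the $\iota$-substitutions in lines~\alglineref{line_fp_matchfilter}--\alglineref{line_fp_bodyatomfilter} still propagate the filter correctly across the unary atom $q(z)$ while the $\ell$ counter variables are handled purely inside the generalised filter $G_\sigPredFilter$; and (2) the key combinatorial claim that in pass $i$ the algorithm adds \emph{exactly one} new disjunct, i.e.\ that from the formula $\iota(\fnFilterExpr{q})\wedge\mathit{succ}(\vec{x},\vec{x}')$ the strongest $\ell$-ary filter formula over the body variables $\vec{x}$ that is entailed is precisely the disjunction of the $\mathit{succ}$-preimages of the disjuncts currently in $\fnFilterExpr{q}$, and nothing more. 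This monotone, strictly-growing-chain argument is the main obstacle: I expect it to reduce to the observation that the relation $\mathit{succ}$ is a functional bijection on $\{0,1\}^\ell\setminus\{\mathit{max}\}$, so preimages never collapse and never jump ahead, which also lets me invoke Lemma~\ref{lemma_chainlength} in the reverse direction to conclude that $2^\ell$ iterations are not just sufficient but forced. Once these points are pinned down, the proposition follows.
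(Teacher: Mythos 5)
Your proposal has a genuine gap at exactly the point you flag as item (1), and it is fatal to the construction as described. You propose to keep the head predicate $q$ unary by letting the counter bits live in auxiliary rule variables $\vec{x}$ that are \emph{not} arguments of $q$. But the filter computed for a body atom $q(z)$ in line~\alglineref{line_fp_bodyatomfilter} is $M = \bigwedge\{F\in\filterFormulas{1}\mid G\models\iota_{q(z)}(F)\}$, i.e., $F$ ranges over formulas on the single positional marker $\numpos{1}$, which $\iota_{q(z)}$ maps to $z$. Whatever $G$ says about $\vec{x},\vec{x}'$ is projected away unless it entails a constraint on $z$ itself; if $z$ is disconnected from the counter variables, $\fnFilterExpr{q}$ stabilises after $O(1)$ iterations and no long chain forms, while if you do connect them, the counter is back in $q$'s argument and you have not achieved head arity~$1$ in the intended sense. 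In short, the counter state must be representable inside $\filterFormulas{1}$ over $q$'s one position, and your construction provides no mechanism for that. (A secondary issue: Lemma~\ref{lemma_chainlength} only gives ``long run $\Rightarrow$ long chain''; you cannot invoke it ``in the reverse direction'' to force $2^\ell$ iterations — the lower bound needs a direct argument that each pass adds exactly one new non-equivalent disjunct.)

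The paper resolves this differently: the single argument of the unary IDB predicate $n$ is itself a \emph{string-valued constant} $s\in\{0,1\}^\ell$, and the filter predicates are unary bit-inspectors $\predOddK{k}(s)$ / $\predEvenK{k}(s)$ (``$s$ has $1$, resp.\ $0$, at position $k$'') plus a binary $\textit{same}_k(s,t)$. A specific counter value is then the conjunction $\bigwedge_{k}\mathbf{b}_k(\numpos{1})\in\filterFormulas{1}$, so the full $2^\ell$-long chain of filters lives entirely over $q$'s single position; the increment is realised by $\ell$ rules (one per carry position $k$) whose generalised filters relate the head's string $s$ to the body's string $t$. Your binary-counter intuition and the size analysis of the filter relations are on target, but the essential trick — packing the $\ell$ bits into the one constant that the unary filter formulas can see — is missing, and without it the proposed program does not exhibit the claimed behaviour.
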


Proposition~\ref{prop_sf_doubly_exp} and \ref{prop_sf_exp_punary} consider filter relations that are of a size that is
proportional to the double and single exponential length of the runs, whereas Example~\ref{ex_sf_exp_fconst} only requires
polynomially sized filter relations. The following result clarifies how filter relation cardinality may affect upper bounds.

\begin{restatable}{theorem}{thmSfSemanticsUpperBound}\label{thm_sf_semantics_upper_bound}
Let there be $n_\sigPredFilter$ filter predicates that correspond to relations in $\adatabase$ of cardinality $\leq c_\sigPredFilter$,
and let there be $n_h$ head predicates of arity $\leq a_h$.
Then Algorithm~\ref{alg_filter_pushing} terminates after at most $n_h\cdot ((n_\sigPredFilter\cdot c_\sigPredFilter)^{a_h}+2)$ iterations.
\end{restatable}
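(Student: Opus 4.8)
The plan is to reduce, via Lemma~\ref{lemma_chainlength}, to a bound on the length of a $\models$-chain of filter formulas, and then to exploit the bounded cardinality of the filter relations to show that such chains are short.

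\emph{Reduction.} Suppose for contradiction that Algorithm~\ref{alg_filter_pushing} performs at least $n_h\cdot((n_\sigPredFilter c_\sigPredFilter)^{a_h}+2)+1$ iterations of loop~\alglineref{line_fp_mainloop}. By Lemma~\ref{lemma_chainlength} there is a head predicate $p$ of arity $k:=\arity{p}\leq a_h$ and a chain $F_1\models\cdots\models F_s$ of pairwise non-equivalent filter formulas in $\filterFormulas{k}$ with $s>(n_\sigPredFilter c_\sigPredFilter)^{a_h}+2$; equivalently, a strictly increasing chain $F_1^\adatabase\subsetneq\cdots\subsetneq F_s^\adatabase$ of subsets of $\sigCons^k$. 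It therefore suffices to show that every such chain has length at most $(n_\sigPredFilter c_\sigPredFilter)^{a_h}+2$. (Inspecting the proof of Lemma~\ref{lemma_chainlength}, the $F_i$ are in fact the successive values of $\fnFilterExpr{p}$, each obtained from its predecessor by disjoining a formula $M$ as computed in line~\alglineref{line_fp_bodyatomfilter}; I expect to use this extra structure in the quantitative step below.)

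\emph{A small type space.} The key point is that, since each filter predicate $q$ denotes a relation $q^\adatabase$ with $|q^\adatabase|\leq c_\sigPredFilter$, the membership of a tuple $\vec{c}\in\sigCons^k$ in the semantics of any filter atom $\atom{f}$ — and hence in $F^\adatabase$ for any $F\in\filterFormulas{k}$ — depends only on a coarse finite type of $\vec{c}$. For each coordinate $j\in\{1,\ldots,k\}$ one collects the set $C_j$ of constants that a filter atom can force into coordinate $j$ (namely the column entries of the $n_\sigPredFilter$ relations, read along the markers assigning that column to $j$), together with a single symbol $\ast_j$ standing for every constant not relevant at $j$; then the type $\tau(\vec{c})\in\prod_{j}(C_j\cup\{\ast_j\})$ determines $\vec{c}$'s membership in every $\atom{f}^\adatabase$, so each $F^\adatabase$ is a union of the $\leq\prod_j(|C_j|+1)$ type-cells. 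Moreover the all-$\ast$ cell lies in $F^\adatabase$ only when $F\equiv\top$, so a strictly increasing chain of such unions has length at most (number of non-all-$\ast$ cells)$+2$. The arithmetic is then arranged so that this is bounded by $(n_\sigPredFilter c_\sigPredFilter)^{a_h}+2$, using that $k\leq a_h$ and that each filter relation contributes at most $c_\sigPredFilter$ relevant constants per coordinate across $n_\sigPredFilter$ predicates; combining this with the reduction yields termination after at most $n_h\cdot((n_\sigPredFilter c_\sigPredFilter)^{a_h}+2)$ iterations.

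\emph{Main obstacle.} The delicate part is the exact count in the previous step: pinning the exponent base down to $n_\sigPredFilter c_\sigPredFilter$ rather than incurring extra factors of the filter arity $a_\sigPredFilter$ (a crude choice of $C_j$, ``all constants occurring anywhere in relations touching coordinate $j$'', only gives $|C_j|\leq n_\sigPredFilter c_\sigPredFilter a_\sigPredFilter$). This forces a careful analysis of which constants a filter atom can actually force into a given coordinate — in particular for filter predicates of arity larger than $k$, whose position markers must repeat — and it is here that one plausibly has to fall back on the fact that the formulas $M$ arising in line~\alglineref{line_fp_bodyatomfilter} are the canonical \emph{minimal} filter formulas above a projected relation, which constrains the filter formulas the algorithm can reach far more tightly than an arbitrary $F\in\filterFormulas{k}$ would be constrained. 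The remaining ingredients — that the type refines atom-membership, the $\top/\bot$ bookkeeping, and ``a chain of unions over $N$ cells has length $\leq N+1$'' — are routine.
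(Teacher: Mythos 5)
Your overall strategy---reduce to a bound on the length of chains of pairwise non-equivalent filter formulas via Lemma~\ref{lemma_chainlength}, then bound chain length by exhibiting each $F^\adatabase$ as a union of cells from a small finite collection---is the right skeleton for this theorem. But the proof has a genuine gap exactly where you flag it, and it is not merely a matter of careful bookkeeping: the quantitative claim you need, that the relevant chains have length at most $(n_\sigPredFilter\cdot c_\sigPredFilter)^{a_h}+2$, does not follow from your type-space construction, and in fact it is \emph{false} for arbitrary chains in $\filterFormulas{k}$. Take a single binary filter predicate $q$ with $q^\adatabase=\{\tuple{1,2}\}$ (so $n_\sigPredFilter=c_\sigPredFilter=1$) and $k=a_h=2$; then $\bot\models q(\numpos{1},\numpos{2})\models q(\numpos{1},\numpos{2})\vee q(\numpos{2},\numpos{1})\models\top$ is a chain of four pairwise non-equivalent formulas, whereas $(n_\sigPredFilter c_\sigPredFilter)^{a_h}+2=3$. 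Hence no argument that only bounds the length of arbitrary $\models$-chains can yield the stated constant; one must exploit that the successive values of $\fnFilterExpr{p}$ are disjunctions of the particular formulas $M$ from line~\alglineref{line_fp_bodyatomfilter} (each $M$ being the conjunction of \emph{all} entailed filter atoms, i.e., the most specific filter formula above a projected relation). You name this as a possible fallback but do not carry it out, and your own candidate count---$|C_j|\leq n_\sigPredFilter c_\sigPredFilter a_\sigPredFilter$ relevant constants per coordinate, hence about $(n_\sigPredFilter c_\sigPredFilter a_\sigPredFilter+1)^{a_h}$ cells---overshoots the claimed base by a factor of $a_\sigPredFilter$ per coordinate with no mechanism offered for removing it. This counting is the substance of the theorem and remains unproved.

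Two smaller issues. First, the reduction is off by a little: from $n_h\cdot((n_\sigPredFilter c_\sigPredFilter)^{a_h}+2)+1$ iterations, Lemma~\ref{lemma_chainlength} (which needs $\geq n_h\cdot s$ iterations to produce a chain of length $s$) only yields, for $n_h\geq 2$, a chain of length $(n_\sigPredFilter c_\sigPredFilter)^{a_h}+2$ rather than a strictly longer one, so even granting your cell count you would need a sharper accounting of the per-predicate update counts to land on the stated constant. Second, the claim that the all-$\ast$ cell lies in $F^\adatabase$ only when $F\equiv\top$ silently excludes nullary filter predicates (which the signature permits). Neither of these is fatal on its own; the missing cell count is.
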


Theorem~\ref{thm_sf_semantics_upper_bound} yields a polynomial upper bound for the case that
filter relations are polynomially bounded and non-filter predicates have a fixed arity.
However, many filters in existing systems correspond to infinite relations, so further 
approaches are required to make static filtering tractable in practice.

\section{Tractable Static Filtering}\label{sec_fentailment}

In this section, we further analyse the complexity of Algorithm~\ref{alg_filter_pushing},
and propose simplifications for making it tractable.
Section~\ref{sec_sf_steps} showed that runtimes can be prohibitive, even
without considering the cost of individual operations. For a full analysis, 
we must also analyse the cost of lines \alglineref{line_fp_bodyatomfilter} and \alglineref{line_fp_filterunion}.
Since the result of \alglineref{line_fp_filterunion} remains the same when 
replacing $M$ by any $M'\equiv M$, implementations can optimise \alglineref{line_fp_bodyatomfilter}
by computing a potentially smaller $M'$.
Nevertheless, Example~\ref{ex_kl_exp} shows a case where every such $M'$ still grows exponentially
during polynomially many iterations. Our proposed solution is to restrict
$\fnFilterExpr{p}$ to special forms of filter formulas that merely approximate those in Algorithm~\ref{alg_filter_pushing}.

First, however, we need to address the problem that even the entailment of individual filter formulas is generally undecidable, even for common filters.

\begin{restatable}{proposition}{propFilterLogUndec}\label{propFilterLogUndec}
If $\sigPredFilter$ contains predicates that can express arithmetic equalities
$x=y+z$ and $x=y\cdot z$ over natural numbers, and $x=n$ for all $n\in\mathbb{N}$,
then there is no algorithm that decides $F\models G$ for arbitrary $k>0$ and $F,G\in\filterFormulas{k}$.
\end{restatable}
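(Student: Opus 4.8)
The plan is to reduce an undecidable problem about Diophantine equations -- specifically, a variant of Hilbert's tenth problem -- to the entailment problem $F \models G$ for filter formulas. Since the predicates in $\sigPredFilter$ can express the graphs of addition ($x = y+z$) and multiplication ($x = y\cdot z$) as well as all the singleton constraints $x = n$, a filter formula over arity $k$ can encode an existentially quantified system of polynomial equations in $k$ variables: the conjunctions $\wedge$ build up the polynomial terms via fresh "intermediate result" variables, the disjunctions $\vee$ are not even needed for the encoding itself, and the projection onto the tuples satisfying the formula is exactly existential quantification over the $k$ coordinates. Concretely, given a polynomial $P(x_1,\ldots,x_m)$ with integer (or natural) coefficients, I would build a filter formula $F_P \in \filterFormulas{k}$ for a suitable $k \geq m$ (with the extra coordinates holding subterm values) whose semantics $F_P^\adatabase$ is nonempty exactly when $P$ has a natural-number root.

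The key steps, in order. First, fix the intended database $\adatabase$ so that the arithmetic filter predicates denote the genuine relations over $\mathbb{N}\subseteq\sigCons$ (addition graph, multiplication graph, and each singleton $\{n\}$); this is the natural reading of the hypothesis "can express," and all other constants are irrelevant. Second, describe the compilation of a polynomial equation $P = Q$ (equivalently $P - Q = 0$, kept in a form with natural coefficients by moving negative monomials to the other side) into a conjunctive filter formula: introduce one positional marker per original variable and per subterm, chain together atoms $p_+(\cdots)$ and $p_\ast(\cdots)$ and $p_{=n}(\cdot)$ to force the intermediate markers to hold the evaluated subterms, and finally add an atom forcing the two sides to be equal (either via the equality filter $\predEquals$, which by hypothesis is available, or by forcing both sides to equal a common fresh marker). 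Third, observe that $F_P^\adatabase \neq \emptyset$ iff the Diophantine system has a solution in $\mathbb{N}$. Fourth, convert nonemptiness into an entailment question: $F_P^\adatabase = \emptyset$ iff $F_P \models \bot$. Since, by Matiyasevich's theorem (the MRDP theorem resolving Hilbert's tenth problem), it is undecidable whether a given Diophantine equation has a natural-number solution, it is undecidable whether $F_P \models \bot$, and hence no algorithm can decide $F \models G$ for arbitrary $F, G \in \filterFormulas{k}$ with $k > 0$. If one prefers to avoid $\bot$ on the right, one can equally well pick a fixed $G$ known to have empty-complement behaviour, or note that $F \models G$ with $G$ chosen to be an "always-false-on-this-input" conjunction of incompatible singleton atoms such as $\numpos{1}\narrow\predEquals 0 \wedge \numpos{1}\narrow\predEquals 1$.

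The main obstacle is purely a matter of careful bookkeeping rather than deep mathematics: making sure the encoding of an arbitrary polynomial into the restricted grammar $\filterFormulas{k}$ is faithful, in particular that (a) every intermediate marker is pinned to a unique value by the conjoined addition/multiplication atoms so that projecting onto the "real" variables does not accidentally enlarge the solution set, and (b) the fixed arity $k$ is allowed to depend on the polynomial -- which is fine, since the proposition quantifies over "arbitrary $k > 0$", so we may choose $k$ as large as the encoding of the particular equation requires. A secondary subtlety is that the paper's normal-form conventions forbid constants and repeated variables inside atoms; but the filter predicates $\numpos{i}\narrow\predEquals n$ and $\numpos{i}\narrow\predEquals\numpos{j}$ are exactly the devices introduced in the preliminaries for this purpose, so the encoding stays within the stated language. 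Once the compilation is pinned down, correctness is immediate and the reduction from MRDP closes the argument.
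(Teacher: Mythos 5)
Your proposal is correct and follows essentially the same route as the paper's proof: both reduce from Hilbert's tenth problem (MRDP) by compiling a Diophantine equation into a conjunctive filter formula with one positional marker per arithmetic subterm, pinning each marker via the addition/multiplication/singleton filter atoms, expressing equality of the two sides by an atom of the form $f \approx g + 0$ with a marker pinned to $0$, and then observing that $F \models \bot$ iff the equation has no solution over $\mathbb{N}$. The bookkeeping concerns you flag (uniqueness of intermediate values, letting $k$ depend on the polynomial) are exactly the points the paper's structural induction handles.
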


It is well-known that arithmetic predicates are challenging to reason with, and
Datalog engines commonly restrict to \emph{safe} arithmetics, where all numeric variables are bound to
finite extensions of other predicates. However, this restriction does not 
simplify static filtering, where abstract filter formulas are considered without such concrete
bindings.

We therefore consider entailment relations that are sound but not necessarily complete,
except for the basic semantics of the propositional operators in filter formulas.

\begin{definition}\label{defApproxEntlaiment}
Every filter formula $F$ can be considered as a propositional logic formula over its filter atoms.
Let $\models_{\text{prop}}$ denote the usual propositional logic entailment relation over these 
formulas.
A binary relation $\approxmodels$ on filter formulas is an \emph{approximate entailment}
if ${\models_{\text{prop}}}\subseteq{\approxmodels}\subseteq{\models}$.
We write $F\approxequiv G$ if $F\approxmodels G$ and $G\approxmodels F$.
\end{definition}

We generalise Algorithm~\ref{alg_filter_pushing} to approximate entailments
by replacing any use of $\models$ by $\approxmodels$, and by allowing any representation function $\reprFunc: \filterFormulas{k}\to\filterFormulas{k}$ 
where $F\approxequiv\repr{F}$ and $F\approxequiv G$ implies $\repr{F}=\repr{G}$.
Definition~\ref{def_admissible} and Algorithm~\ref{alg_admin_filter} can likewise be generalised
by using filters computed by the approximate algorithm.
The main results of Sections~\ref{sec_sf} and \ref{sec_sf_steps} still hold in this generalised setting.

\begin{restatable}{lemma}{lemmaApproxSfCorrectTerm}\label{lemma_approx_sf_correct_term}
For any approximate entailment $\approxmodels$, Algorithm~\ref{alg_filter_pushing} terminates within the bounds 
of Theorem~\ref{thm_sf_syntax_upper_bound}.
If $\aprogram'$ is an admissible rewriting of $\aprogram$ based on $\approxmodels$, and $p(\vec{c})$ is a 
fact with $p\in\sigPredOutput$, then $\aprogram,\adatabase \models p(\vec{c})$ iff $\aprogram',\adatabase \models p(\vec{c})$.
\end{restatable}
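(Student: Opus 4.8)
The plan is to show that the two claims — termination within the bounds of Theorem~\ref{thm_sf_syntax_upper_bound}, and correctness of admissible rewritings — follow by inspecting where the original proofs of Sections~\ref{sec_sf} and~\ref{sec_sf_steps} actually use properties of $\models$ and $\reprFunc$, and checking that the weaker properties guaranteed by an approximate entailment $\approxmodels$ and its associated $\reprFunc$ suffice. The key observation driving everything is the sandwich ${\models_{\text{prop}}}\subseteq{\approxmodels}\subseteq{\models}$ from Definition~\ref{defApproxEntlaiment}: soundness (${\approxmodels}\subseteq{\models}$) is what preserves correctness, while containment of propositional entailment (${\models_{\text{prop}}}\subseteq{\approxmodels}$) is what is needed to push through both the admissibility machinery of Algorithm~\ref{alg_admin_filter} and any arguments that rely on trivial propositional facts (e.g.\ $F\models_{\text{prop}} F\vee M$, $F\wedge G\models_{\text{prop}} F$, distributivity). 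The termination bound, meanwhile, turns out not to depend on which entailment is used at all.

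For termination, I would revisit the proof of Lemma~\ref{lemma_chainlength} and Theorem~\ref{thm_sf_syntax_upper_bound}. The pigeonhole argument there counts \emph{distinct} filter formulas up to the equivalence used by $\reprFunc$: since $\reprFunc$ is required to satisfy $F\approxequiv G \Rightarrow \repr{F}=\repr{G}$, the number of canonical representatives is still bounded by $|\filterFormulas{k}/{\approxequiv}| \le |\filterFormulas{k}/{\models_{\text{prop}}}| = |\filterFormulas{k}/{\equiv_{\text{prop}}}|$, which is at most $2^{2^{(\text{number of filter atoms of arity }k)}}$ — exactly the double-exponential count underlying Theorem~\ref{thm_sf_syntax_upper_bound} with $n_\sigPredFilter \cdot a_h^{a_\sigPredFilter}$ atoms. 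Crucially, the sequence $\fnFilterExpr{b}$ is still monotonically non-decreasing under $\approxmodels$ at each update in line~\alglineref{line_fp_filterunion} (because $F \approxmodels F\vee M$ holds, being a propositional consequence), and a strict increase — which is what forces another iteration of loop~\alglineref{line_fp_mainloop} — must change the canonical representative. So the same chain-length/pigeonhole bound applies verbatim, and the numeric bound from Theorem~\ref{thm_sf_syntax_upper_bound} is inherited unchanged.

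For correctness, I would re-examine the proof of Theorem~\ref{thmFilterPropagationCorrectness}. The stated consequence is that $\mathcal{M}' = \{q(\vec{d}) \in \mathcal{M} \mid \satisfy{\vec{d}}{\fnFilterExpr{q}}\} \cup \adatabase$ is closed under $\aprogram'$ and agrees with $\mathcal{M}$ on output predicates. That argument needs: (i) each $\fnFilterExpr{p}$ produced by the approximate algorithm is still a \emph{sound} filter, i.e.\ no output-relevant fact is lost — this uses only that every entailment actually checked in lines~\alglineref{line_fp_bodyatomfilter} is a genuine semantic entailment, which holds since ${\approxmodels}\subseteq{\models}$; and (ii) admissibility of $\psi$ (Definition~\ref{def_admissible}, now with $\approxmodels$) still guarantees $F_+^\adatabase \cap F_-^\adatabase = \psi^\adatabase \cap F_-^\adatabase$ over the model, again because both admissibility conditions $F_+\approxmodels\psi$ and $\psi\wedge F_-\approxmodels F_+$ are sound entailments. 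The "only if" direction (no \emph{new} facts) is a purely syntactic consequence of $\aprogram'$ being obtained by strengthening rule bodies, independent of the entailment. Hence the proof goes through mutatis mutandis. I would also note that Algorithm~\ref{alg_admin_filter} still terminates and returns an admissible $\psi$: each step preserves admissibility because the check $F_-\wedge\psi[o\mapsto\top]\approxmodels\psi$ together with the trivially-propositional $\psi\models_{\text{prop}}\psi[o\mapsto\top]$ re-establishes both admissibility conditions for the new $\psi$.

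The main obstacle I anticipate is making sure that \emph{every} appeal to a semantic fact in the original proofs is either (a) a sound entailment, hence valid under $\approxmodels$, or (b) a propositional tautology, hence also captured by $\approxmodels$ — and that none of them secretly requires \emph{completeness} of $\models$, i.e.\ a step of the form "$F^\adatabase \subseteq G^\adatabase$, therefore $F \approxmodels G$." A plausible danger zone is the argument establishing that the fixpoint filters are the \emph{strongest} sound ones (used implicitly when claiming the rewritten program is as restrictive as possible); but on inspection that optimality is never needed for \emph{correctness}, only soundness is, so the generalised Theorems~\ref{thmFilterPropagationCorrectness} and~\ref{thm_sf_syntax_upper_bound} survive even though the precise set of filters computed will in general be weaker than in the exact algorithm. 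I would close by remarking that the analogous generalisations of Theorem~\ref{thmRewritingLinearWorstCase} and Algorithm~\ref{alg_admin_filter}'s correctness are immediate corollaries of the same reasoning.
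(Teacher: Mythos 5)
Your proposal is correct and follows essentially the same route as the paper: termination is inherited from the chain-length/pigeonhole argument behind Theorem~\ref{thm_sf_syntax_upper_bound} (with ${\models_{\text{prop}}}\subseteq{\approxmodels}$ guaranteeing that each update in line~\alglineref{line_fp_filterunion} is monotone and that a changed representative witnesses a strict step), and correctness is obtained by rerunning the proof of Theorem~\ref{thmFilterPropagationCorrectness} on $\mathcal{M}'=\{q(\vec{d})\in\mathcal{M}\mid \satisfy{\vec{d}}{\fnFilterExpr{q}}\}\cup\adatabase$, using ${\approxmodels}\subseteq{\models}$ so that every rewriting admissible for $\approxmodels$ is admissible for $\models$ and the entailments $G\models\iota_{b(\vec{y})}(M)$ and $M\models\fnFilterExpr{b}$ needed for minimality still hold. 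One minor imprecision: your intermediate count $|\filterFormulas{k}/{\approxequiv}|\leq 2^{2^{N}}$ (for $N$ filter atoms) is one exponential looser than the $n_h\cdot 2^{N}$ bound of Theorem~\ref{thm_sf_syntax_upper_bound}, so that remark should be dropped or flagged as a sanity check only — it is your subsequent chain-length argument that actually delivers the stated bound.
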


If $\approxmodels$ is decidable, Algorithm~\ref{alg_filter_pushing} can be implemented, but may still be intractable.
In fact, deciding $\models_{\text{prop}}$ is still hard for $\coNP$. However, weakening $\approxmodels$ further to omit entailments of
$\models_{\text{prop}}$ may impair termination, since propositionally equivalent formulas may have distinct representatives.
To obtain a tractable procedure, we further modify Algorithm~\ref{alg_filter_pushing} so that
the formulas $\fnFilterExpr{p}$ can only be \emph{conjunctions} of filter atoms, $\top$, or $\bot$.
Lines \alglineref{line_fp_bodyatomfilter} and \alglineref{line_fp_filterunion} in the algorithm are now replaced by
\begin{align}
   \fnFilterExpr{b} := \bigwedge\{A\in\mathcal{A} \mid \iota_{b(\vec{y})}(\fnFilterExpr{b})\vee G\approxmodels\iota_{b(\vec{y})}(A) \}\label{eq_sf_fltconj}
\end{align}
where $\mathcal{A}=\{\bot\}\cup\sigPredFilter[\arity{b}]$, and we assume that conjunctions are represented as subsets of $\mathcal{A}$
with $\bigwedge\emptyset=\top$.
We refer to the modified algorithm as \emph{conjunctive approximate static filtering} (CASF). The complexity depends on the
choice of $\approxmodels$ and the size of $\sigPredFilter[\arity{b}]$, but the algorithm is correct in all cases.
This follows from the observation that the formulas $\fnFilterExpr{b}$ as computed in CASF are logical consequences (w.r.t.\ $\models$)
of those computed in Algorithm~\ref{alg_filter_pushing}, i.e., filters are more permissive.

\begin{restatable}{theorem}{thmCasfCorrect}\label{thm_casf_correct}
If $\aprogram'$ is an admissible rewriting of $\aprogram$ for the filter formulas computed in CASF, and $p(\vec{c})$ is a 
fact with $p\in\sigPredOutput$, then $\aprogram,\adatabase \models p(\vec{c})$ iff $\aprogram',\adatabase \models p(\vec{c})$.
\end{restatable}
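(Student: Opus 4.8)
The plan is to reuse, essentially unchanged, the correctness argument behind Theorem~\ref{thmFilterPropagationCorrectness} and its approximate generalisation Lemma~\ref{lemma_approx_sf_correct_term}. That argument splits into two halves. The \emph{soundness} half --- the model of the rewritten program is contained in that of the original, so no spurious output facts appear --- uses only the admissibility conditions $F_+\models\psi$ and $\psi\wedge F_-\models F_+$ of Definition~\ref{def_admissible}, and hence does not depend on how the filters $\fnFilterExpr{p}$ were obtained; since $\aprogram'$ is by hypothesis an admissible rewriting for the CASF filters, this half carries over verbatim. The \emph{completeness} half --- every original output fact stays derivable --- takes a proof tree of an output fact over $\aprogram$ and shows it is also a proof tree over $\aprogram'$; inspecting it, the filters enter only through two properties: (i) every IDB output predicate has $\fnFilterExpr{p}\equiv\top$, so the root of the tree trivially satisfies its filter; and (ii) the \emph{filter-stability} property that for every rule $\arule=h(\vec{x})\leftarrow B_{\bar{\sigPredFilter}}\wedge G_\sigPredFilter\in\aprogram$ and every IDB body atom $b(\vec{y})\in B_{\bar{\sigPredFilter}}$ we have $\iota_{h(\vec{x})}(\fnFilterExpr{h})\wedge G_\sigPredFilter\models\iota_{b(\vec{y})}(\fnFilterExpr{b})$, so that filter satisfaction propagates downwards along the tree. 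It therefore suffices to verify (i) and (ii) for the filters that CASF computes.

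To verify (ii), I would first observe that CASF terminates: each application of update~\eqref{eq_sf_fltconj} replaces $\fnFilterExpr{b}$ by a conjunction whose relation contains the previous one (by soundness of $\approxmodels$, every newly retained atom $A$ satisfies $\iota_{b(\vec{y})}(\fnFilterExpr{b})\vee G\models\iota_{b(\vec{y})}(A)$ and hence $\iota_{b(\vec{y})}(\fnFilterExpr{b})\models\iota_{b(\vec{y})}(A)$), so the filters weaken monotonically over a finite lattice of conjunctions and change only finitely often per predicate. At the fixpoint, I would fix a rule $\arule$ and an IDB body atom $b(\vec{y})$ and set $G=\iota_{h(\vec{x})}(\fnFilterExpr{h})\wedge G_\sigPredFilter$ as in line~\alglineref{line_fp_matchfilter}; stability means the conjunction produced by~\eqref{eq_sf_fltconj} equals the current $\fnFilterExpr{b}$, so every atom $A$ of $\fnFilterExpr{b}$ satisfies $\iota_{b(\vec{y})}(\fnFilterExpr{b})\vee G\approxmodels\iota_{b(\vec{y})}(A)$, hence $\iota_{b(\vec{y})}(\fnFilterExpr{b})\vee G\models\iota_{b(\vec{y})}(A)$, hence $G\models\iota_{b(\vec{y})}(A)$; conjoining over all such $A$ yields $G\models\iota_{b(\vec{y})}(\fnFilterExpr{b})$, which is (ii). For (i), an IDB output predicate starts at $\repr{\top}$, and whenever it is processed as a body atom $b(\vec{y})$ the disjunction $\iota_{b(\vec{y})}(\fnFilterExpr{b})\vee G$ is $\equiv\top$, so the only atoms that could be retained --- even under $\models$, hence certainly under the weaker $\approxmodels$ --- are trivial ones with $A^{\adatabase}=\sigCons^{\arity{b}}$, leaving $\fnFilterExpr{b}\equiv\top$. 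I would also remark that having $\bot\in\mathcal{A}$ is precisely what lets an unreachable non-output IDB predicate keep the filter $\bot$, so the $\psi=\bot$/$\psi=\top$ simplifications attached to Definition~\ref{def_admissible} behave just as in the non-approximate case. With (i) and (ii) in hand, the proof of Theorem~\ref{thmFilterPropagationCorrectness}/Lemma~\ref{lemma_approx_sf_correct_term} goes through with the CASF filters and yields the claimed equivalence.

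Complementing this --- and matching the remark that precedes the statement --- I would record that, by a short induction on the iterations, $\fnFilterExpr{b}$ as computed by CASF is always a $\models$-consequence of the filter Algorithm~\ref{alg_filter_pushing} would compute, since its only deviations (replacing the exact formula $M$ by the strongest conjunction of single filter atoms it entails, and using $\approxmodels$ instead of $\models$) can only make the filter more permissive; by Theorem~\ref{thmRewritingLinearWorstCase} this sandwiches the CASF-rewritten model between the exact rewriting's model $\mathcal{M}'$ and $\mathcal{M}$, re-confirming that no output fact is lost or invented. I expect the main obstacle to be conceptual rather than computational: CASF is \emph{not} an instance of the approximate Algorithm~\ref{alg_filter_pushing}, because its ``representation'' step is a genuine, lossy weakening rather than an $\approxequiv$-preserving normalisation, so Lemma~\ref{lemma_approx_sf_correct_term} cannot simply be invoked; the real work is to extract the exact entailment (ii) from the fixpoint condition --- where soundness of $\approxmodels$, the only property we are given, does the job --- and to clear the small edge cases around output predicates, $\bot$, and termination of the lossy iteration.
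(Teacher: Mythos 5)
Your proposal is correct and follows essentially the same route as the paper: the paper likewise reduces to the proof of Theorem~\ref{thmFilterPropagationCorrectness} via the observation (its inequality~\eqref{eq_conjunction_generalises}) that the conjunctive update only yields more permissive filters, and your explicit extraction of the fixpoint property $F_+ \models \iota_{b(\vec{y})}(\fnFilterExpr{b})$ from the soundness ${\approxmodels}\subseteq{\models}$ is precisely the step that the paper's terse ``same arguments, together with~\eqref{eq_conjunction_generalises}'' stands in for. The only caveat concerns your side remark on termination: semantic weakening over a finite set of conjunctions does not by itself exclude syntactic cycling for a non-transitive $\approxmodels$, but this is immaterial here since the theorem presupposes that CASF has already produced its filter formulas.
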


For polynomial runtime, we need to restrict $\approxmodels$ so that the entailment in \eqref{eq_sf_fltconj} can be decided in \PTime.
We consider a finite set $\mathcal{T}$ of Datalog rules that use only filter predicates in head and body.
$\mathcal{T}$ is a \emph{Horn axiomatisation} for $\approxmodels$ if $F\approxmodels G$ holds for filter formulas $F,G\in\filterFormulas{k}$ exactly if
$G$ is a logical consequence of $F\cup\mathcal{T}$ (considered as a predicate logic theory over the domain of positional markers $\sigNumPos_k$).
Moreover, $\mathcal{T}$ is a \emph{linear axiomatisation} if rules in $\mathcal{T}$ have exactly one body atom.

\begin{restatable}{theorem}{thmCasfTractable}\label{thm_casf_tractable}
Let $\sigPredFilter$ be a set of filter predicates with bounded arity, and let
$\mathcal{T}$ be a (fixed) Horn approximation of $\approxmodels$.
Then CASF can be executed in polynomial time over a program $P$ in either of the following cases:
\begin{enumerate}
\item $\mathcal{T}$ is a linear approximation, or
\item the filter expressions $G_\sigPredFilter$ in $P$ do not contain $\vee$.
\end{enumerate}
\end{restatable}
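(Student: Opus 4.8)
The plan is to prove the polynomial bound by separately bounding (i)~the number of passes of the main loop of CASF and (ii)~the work performed per pass, both polynomially in $|P|$. Throughout I would use that, since $\mathcal{T}$ is a Horn axiomatisation, $\approxmodels$ is literally ``$\,\cdot\cup\mathcal{T}\models\,\cdot\,$'' and is therefore transitive, monotone, sound for $\models$, and an extension of $\models_{\text{prop}}$. For (i): in CASF every $\fnFilterExpr{p}$ is a conjunction, represented as a subset of $\mathcal{A}_p=\{\bot\}\cup\sigPredFilter[\arity{p}]$, and the update \eqref{eq_sf_fltconj} only weakens $\fnFilterExpr{b}$ with respect to $\models$, since every new conjunct is $\approxmodels$-entailed --- hence $\models$-entailed --- by $\iota_{b(\vec{y})}(\fnFilterExpr{b})\vee G$ and therefore already by $\iota_{b(\vec{y})}(\fnFilterExpr{b})$. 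Mapping a conjunctive $F$ to its $\models$-closure $\{A\in\mathcal{A}_p\mid F\models A\}$ (and noting $F\equiv\bigwedge\{A\in\mathcal{A}_p\mid F\models A\}$) turns the sequence of distinct values of $\fnFilterExpr{p}$ into a strictly descending chain of subsets of $\mathcal{A}_p$, so $\fnFilterExpr{p}$ takes at most $|\mathcal{A}_p|+1$ distinct values; since every non-final pass changes some $\fnFilterExpr{p}$, the loop makes at most $1+\sum_p(|\mathcal{A}_p|+1)$ passes (this is the chain-length argument of Lemma~\ref{lemma_chainlength}, specialised to conjunctions). As $|\sigPredFilter[k]|\le n_\sigPredFilter\cdot k^{a_\sigPredFilter}$ with $a_\sigPredFilter$ constant, this is polynomial.

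For (ii), one pass processes polynomially many rule/IDB-body-atom pairs $(\arule,b(\vec{y}))$, and for each it evaluates \eqref{eq_sf_fltconj} by testing, for every $A$ in the polynomially large set $\{\bot\}\cup\sigPredFilter[\arity{b}]$, whether $\Phi\approxmodels\iota_{b(\vec{y})}(A)$, where $\Phi = \iota_{b(\vec{y})}(\fnFilterExpr{b})\vee(\iota_{h(\vec{x})}(\fnFilterExpr{h})\wedge G_\sigPredFilter)$. Note that $\Phi$ is a positive propositional formula over filter atoms of polynomial size and the right-hand side is a single filter atom (the case $A=\bot$ is degenerate: a positive $\Phi$ with the constraint-free Horn theory $\mathcal{T}$ is always satisfiable, so $\Phi\approxmodels\bot$ never holds). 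It remains to decide each such test in polynomial time, and this is where the two cases split. If $G_\sigPredFilter$ contains no $\vee$, then $G_\sigPredFilter$, and hence both disjuncts of $\Phi$, are plain conjunctions of atoms; since an atom is a $\mathcal{T}$-consequence of a disjunction iff it is a $\mathcal{T}$-consequence of each disjunct, the test reduces to two Datalog fact-entailment checks of the form ``$\Gamma\cup\mathcal{T}\models\iota_{b(\vec{y})}(A)$'' with $\Gamma$ a set of facts over the positional markers $\sigNumPos_k$; as $\mathcal{T}$ is fixed, its grounding over $\sigNumPos_k$ has polynomially many rules (arities are bounded), so each check is a polynomial least-fixpoint computation.

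If instead $\mathcal{T}$ is linear, $\Phi$ may genuinely contain disjunctions and expanding it into disjuncts is exponential, so a different argument is needed. The key observation is that for a set of linear rules an atom lies in the $\mathcal{T}$-closure of a set of atoms iff it is reachable from one of them along grounded rules; consequently, for a positive formula $\Phi$ we have $\Phi\approxmodels\iota_{b(\vec{y})}(A)$ iff $\Phi\models_{\text{prop}}\bigvee R$, where $R=\{B\in\sigPredFilter[k]\mid\{B\}\cup\mathcal{T}\models\iota_{b(\vec{y})}(A)\}$. Each such $R$ has polynomial size and is obtained by a linear-Datalog reachability computation over the $\sigNumPos_k$-grounding of $\mathcal{T}$, so all of them can be precomputed in polynomial time; and since $\Phi$ is monotone, $\Phi\models_{\text{prop}}\bigvee R$ holds iff the single truth assignment that sets every atom of $R$ to false and every other atom to true falsifies $\Phi$ --- one polynomial-time formula evaluation. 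Hence each test is polynomial in this case as well, and multiplying the bound from (i) by the bound from (ii) yields the theorem.

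I expect the linear case to be the main obstacle: one must state and prove carefully the ``linear $\mathcal{T}$-closure equals reachability'' property and the resulting equivalence between the disjunctive entailment test and a single monotone-formula evaluation, and verify that the reasoning about least models goes through for the grounding over $\sigNumPos_k$ and for degenerate situations ($\bot$, atoms that reach themselves, and --- should they occur --- nullary rules in $\mathcal{T}$, for which one first checks $\mathcal{T}\models\iota_{b(\vec{y})}(A)$ separately). For (i) one should also confirm that the represented subsets, not merely their semantic content, stabilise, which follows from the canonical form of the update \eqref{eq_sf_fltconj}. The conjunctive case and the remaining bookkeeping are routine, relying only on standard facts about Datalog fact entailment.
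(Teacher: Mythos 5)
Your proposal is correct and follows essentially the same route as the paper's proof: termination via the observation that the conjunct sets $C_b\subseteq\{\bot\}\cup\sigPredFilter[\arity{b}]$ form (after the first update) descending chains of polynomial length, and per-step tractability via the Horn-theoretic splitting of the disjunctive premise, with a fixed-theory Datalog entailment check in the $\vee$-free case and a backward-reachability set $R$ plus a single monotone evaluation of $\Phi$ under the assignment ``$R$ false, rest true'' in the linear case. The only substantive differences are presentational (the paper first reduces the $\fnFilterExpr{b}$-disjunct to a membership test $A\in C_b$ before the case split, and argues termination directly from $\approxmodels$-closedness of the computed sets rather than via $\models$-closures, a gap you correctly flag and patch); note also that your stated decision rule for the linear case --- entailment holds iff the witness assignment \emph{falsifies} $\Phi$ --- is the correct polarity, whereas the paper's own wording of this step appears to have it inverted.
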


Linear rules as in the first case in Theorem~\ref{thm_casf_tractable} suffice to model basic
hierarchies of filter conditions, but the power of Horn logic is required
to axiomatise typical binary filters such as order relations $\leq$ \cite{DBLP:journals/jacm/UllmanG88}.

\begin{example}\label{ex_theory_linear_order}
	For a set of natural numbers $N \subseteq \natnums$, consider the (possibly infinite) theory
	\begin{align}
		x\narrow\leq \const{c} &\leftarrow x\narrow\predEquals \const{c}  & &  \const{c} \in N\label{eq_ex_bounded_reach_axiom_init}\\
		x\narrow\leq \const{c} &\leftarrow y\narrow\leq \const{c} \land y\narrow\predEquals x{+}\const{d} & & \const{c}, \const{d} \in N \label{eq_ex_bounded_reach_axiom_prop}\\
		x\narrow\leq \const{c} &\leftarrow x\narrow\leq \const{d} & & \const{c}, \const{d} \in N, \const{c} > \const{d}\label{eq_ex_bounded_reach_axiom_monotone}
	\end{align}
	with rules instantiated for all values of $\const{c}$ and $d$ as specified.
The finite instantiation with $N={\{0,1,5\}}$ axiomatises all filter entailments necessary for Examples~\ref{ex_bounded_reach_sfcomp} and \ref{ex_bounded_reach_sfrewrite}.
	The relevant constants $N$ are syntactically given in the input filters.
\end{example}

We observe that the cases in Theorem~\ref{thm_casf_tractable} cannot be combined without loosing tractability:

\begin{restatable}{proposition}{propHornApproxCoNP}
There is a Horn approximation $\mathcal{T}$ of $\approxmodels$,
such that deciding $G\approxmodels A$ for a filter formula $G$ and an atom $A$
is hard for $\coNP$.
\end{restatable}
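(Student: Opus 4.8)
The statement asserts a lower bound: there exists a Horn approximation $\mathcal{T}$ of some approximate entailment $\approxmodels$ for which deciding $G\approxmodels A$ (with $G$ a filter formula and $A$ a single filter atom) is $\coNP$-hard. By definition of a Horn axiomatisation, $G\approxmodels A$ holds iff $A$ is a logical consequence of $G\cup\mathcal{T}$ over the domain $\sigNumPos_k$ of positional markers. Since $G$ is a positive boolean combination of filter atoms, this is essentially a Horn-clause entailment question where the ``query'' side $G$ carries disjunctions — exactly the ingredient that makes propositional Horn-with-disjunctive-hypotheses $\coNP$-hard. So the plan is to reduce from a standard $\coNP$-complete problem, e.g.\ \textsc{UnSat} for 3-CNF, or equivalently to reduce from the complement of 3-\textsc{Sat}.

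First I would fix a small filter signature: a binary filter predicate $t$ (think ``these two positions carry the same truth value'') together with two designated ``constant-like'' unary predicates $\mathit{tr}$ and $\mathit{fl}$, or alternatively encode truth values through positional markers directly. Given a 3-CNF formula $\varphi$ over variables $z_1,\ldots,z_n$, I would use positional markers $\numpos{1},\ldots,\numpos{k}$ to name a pair of ``value slots'' per variable, plus an extra marker as a sink. The Horn theory $\mathcal{T}$ would contain rules that (i) force each variable slot to be consistently true or false (a ``choice'' gadget expressed via Horn rules over the value predicates), and (ii) for each clause of $\varphi$, a Horn rule whose body tests that all three literals of that clause are falsified and whose head fires the sink predicate — equivalently, a rule that propagates ``$\bot$''/failure. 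One must check that these can all be written as genuine Horn rules using only filter predicates in head and body; the clause-falsification test is a conjunction of three literal-value atoms, which is fine for a Horn body. Then I would take $G$ to be a fixed filter formula (a positive boolean combination) that ranges disjunctively over the $2^n$ truth assignments — this is the only place disjunction is used, and it is what pushes complexity up — while $A$ is the single sink atom. The construction is arranged so that $G\cup\mathcal{T}$ entails $A$ iff every truth assignment falsifies some clause, i.e.\ iff $\varphi$ is unsatisfiable.

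The key correctness argument has two directions. Soundness of the reduction: if $\varphi$ is unsatisfiable, then for every assignment encoded by a disjunct of $G$, some clause rule of $\mathcal{T}$ fires the sink, so $A$ follows from each disjunct, hence from $G$ by case analysis; thus $G\approxmodels A$. Completeness: if $\varphi$ has a satisfying assignment $\alpha$, build a model of $G\cup\mathcal{T}$ (over domain $\sigNumPos_k$) corresponding to $\alpha$ in which no clause rule fires and $A$ is false — this requires checking that the chosen disjunct of $G$ consistent with $\alpha$ is satisfied while all Horn rules remain satisfied with $A$ false. Finally I would verify the side conditions required by the proposition's setting: that $\mathcal{T}$ really is a Horn \emph{approximation} of its induced $\approxmodels$, i.e.\ that ${\models_{\text{prop}}}\subseteq{\approxmodels}\subseteq{\models}$ — soundness (${\approxmodels}\subseteq{\models}$) is automatic because $\mathcal{T}$ can be read as valid semantic axioms if we interpret the auxiliary predicates appropriately (or, more simply, one defines $\models$ relative to whatever concrete relations realise these filter predicates so that $\mathcal{T}$'s rules hold), and ${\models_{\text{prop}}}\subseteq{\approxmodels}$ holds because Horn entailment subsumes propositional entailment of the atoms.

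The main obstacle I anticipate is the interface between the \emph{abstract} filter-formula semantics (interpretations over positional markers, with $\approxmodels$ defined purely syntactically by $\mathcal{T}$) and the \emph{concrete} requirement that $\approxmodels$ be sandwiched below the ``real'' entailment $\models$, which is defined via actual database relations $\atom{f}^\adatabase$. To make ${\approxmodels}\subseteq{\models}$ genuinely hold one must exhibit concrete filter relations for which every rule of $\mathcal{T}$ is valid; the cleanest route is to pick filter predicates whose intended relations literally satisfy the chosen Horn rules (for instance interpreting $t$ as an equivalence-style relation and the clause rules as tautologies about truth-value patterns), so that no semantic content is lost. Getting this encoding tight — enough Horn power to simulate clause checking, but still sound against a fixed concrete interpretation — is the delicate part; everything else is a routine Cook–Levin-style bookkeeping argument, and the polynomial size of $\mathcal{T}$ and $G$ in $|\varphi|$ is immediate.
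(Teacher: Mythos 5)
Your proposal is correct and follows essentially the same route as the paper: a reduction from propositional unsatisfiability in which the disjunctions of $G$ encode the choice of a truth assignment and Horn rules in $\mathcal{T}$ fire a fresh sink atom $A$ exactly on the ``bad'' configurations, so that $G\approxmodels A$ holds iff the input formula is unsatisfiable. The only (inessential) differences are that the paper keeps the input formula inside $G$ (in negation normal form) and puts only the mutual-exclusion rules $b(x)\leftarrow t_i(x)\wedge f_i(x)$ into $\mathcal{T}$, whereas you move the per-clause checks into $\mathcal{T}$; just be sure that $G$ is the polynomial-size choice gadget $\bigwedge_i(t_i\vee f_i)$ rather than a literal enumeration of the $2^n$ assignments.
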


 \paragraph*{Tractable static filtering for real-world data}

Finally, we investigate the impact of (tractable) static filtering for reasoning over real-world data.
We implement conjunctive approximate static filtering (CASF) for linear orders and instantiations of the rules of Example~\ref{ex_theory_linear_order} for all natural numbers in a program $P$.
Moreover, we treat EDB predicates as filter predicates.
We rewrite programs for transitive closure over Wikidata properties via CASF, and we compare the runtime of the original programs with the rewritten ones as well as the runtime for static filtering.
In particular, we show that
\begin{enumerate}
	\item static filtering can improve the performance of modern rule systems by orders of magnitude,
	\item the simplifications in Section~\ref{sec_fentailment} are general enough to obtain these improvements, and
	\item the time necessary for applying tractable static filtering is negligible.
\end{enumerate}

As a typical examples for recursive programs, we use programs for computing the transitive closure of a predicate, and we add an output predicate $\textit{out}$ and a filter $x \predEquals \const{a}$ for a constant $\const{a}$.
We use a template for transitive closure (see Figure~\ref{fig_eval_templates}), which we instantiated for different EDB predicates $p(x,y)$.
We extract these predicates from Wikidata properties (see Figure~\ref{tab_eval_properties}). As rule systems, we considered Souffl\'{e} v2.5 \cite{Jordan+:Souffle16}, Nemo v0.8.1 \cite{Ivliev+:Nemo2024}, Clingo v5.8.0 \cite{Gebser+:clingo2019}, and DLV v2.1.2 \cite{Alviano+:DLV2:17}.
We have adopted the programs and inputs to the capabilities of the rule systems:
Souffl\'{e} and Nemo received facts as CSV files, while Gringo and DLV received them as a list of facts.
We applied a timeout at 5min.
Our measurements are performed on a regular notebook (Linux; AMD Ryzen 7 PRO 5850U; 16 GiB RAM).

\begin{figure}[t]
	\caption{Template programs for transitive closure over some EDB predicate $p \in \sigPred$;
		some filter predicate $x \predEquals a \in \sigPredFilter$ with constant $\const{a}$ is applied to compute the output predicate $\textit{out} \in \sigPredOutput$;
		original program (left) and rewritten program by tractable static filtering (right)}
	\label{fig_eval_templates}
	\vspace{-.5cm}
	\hspace*{\fill}
	\begin{subfigure}{.4\textwidth}
		\begin{align*}
			\textit{tc}(x,y) &\leftarrow p(x,y)\\
			\textit{tc}(x,z) &\leftarrow \textit{tc}(x,y)\wedge p(y,z)\\
			\textit{out}(y) &\leftarrow \textit{tc}(x,y)\wedge x \predEquals \const{a}
		\end{align*}
	\end{subfigure}
	\hspace*{\fill}
	\begin{subfigure}{.4\textwidth}
		\begin{align*}
			\textit{tc}(x,y) &\leftarrow p(x,y) \wedge x \predEquals \const{a} \\
			\textit{tc}(x,z) &\leftarrow \textit{tc}(x,y)\wedge p(y,z)\\
			\textit{out}(y) &\leftarrow \textit{tc}(x,y)
		\end{align*}
	\end{subfigure}
	\hspace*{\fill}
\end{figure}

\begin{figure}[t]
	\centering
	\caption{Table of used Wikidata properties used in the evaluation, i.e., the programs in Figure~\ref{fig_eval_templates} are instantiated with properties $p$ and entities $\const{a}$; \#Facts is the number of facts for property $p$ }\label{tab_eval_properties}
	\begin{tabular}{llrl}
		\multicolumn{1}{c}{\textbf{Property p}} & \multicolumn{1}{c}{\textbf{Property name}} & \multicolumn{1}{c}{\textbf{\#Facts}} & \multicolumn{1}{c}{\textbf{Entity $\const{a}$}} \\\hline \\[-2ex]
		P2652 & partnership with & 6,638 & Q180 (Wikimedia Foundation) \\
		P530 & diplomatic relation & 7,290 & Q33 (Finland) \\
		P1327 & partner in business or sport & 27,716 & Q1203 (John Lennon) \\
		P197 & adjacent station & 266,608 &  Q219867 (London King's Cross) \\
		P47 & shares border with & 927,553 & Q33 (Finland) \\
	\end{tabular}
\end{figure}

Figure~\ref{fig_eval_results} shows the runtimes for Programs~\ref{fig_eval_templates} instantiated with the properties of Table~\ref{tab_eval_properties}.
For each property, we run each system with the original program and the rewritten one.
In all cases, the rewritten programs require significantly less time – note the log-scale of the plot.
For the properties where the systems could finish for the original program, static filtering provides a speed-up in the order of magnitude (from $6.6$ times for P1203 and Clingo to $30$ times for P2652 and DLV).
Moreover, static filtering enables all systems to finish for huge properties with up to $1,000,000$ facts within seconds,
while all system reached the timeout for the original program there.
Finally, we observe that static filtering can be done in a few milliseconds, and it is independent of the number of facts for the underlying property.

\begin{figure}[t]
	\centering
	\caption{Runtimes (median of five runs) for transitive closure programs for different properties and rule systems;
		solid bars show runtime for original programs;
		hatched bars show runtime for rewritten programs;
		runtime of static filtering (black, solid lines); timeout (red, dotted lines) at 5min}\label{fig_eval_results}
	\vspace{-.2cm}
	\includegraphics[width=\textwidth]{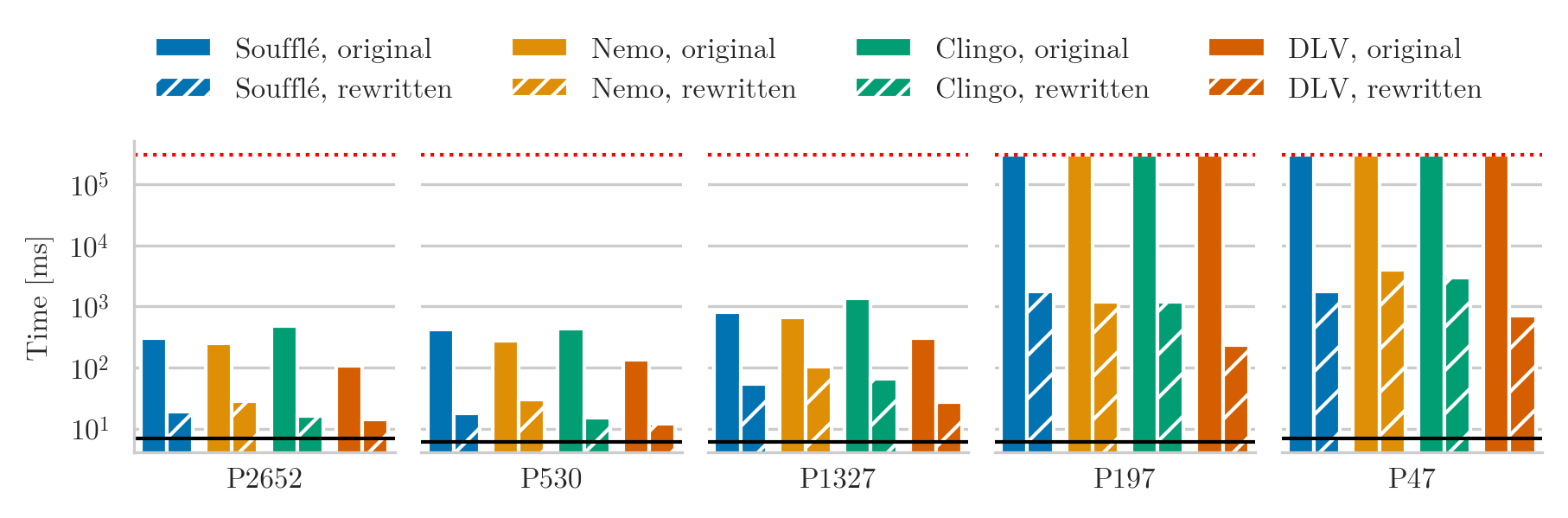}
\end{figure}

Our experiments show that (tractable) static filtering can improve the performance of rule systems significantly.
However, evaluating a static optimisation method empirically has its limitations.
Static optimization, by design, works on inputs that are not manually optimized yet.
While such programs are common in practice, there is no grounds for assuming anything about how common this is.
Existing public program collections are not representative here either,
since most of these published programs have been carefully optimised by experts
– exactly the type of manual work that static optimisations try to automate. 
\section{Incorporating Nonmonotonic Negation}\label{sec_negation}

In this section, we extend static filtering for rules with negation.
Datalog is often extended with \emph{stratified negation} \cite{Alice}.
The \emph{stable model semantics} as used in Answer Set Programming (ASP) \cite{DBLP:journals/cacm/BrewkaET11,DBLP:journals/ngc/GelfondL91} 
is a popular way to generalise this semantics to arbitrary rules with negation.
Both cases can benefit from static filtering; especially for ASP, where a polynomial reduction in
the grounding size can lead to an exponential performance advantage in solving.

Indeed, static optimisation in ASP is an important active topic of research.
Existing approaches include rewritings based on
tree decompositions \cite{DBLP:journals/fuin/BichlerMW20,DBLP:journals/tplp/CalimeriPZ19},
projection \cite{DBLP:conf/padl/HippenL19},
rule subsumption and shifting \cite{DBLP:conf/kr/EiterFTTW06}, and
magic sets \cite{DBLP:conf/iclp/CumboFGL04,DBLP:journals/tkde/Greco03} --
but we are not aware of any work that resembles static filtering.
Indeed, Table~\ref{tab_testeval} indicates that leading ASP engines do not
implement such optimisations even for basic filters of the form $\blank\narrow\predEquals \const{c}$.
The recent tool \emph{ngo}, maintained by the Clingo developers,
likewise implements many known optimisations, but no static filtering.\footnote{\url{https://potassco.org/ngo/ngo.html}}
When running $\emph{ngo}$ with all optimisations enabled on Example~\ref{ex_motivation},
it merely rewrites Rule~\ref{eq_testeval_out} to $\textit{out}(\const{b}) \leftarrow p(x_1,\ldots,x_\ell,\const{b})$,
which, expectably, does not have a notable impact on the runtime of any of the systems tested.

\paragraph*{Rules with negation}
A \emph{negated atom} is an expression $\naf p(\vec{t})$ with $p\in\sigPred$ and $\card{\vec{t}}=\arity{p}$.
A \emph{normal rule} $\arule$ is a formula $H \leftarrow B \land B^-$, where the head $H$ is an atom,
$B$ is a conjunction of atoms, and $B^-$ is a conjunction of negated non-filter atoms,
such that every $v\in\vars{\arule}$ occurs in some atom $p(\vec{x}) \in B$ (\emph{safety}).
Negated filter atoms are not needed: we can express them by introducing fresh filter predicates that
are interpreted by the complemented relation.
Analogously to Datalog, a \emph{normal rule with generalised filter expressions} has the form
$H\leftarrow B_{\bar{\sigPredFilter}}\wedge B_{\bar{\sigPredFilter}}^- \wedge G_\sigPredFilter$
with $H$ a head atom, $B_{\bar{\sigPredFilter}}$ a conjunction of non-filter atoms, and
$G_\sigPredFilter\in\genFilters$ a positive boolean combination of filter atoms.
In this section, all rules and programs may include negated atoms and generalised filters (we omit \emph{normal}).
The normal form without repeated variables per (negated or non-negated) atom is defined as for Datalog.

\paragraph*{Stable models}

For a program $P$ and database $\adatabase$, let $\ground{P} = \{ \arule\sigma \mid \arule \in P, \sigma \colon \sigVar \to \sigCons \}$ be its \emph{grounding}.
For a set of facts $\mathcal{A}$, the Datalog program $\ground{\aprogram}^{\mathcal{A}} = \{ H \leftarrow B \mid {H \leftarrow B \land B^-} \in \ground{\aprogram}, B^- \cap \mathcal{A} = \emptyset \}$ is the \emph{reduct}. $\mathcal{A}$ is a \emph{stable model} of $\aprogram$ and $\adatabase$ if $\mathcal{A}$ is the model of $\ground{\aprogram}^{\mathcal{A}}$ and $\adatabase$.
We write $\stablemods{\aprogram}{\adatabase}$ for the set of all such stable models.

\paragraph*{Stratified negation}
Even when using stable models, we check if programs are (partly) stratified, so as to tighten some filters.
Let $G_\aprogram$ be the graph with the IDB predicates of $\aprogram$ as its vertices,
and, for every rule $q(\vec{x}) \leftarrow B_{\bar{\sigPredFilter}}\wedge B_{\bar{\sigPredFilter}}^- \wedge G_\sigPredFilter \in \aprogram$,
a positive edge $p\to_+ q$ for all IDBs $p$ in $B_{\bar{\sigPredFilter}}$, and a negative edge $p \to_- q$ for all IDBs $p$ in $B_{\bar{\sigPredFilter}}^-$.
The \emph{stratifiable predicates} $\sigPredStrat$ are the vertices $p$ of $G_\aprogram$ such that there is no cycle $C$ in $G_\aprogram$
such that $C$ a negative edge and $p$ is reachable from $C$.

\paragraph*{Static filtering for stable models}
Importantly, facts that do not contribute directly to the output can still have an impact on stable models.
For example, consider a program $\aprogram$ with a stable model $\mathcal{A}$ that contains some $q(\vec{c})\in\mathcal{A}$;
then adding a rule $p(\vec{x}) \leftarrow q(\vec{x}) \land \naf p(\vec{x})$ for a fresh $p$ means that $\mathcal{A}$ is no
longer stable.
Hence, static filtering must not filter facts $p(\vec{c})$ relevant to some $\naf p(\vec{x})$.

We therefore define more general initial filters for negated predicates.
For a rule $\arule = h(\vec{x})\leftarrow B_{\bar{\sigPredFilter}} \wedge B_{\bar{\sigPredFilter}}^- \wedge G_\sigPredFilter$
with $b(\vec{y})\in B_{\bar{\sigPredFilter}}^-$,
let $M_{b(\vec{y})} = \bigwedge\{ F\in\filterFormulas{\arity{b}}\mid G_\sigPredFilter \models \iota_{b(\vec{y})}(F)\}$,
and let $N_\arule^p = \bigvee \{ M_{p(\vec{y})} \mid p(\vec{y})\in B_{\bar{\sigPredFilter}}^- \}$ with $\bigvee\emptyset = \bot$.
To obtain a procedure for a program $\aprogram$, we can now modify Algorithm~\ref{alg_filter_pushing} so that
the formulas $\fnFilterExpr{p}$ are initialised in line \alglineref{line_fp_init} with
\begin{align}
	\fnFilterExpr{p} \coloneqq \begin{cases}
		\repr{\top} & \textit{if } p\in\sigPredOutput\\
		\repr{\bigvee \{ N_\arule^p \mid \arule \in \aprogram \}} & \textit{if } p\notin\sigPredStrat\\
		\repr{\bot} & \textit{otherwise.}
	\end{cases}\label{eq_sf_init_asp}
\end{align}
Note that $\fnFilterExpr{p} \equiv \bot$ for $p \notin \sigPredStrat$ if $p$ never occurs in a negated atom.
Predicates $p \notin \sigPredStrat$ can be initialised with $\repr{\bot}$ as before, but we have to
consider them in the iterative generalisation: we modify line~\alglineref{line_fp_bodyloop} to loop over all $b(\vec{y})$ with
$b(\vec{y}) \in B_{\bar{\sigPredFilter}}$ or $\naf b(\vec{y}) \in B_{\bar{\sigPredFilter}}^-$ for IDB predicate $b$.

Our remaining definitions require only minimal adaptations.
For a rule $\arule = h(\vec{x})\leftarrow B_{\bar{\sigPredFilter}}\wedge B_{\bar{\sigPredFilter}}^-\wedge G_\sigPredFilter$, a filter formula $\psi$ is admissible for $\arule$ if $\psi$ is admissible for $h(\vec{x})\leftarrow B_{\bar{\sigPredFilter}}\wedge G_\sigPredFilter$, and $h(\vec{x})\leftarrow B_{\bar{\sigPredFilter}}\wedge B_{\bar{\sigPredFilter}}^-\wedge \psi$ is an admissible rewriting of $\arule$.
An \emph{admissible rewriting} of a program $\aprogram$ is a set that contains an admissible rewriting of each rule of $\aprogram$.
We can use Algorithm~\ref{alg_admin_filter} unchanged as a practical way to find good admissible filters.
Our main correctness results shows \emph{visible equivalence} \cite{DBLP:journals/jancl/Janhunen06} between $\aprogram$ and $\aprogram'$:

\begin{restatable}{theorem}{theoAspRewritingCorrect}
	If $\aprogram'$ is an admissible rewriting of program $\aprogram$ for database $\adatabase$, then
	$\mu \colon \mathcal{A}\mapsto\{p(\vec{c}) \in \mathcal{A} \mid \vec{c} \in \fnFilterExpr{p}^\adatabase\}$
	is a bijection between $\stablemods{\aprogram}{\adatabase}$ and $\stablemods{\aprogram'}{\adatabase}$.
\end{restatable}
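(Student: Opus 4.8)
The plan is to carry the Datalog correctness argument (the proof of Theorem~\ref{thmFilterPropagationCorrectness}, which in fact shows that the model of the rewritten program equals $\{q(\vec{d})\in\mathcal{M}\mid\satisfy{\vec{d}}{\fnFilterExpr{q}}\}\cup\adatabase$) through the reduct. Fix $\mathcal{A}\in\stablemods{\aprogram}{\adatabase}$, so $\mathcal{A}$ is the least model of the Datalog program $\ground{\aprogram}^{\mathcal{A}}$ and $\adatabase$. I would show that $\mu(\mathcal{A})=\{q(\vec{d})\in\mathcal{A}\mid\satisfy{\vec{d}}{\fnFilterExpr{q}}\}\cup\adatabase$ is the least model of $\ground{\aprogram'}^{\mu(\mathcal{A})}$ and $\adatabase$, so that $\mu(\mathcal{A})\in\stablemods{\aprogram'}{\adatabase}$. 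This is the two-sided induction from the proof of Theorem~\ref{thmFilterPropagationCorrectness}, now observing two things: (i) the formulas $\fnFilterExpr{\cdot}$, although weakened at initialisation by \eqref{eq_sf_init_asp}, still satisfy the fixpoint/closure condition of Algorithm~\ref{alg_filter_pushing} relative to every ground instance of a rule of $\ground{\aprogram}^{\mathcal{A}}$; and (ii) $\ground{\aprogram'}^{\mu(\mathcal{A})}$ is obtained from $\ground{\aprogram}^{\mathcal{A}}$ by the very same filter replacement as $\aprogram'$ from $\aprogram$, since a rule and its admissible rewriting differ only in the filter conjunct and the reduct leaves that conjunct untouched.

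The point genuinely new to the non-monotonic case is deciding when a ground rule enters the two reducts, which depends on the negated atoms and hence on $\mathcal{A}$ versus $\mu(\mathcal{A})$. The key lemma is \emph{match compatibility}: for $\arule=h(\vec{x})\leftarrow\shBodyNormal\wedge\shBodyNaf\wedge G_\sigPredFilter\in\aprogram$ with admissible rewriting $h(\vec{x})\leftarrow\shBodyNormal\wedge\shBodyNaf\wedge\psi$, every substitution $\sigma$ with $\psi\sigma$ true over $\adatabase$ and $\satisfy{\vec{z}\sigma}{\fnFilterExpr{q}}$ for every IDB atom $q(\vec{z})\in\shBodyNormal$, and every $\naf b(\vec{y})\in\shBodyNaf$, satisfies $b(\vec{y}\sigma)\in\mathcal{A}$ iff $b(\vec{y}\sigma)\in\mu(\mathcal{A})$. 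Here ``$\Leftarrow$'' is immediate from $\mu(\mathcal{A})\subseteq\mathcal{A}$. For ``$\Rightarrow$'', admissibility of $\psi$ (Definition~\ref{def_admissible}) together with the assumption on the IDB atoms makes $F_-\sigma$ true, hence $F_+\sigma$, hence $\satisfy{\vec{x}\sigma}{\fnFilterExpr{h}}$ and $G_\sigPredFilter\sigma$ are true; at the fixpoint reached by Algorithm~\ref{alg_filter_pushing}, processing $\naf b(\vec{y})$ of $\arule$ via the extended body loop \alglineref{line_fp_bodyloop} leaves $\fnFilterExpr{b}$ unchanged, so $M^{\adatabase}\subseteq\fnFilterExpr{b}^{\adatabase}$ for $M=\bigwedge\{F\mid\iota_{h(\vec{x})}(\fnFilterExpr{h})\wedge G_\sigPredFilter\models\iota_{b(\vec{y})}(F)\}$, and since $\iota_{h(\vec{x})}(\fnFilterExpr{h})\sigma$ and $G_\sigPredFilter\sigma$ are true, $\vec{y}\sigma\in M^{\adatabase}$, i.e.\ $b(\vec{y}\sigma)\in\mu(\mathcal{A})$. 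With this lemma the two reducts agree on every rule instance that can fire while deriving a retained fact, so the Datalog induction goes through in both directions, proving $\mu(\mathcal{A})\in\stablemods{\aprogram'}{\adatabase}$.

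For \emph{injectivity}, suppose $\mu(\mathcal{A}_1)=\mu(\mathcal{A}_2)=\mathcal{B}$ with $\mathcal{A}_1,\mathcal{A}_2\in\stablemods{\aprogram}{\adatabase}$; I would show $\mathcal{A}_2$ is a model of $\ground{\aprogram}^{\mathcal{A}_1}$ and symmetrically, so $\mathcal{A}_1=\mathcal{A}_2$ by minimality of each. Take a rule of $\ground{\aprogram}^{\mathcal{A}_1}$ with body satisfied in $\mathcal{A}_2$ and suppose one of its negated atoms $\naf b(\vec{y})$ has $b(\vec{y}\sigma)\in\mathcal{A}_2\setminus\mathcal{A}_1$. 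If $b\in\sigPredStrat$, this cannot happen: every predicate on which a stratifiable predicate depends is again stratifiable, so the rules with head in $\sigPredStrat$ form a self-contained, stratified subprogram whose unique stable model is the $\sigPredStrat$-restriction of every stable model of $\aprogram$, hence $\mathcal{A}_1$ and $\mathcal{A}_2$ agree on $b$. If $b\notin\sigPredStrat$, initialisation \eqref{eq_sf_init_asp} puts the disjunct $M_{b(\vec{y})}=\bigwedge\{F\mid G_\sigPredFilter\models\iota_{b(\vec{y})}(F)\}$ of $N_{\arule}^{b}$ into $\fnFilterExpr{b}$, and since the reduct rule fires, $G_\sigPredFilter\sigma$ is true, so $\satisfy{\vec{y}\sigma}{\fnFilterExpr{b}}$, whence $b(\vec{y}\sigma)\in\mu(\mathcal{A}_2)=\mu(\mathcal{A}_1)\subseteq\mathcal{A}_1$ — a contradiction. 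For \emph{surjectivity}, given $\mathcal{B}\in\stablemods{\aprogram'}{\adatabase}$, I would recover a preimage by re-deriving the facts dropped by $\mu$: roughly, $\mathcal{A}$ is taken as the least model of a suitable reduct of $\ground{\aprogram}$ whose negated-atom conditions are read off from $\mathcal{B}$ (on retained tuples, and on all tuples of non-stratifiable predicates) and from the fixed stratified submodel $\mathcal{S}$ otherwise, after which one checks that this reduct is indeed $\ground{\aprogram}^{\mathcal{A}}$, that $\mathcal{A}\in\stablemods{\aprogram}{\adatabase}$, and that $\mu(\mathcal{A})=\mathcal{B}$, again using match compatibility and the stratification facts.

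The main obstacle is that the reduct refers to the whole candidate model while the rewriting keeps only its $\mu$-image, so one must show that the facts dropped by $\mu$ never change which reduct rules fire on retained facts and that they can be re-derived uniquely, creating no new choices. This is precisely what the split between $\sigPredStrat$ (fixed extension across all stable models) and its complement (where $N_{\arule}^{p}$ retains every tuple that can match a negated atom) is meant to guarantee, and getting the bookkeeping right — above all in the surjectivity construction, where $\aprogram'$ has genuinely discarded information — is the delicate part; the rest reuses the Datalog proof almost verbatim.
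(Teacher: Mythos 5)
Your overall architecture is sound and, for two of the three parts, coincides with the paper's: well-definedness of $\mu$ is exactly the paper's first lemma (that $\{p(\vec{c})\in\mathcal{A}\mid \vec{c}\in\fnFilterExpr{p}^\adatabase\}$ is stable for $\aprogram'$), and your ``match compatibility'' lemma is precisely the step used there --- the extended body loop over negated IDB atoms guarantees $\sigma(\vec{y})\in\fnFilterExpr{b}^\adatabase$, so membership in $\mathcal{A}$ and in $\mu(\mathcal{A})$ coincide on the relevant instances and the two reducts agree. Your injectivity argument, however, takes a genuinely different route: you show directly that $\mathcal{A}_2$ is closed under $\ground{\aprogram}^{\mathcal{A}_1}$ (and symmetrically) and conclude by minimality of each reduct's least model, using (a) that all stable models of $\aprogram$ agree on $\sigPredStrat$ (a splitting argument, which is correct because every predicate a stratifiable predicate depends on is itself stratifiable) and (b) that the initialisation \eqref{eq_sf_init_asp} retains every tuple matching a negated non-stratifiable atom of a rule whose filter $G_\sigPredFilter\sigma$ holds. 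The paper instead derives injectivity from its explicit stratum-by-stratum reconstruction (any stable preimage of a given $\mathcal{A}'$ must equal the constructed one). Your route is more elementary and decouples injectivity from surjectivity; the paper's buys a single construction serving both.

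The one place where your plan is materially thinner than what is needed is surjectivity. The construction you gesture at --- re-derive the dropped facts along the partial stratification, reading negated atoms off $\mathcal{B}$ for retained and non-stratifiable tuples and off the fixed stratified submodel otherwise --- is the right one, but two verifications carry real content and are not discharged by ``match compatibility'' alone. First, $\mu(\mathcal{A})=\mathcal{B}$ requires that every fact re-derived on top of $\mathcal{B}$ fails its filter; this needs its own induction along the derivation order (if a re-derived $h(\vec{c})$ satisfied $\fnFilterExpr{h}$, then all its IDB body atoms would already satisfy their filters, hence lie in $\mathcal{B}$, and the corresponding rewritten rule would have been applicable to $\mathcal{B}$ already, contradicting closure). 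Second, minimality of $\mathcal{A}$ with respect to its own reduct needs a case split on whether the ``first missing'' head lies in $\mathcal{B}$, is stratifiable, or is neither, since only in the first case can one fall back on minimality of $\mathcal{B}$ for $\aprogram'$. Both can be completed along the lines you indicate, but they are the bulk of the work rather than bookkeeping.
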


In particular, since $\fnFilterExpr{p}\equiv\top$ for output predicates $p$, the restrictions of 
$\stablemods{\aprogram}{\adatabase}$ and $\stablemods{\aprogram'}{\adatabase}$ to facts over output predicates coincide.

One can easily incorporate the ideas of Section~\ref{sec_fentailment} to obtain a tractable optimisation procedure for normal logic programs.

 \section{Related work}\label{sec_rel_work}

\paragraph*{Comparison with the original algorithm}
The filter computation of Kifer and Lozinskii \cite{KiferLozinskii:StaticFiltering90}
can be seen as a special case of our approach for a fixed choice of filter predicates
(binary equalities and inequalities, possibly involving constants) and
representation of filter formulas (in disjunctive normal form).
Predicates such as $\blank\narrow\predEquals\blank{+}1$ in Example~\ref{ex_bounded_reach}
are not considered as filters. The significance of our generalisation is witnessed by
exponential increases in complexity (Section~\ref{sec_sf_steps}), but also by the ability to
introduce tractable simplifications (Section~\ref{sec_fentailment}).
The general notion of \emph{admissibility} and Algorithm~\ref{alg_admin_filter} are also new.

Kifer and Lozinskii further include a similar method to propagate projections and remove unused predicate parameters.
This rewriting is simpler than filter propagation.
We have nothing to add to it but note that it is particularly
effective if static filtering is applied first. 
\begin{example}\label{ex_bounded_reach_sfrewrite_projected}
	Propagation of projections does not lead to any simplification for Example~\ref{ex_bounded_reach},
	but leads to the following rules with reduced arities for the rewriting of Example~\ref{ex_bounded_reach_sfrewrite}:
	\begin{align}
		r'(y,n) &\leftarrow e(x,y)\wedge n\narrow\predEquals 0\wedge x\narrow\predEquals \const{a} \label{rule_reach_init_opt_proj}\\
		r'(z,m) &\leftarrow r'(y,n)\wedge e(y,z)\wedge m\narrow\predEquals n{+}1 \wedge m\narrow\leq 5\label{rule_reach_step_opt_proj}\\
		\textit{out}(y) &\leftarrow r'(y,n) \label{rule_reach_out_opt_proj}
	\end{align}
	Instead of computing the distance of quadratically many pairs $x$ and $y$, only the distance from $a$ to linearly many $y$ is needed.
	In general, reducing predicate arities can have big performance advantages, as it may simplify data structures and execution plans.
\end{example}

\paragraph*{Comparison with static optimization techniques}

Kifer and Lozinskii have already compared their special case of static filtering to existing methods including magic sets \cite{DBLP:conf/pods/BancilhonMSU86,DBLP:conf/sigmod/MumickFPR90,DBLP:conf/sigmod/MumickP94},
and they have have already observed that the approaches by Walker \cite{walker1981syllog} and Gardarin et al.\@ \cite{DBLP:conf/db-workshops/GardarinMS85} are similar, yet less general.
Subsequently, Chang et al.\@ extended the original method of Kifer and Lozinskii
to programs with stratified negation, which is less general than our extension to ASP.

\emph{Constraint pushing} \cite{kemp1989propagating,DBLP:journals/jlp/SrivastavaR93} considers rules with constraint atoms, with a propagation scheme
similar to Algorithm~\ref{alg_filter_pushing}. However, the method might not terminate, as there can be infinitely many constraint atoms.

Zaniolo et al.\@ introduce \emph{pre-mappability (PreM)} as a sufficient condition
for pushing filters into recursive rules, identify some classes of PreM filters,
and use such filters to rewrite recursive programs with aggregates \cite{DBLP:journals/tplp/ZanioloYDSCI17}.
Our Algorithm~\ref{alg_filter_pushing} can be seen as a systematic method for producing pre-mappable filters 
($\fnFilterExpr{p}$ are pre-mappable). 

The \emph{FGH-rule} by Wang et al.\@ \cite{DBLP:conf/sigmod/WangK0PS22} defines a sufficient condition for rewriting
a program using given output predicates (or queries), and our admissile rewritings satisfy this condition
(which is true for any rewriting that produces the same output facts).
Hence, static filtering offers a tractable method to find FGH-rule conforming rewritings.

In general, static filtering promises to work well with some rewriting techniques such as projection propagation and pre-mappability \cite{DBLP:journals/tplp/ZanioloYDSCI17}, and it is unlikely that it interferes negatively with static optimisations techniques, since static filtering preserves the program structure.

\paragraph*{Comparison to magic sets and demand transformation}

Magic sets \cite{DBLP:conf/pods/BancilhonMSU86,DBLP:conf/sigmod/MumickFPR90,DBLP:conf/sigmod/MumickP94} and the closely related \emph{demand transformation} \cite{DBLP:conf/ppdp/TekleL10,DBLP:journals/corr/abs-1909-08246} are static optimisation methods that also aim at reducing inferences by making some rule bodies more selective,
and which may seem similar to static filtering on an intuitive level.
However, in almost all cases, their outputs are very different from ours, for the following reasons:
\begin{enumerate}
	\item Static filtering preserves the number of rules and the structure of their non-filter atoms.
		Magic sets and demand transformation always increase the number of rules and add rules that contain partial body joins.
	\item Static filtering cannot optimise programs that do not contain filter predicates.
		Demand transformation can be used on purely abstract programs.
	\item Static filtering uses symbolic reasoning over filters to simplify and summarise expressions, so rewritten rules may contain new derived filters.
		Magic sets and demand transformation foresee no mechanism for integrating any symbolic knowledge about existing EDB predicates, so rewritten rules are always based on copies of EDB atoms that are syntactic parts of the program.
	\item Static filtering is idempotent (optimised programs are not rewritten further).
		The transformation by magic sets and demand transformation always changes the program, even if applied to its own output.
	\item Static filtering makes use of recursive rules for recursively generalising filter expressions. 
		Demand transformation in turn supports rewritings of IDB predicates that are defined by recursive rules.
	\item Static filtering includes a simplification step that removes filters that have become redundant after pushing (via \emph{admissibility}, Def.~\ref{def_admissible}). 
		Magic sets and demand transformation have no mechanism to detect possible simplifications.
\end{enumerate}
Therefore, we do not see any general principle to obtain the benefits of static filtering from magic sets or demand transformation,
even in special cases or with further adjustments.
Rather, the methods are complementary and can be used together, where static filtering should go first since any
reduction in body atoms can reduce the cost of the other transformation.
 \section{Conclusions}\label{sec_conclusions}

``It is folk wisdom that the right concepts are rediscovered several times'' is how
Kifer and Lozinskii started their conclusions \cite{KiferLozinskii:StaticFiltering90}.
In our work, we have revisited and generalised their original approach to static filtering,
presented a tractable simplification, and shown how to extend its use to Datalog with stratified negation and ASP.
Our framework lets implementers control which filters to push and which logical interactions to consider, 
and thus to avoid cases where the optimisation might cost more than it saves.
Since static filtering also preserves rule and proof structures, it plays well with other
optimisations and may even boost them (as in Example~\ref{ex_bounded_reach_sfrewrite_projected}).
It truly seems the ``right concept'' for many uses, in particular for data-oriented applications where logic programs
play the role of queries over potentially large datasets.
Thanks to the generality of our framework and the presented simplifications,
we are confident that any rule-based system can find a sweet spot where a small amount of effort can yield
decisive advantages at least in some cases.

Besides speeding up today's programs, however, we should also look for new uses ahead.
One is modularisation, since re-usable logic programming libraries
cannot be optimised manually for (yet unknown) usage contexts.
Another is termination, for arithmetic features as shown in Example~\ref{ex_bounded_reach_sfrewrite}, 
but also for rule languages with function symbols or existential quantifiers. 
These and other directions merit further research.

\bibliography{references}
	
	\clearpage
	\appendix

\section{Proofs for Section~\ref{sec_sf}}

\thmFilterPropagationCorrectness*
\begin{proof}
	Let $\mathcal{M}$ be the model of $\aprogram$ and $\adatabase$ and let $\mathcal{M'} = \{ q(\vec{d}) \in \mathcal{M} \mid \satisfy{\vec{d}}{\fnFilterExpr{q}} \} \cup \adatabase$.
	We show that $\mathcal{M'}$ is the model of $\aprogram'$ and $\adatabase$.
	
	By definition, $\adatabase \subseteq \mathcal{M'}$.
	
	$\mathcal{M'}$ is closed under $\aprogram'$:
	Let $\arule' = h(\vec{x}) \leftarrow \shBodyNormal \land \psi \in \aprogram'$ be the admissible rewriting of $\arule = h(\vec{x}) \leftarrow \shBodyNormal \land \shBodyFilterG \in \aprogram$.
Let $\sigma$ be a mapping such that $\shBodyNormal\sigma \subseteq \mathcal{M'}$ and $\psi\sigma \subseteq \adatabase$.
	For $b(\vec{y}) \in \shBodyNormal$ with IDB predicate $b$, we have $b(\sigma(\vec{y})) \in \mathcal{M'} \setminus \adatabase$ and $\satisfy{\sigma(\vec{y})}{\fnFilterExpr{b}}$.
	Let $F_-$ and $F_+$ be defined as in Def.~\ref{def_admissible}.
	We have $(F_- \land \psi)\sigma \subseteq \adatabase$.
	By admissibility of $\arule'$, we get $F_+\sigma \subseteq \adatabase$.
	In particular, $\satisfy{\sigma(\vec{x})}{\fnFilterExpr{h}}$ and $\shBodyFilterG\sigma \subseteq \adatabase$.
	Hence, $(\shBodyNormal \land \shBodyFilterG)\sigma \subseteq \mathcal{M'} \subseteq \mathcal{M}$ and $h(\sigma(\vec{x})) \in \mathcal{M}$, as $\mathcal{M}$ is the model of $\aprogram$ and $\adatabase$.
	Since $\satisfy{\sigma(\vec{x})}{\fnFilterExpr{h}}$, $h(\sigma(\vec{x})) \in \mathcal{M}'$, i.e., $\mathcal{M'}$ is closed under rule applications in $\aprogram'$.
	
	Minimality of $\mathcal{M'}$:
	For a contradiction, suppose there is a non-empty set $\mathcal{N}\subseteq\mathcal{M'}$ such that
	$\mathcal{M'}_- = \mathcal{M'} \setminus \mathcal{N}$ is also a model of $P'$ and $\adatabase$.
	In particular, $\adatabase\subseteq\mathcal{M'}_-$, so $\mathcal{N} \cap \adatabase = \emptyset$.
	Let $\mathcal{M}_- = \mathcal{M} \setminus \mathcal{N}$.
	Since $\mathcal{M}$ is the model of $\aprogram$ and $\adatabase$,
	there is a rule $\arule = h(\vec{x}) \leftarrow \shBodyNormal \land \shBodyFilterG \in \aprogram$ that is not satisfied by $\mathcal{M}_-$,
	i.e., there is a mapping $\sigma$ such that
	$h(\sigma(\vec{x})) \in \mathcal{N}$ and $(\shBodyNormal \land \shBodyFilterG)\sigma \subseteq \mathcal{M}\setminus \mathcal{N}$.
Since $h(\sigma(\vec{x})) \in \mathcal{N} \subseteq \mathcal{M'}\setminus \adatabase$, we have $\satisfy{\sigma(\vec{x})}{\fnFilterExpr{h}}$
	and therefore $\iota_{h(\vec{x})}(\fnFilterExpr{h})\sigma \subseteq \adatabase$ (A).
	Moreover, $\shBodyFilterG\sigma \subseteq \adatabase$ since $\arule$ is applicable in $M_-$ (B).
	
	Now consider an arbitrary $b(\vec{y}) \in \shBodyNormal$, and let $G$ and $M$ be defined as in Algorithm~\ref{alg_filter_pushing}.
	Then $G\sigma \subseteq \adatabase$ (by line \eqref{line_fp_matchfilter} with (A) and (B));
	$G \models \iota_{b(\vec{y})}(M)$ (by line \eqref{line_fp_bodyatomfilter}); and
	$\iota_{b(\vec{y})}(M) \models \iota_{b(\vec{y})}(\fnFilterExpr{b})$ (by line \eqref{line_fp_filterunion});
	therefore we have $\iota_{b(\vec{y})}(\fnFilterExpr{b})\sigma \subseteq \adatabase$, i.e., $\satisfy{\sigma(\vec{y})}{\fnFilterExpr{b}}$.
	Hence, $b(\sigma(\vec{y})) \in \mathcal{M}_-'$, and since $b(\vec{y})$ was arbitrary $\shBodyNormal\sigma \subseteq \mathcal{M}_-'$.
	Let $F_+$ be defined as in Def.~\ref{def_admissible}.
	Since $F_+\sigma \subseteq \adatabase$ and $\arule'$ is an admissible rewriting, $\psi\sigma \subseteq \adatabase$.
	Therefore, $\mathcal{M}_-'$ is not closed under application of $\arule'$ and $\sigma$, which yields the required contradiction.
\end{proof}

\section{Proofs for Section~\ref{sec_sf_steps}}

\propExpSfDoublyExp*

\begin{proof}
	For $\ell\geq 1$, we assume that $\sigCons$ contains constants $0,1\in\sigCons$ and strings in $\{0,1\}^{2^\ell}\subseteq\sigCons$ (or constants that can be interpreted as such, e.g., by taking the binary expansion of natural numbers).
	For $d\in\{0,1\}$, we define several filter predicates of the indicated arities as follows:
	\begin{itemize}
		\item $\adatabase\models\textit{max}(s)$ if $s$ is the string of $2^\ell$ repetitions of $1$, 
		\item $\adatabase\models\textit{is}_d(p_1,\ldots,p_\ell,s)$ if $s$ is a string that has
		$d$ at position with binary encoding $p_1\cdots p_n$,
		\item $\adatabase\models\textit{last}_d(p_1,\ldots,p_\ell, s)$ if the last occurrence of 
		$d$ in $s$ is at position with binary encoding $p_1\cdots p_n$,
		\item $\adatabase\models\textit{same}(p_1,\ldots,p_\ell, s, s')$ if strings $s$ and $s'$ agree on all symbols 
		at positions strictly smaller than $p_1\cdots p_n$.
	\end{itemize}
Consider the following program with variables $a,b,s,t,\vec{x}$:
\begin{align}
		n(s,a,b) & \leftarrow e(s)\wedge d(a)\wedge d(b) \label{eq_ex_doublyexp_in}\\
		n(s,a,b) &\textstyle \leftarrow n(t,a,b)\wedge\bigwedge_{i=1}^\ell d(x_i)\wedge{} \textit{last}_0(\vec{x},t)\wedge \textit{last}_1(\vec{x},s)\wedge \textit{same}(\vec{x},s,t) \label{eq_ex_doublyexp_step}\\
		\textit{out}(s) & \leftarrow n(s,a,b)\wedge \textit{max}(s)\wedge a\approx 0\wedge b\approx 1 \label{eq_ex_doublyexp_out}
	\end{align}
In the first iteration of Algorithm~\ref{alg_filter_pushing}, rule \eqref{eq_ex_doublyexp_out} yields
	$\fnFilterExpr{n}=F_0\in\filterFormulas{3}$, where $F_0$ is a conjunction that includes $\numpos{2}\narrow\approx 0$, $\numpos{3}\narrow\approx 1$, and
	exponentially many atoms of the form $\textit{is}_1(\vec{p},\numpos{1})$, where $\vec{p}$ is a list of $\ell$ positional markers ${\in}\{\numpos{2},\numpos{3}\}$.
	Note how the second and third parameter of $n$ is required for these atoms to be expressible.
	
	In the next iteration, for rule \eqref{eq_ex_doublyexp_step} we consider the conjunction of $F_0$ with the filter formula given in the body.
Since $F_0$ completely determines $\numpos{1}=\tonumpos{s}$ to represent the string $1\cdots 1$,
	the atom $\textit{last}_1(\tonumpos{\vec{x}},\tonumpos{s})=\textit{last}_1(\numpos{5},\ldots,\numpos{\ell{+}4},\numpos{1})$ entails
	$\numpos{i}\narrow\approx 1$ for all $\numpos{i}\in\{\numpos{5},\ldots,\numpos{\ell{+}4}\}$.
	Therefore, from $\textit{last}_0(\vec{x},t)$ and $\textit{same}(\vec{x},s,t)$, we conclude
	$\textit{is}_d$ atoms that express that $t$ is a string of the form $1\cdots 10$. This information can be projected to update $\fnFilterExpr{n}$, which now admits two strings.
	Continuing this process, $\fnFilterExpr{n}$ eventually becomes equivalent to a disjunction over conjunctions characterising all strings over $\{0,1\}$ of length $2^\ell$.
	There are $2^{2^\ell}$ such strings, so Algorithm~\ref{alg_filter_pushing} requires this many iterations.
\end{proof}

\propExpSfExpPunary*
\begin{proof}
	For $\ell\geq q$, we assume that $\sigCons$ contains constants $0,1\in\sigCons$ and strings in $\{0,1\}^\ell\subseteq\sigCons$ (or constants that can be interpreted as such, e.g., by taking the binary expansion of natural numbers).
	We define a unary filter predicate $\predOddK{k}$ such that $\adatabase\models\predOddK{k}(s)$ if $s$ is a string that contains $1$ at position $k$, and
	similarly for predicate $\predEvenK{k}$.
	Moreover, let $\adatabase\models\textit{same}_k(s, s')$ if strings $s$ and $s'$ agree on all symbols 
	at positions strictly smaller than $k$.
\begin{align}
		n(s) & \leftarrow e(s) \label{eq_ex_expunary_in}\\
		n(s) &\textstyle \leftarrow n(t)\wedge\predOddK{k}(s)\wedge\bigwedge_{i=k+1}^{\ell} \predEvenK{i}(s) \wedge \predEvenK{k}(t)\wedge\bigwedge_{i=k+1}^{\ell} \predOddK{i}(t)\wedge \textit{same}_k(s,t) \label{eq_ex_expunary_step}\\
		\textit{out}(s) & \textstyle\leftarrow n(s)\wedge \bigwedge_{i=1}^{\ell} \predOddK{i}(s) \label{eq_ex_expunary_out}
	\end{align}
where rule \eqref{eq_ex_expunary_step} is instantiated for all $k\in\{1,\ldots,\ell\}$.
	Algorithm~\ref{alg_filter_pushing} proceeds step by step, as in the proof for Proposition~\ref{prop_sf_doubly_exp}, but over
	merely exponentially many strings, which can be addressed by the polynomially many filter predicates.
\end{proof}

\section{Proofs for Section~\ref{sec_fentailment}}

\propFilterLogUndec*
\begin{proof}
	Let $\natnums \subseteq \sigCons$, 
	and let $(\blank\narrow\approx d)^{\adatabase}=\{d\}$ with $d \in \natnums$, 
	$(\blank\narrow\approx \blank + \blank)^\adatabase=\{\tuple{a,b,c} \in \natnums^3 \mid a = b + c \}$, 
	and $(\blank\narrow\approx \blank \cdot \blank)^\adatabase=\{\tuple{a,b,c} \in \natnums^3 \mid a = b \cdot c \}$ be predicates in $\sigPredFilter$.
	Let $f = g$ be a Diophantine equation with polynomials $f = f(\vec{x})$ and $g = g(\vec{y})$ with coefficients in $\natnums$.
	
	For a polynomial $p(\vec{v})$, the set $T_{p(\vec{v})}$ of arithmetic terms is the smallest set such that
	(i) $p(\vec{v}) \in T_{p(\vec{v})}$,
	(ii) if $(t \cdot u) \in T_{p(\vec{v})}$, then $t, u \in T_{p(\vec{v})}$, and
	(iii) if $(t + u) \in T_{p(\vec{v})}$, then $t, u \in T_{p(\vec{v})}$.
	Let $T = T_{f} \cup T_{g} \cup \{ 0 \}$, and
	let $\sigma \colon T \to \sigNumPos_{\card{T}}$ be a bijection.
	We define a translation $r \colon T \to \filterFormulas{\card{T}}$ inductively:
	\begin{align*}
		r(t) = \begin{cases}
			\sigma(t) \narrow\approx t & \textit{if } t \in \natnums,\\
			\top & \textit{if } t \in \vec{x} \cup \vec{y},\\
			(t \narrow\approx u \narrow+ v)\sigma \land  r(u) \land r(v) & \textit{if } t = (u + v),\\
			(t \narrow\approx u \narrow\cdot v)\sigma \land r(u) \land r(v) & \textit{if } t = (u \cdot v)
		\end{cases}
	\end{align*}
	
	We show via structural induction that,
	for term $t(\vec{v})$ and $\vec{n} \in \natnums^{\card{T}}$, we have $\vec{n} \in r(t)^\adatabase$ iff, for all $u = u(\vec{z}) \in T_t$, we have $n_{\sigma(u)} = u(n_{\sigma(z_1)},\ldots,n_{\sigma(z_{\card{\vec{z}}})})$:
	\begin{itemize}
		\item $t \in \natnums$: $r(t)^\adatabase = (\sigma(t) \narrow\approx t)^\adatabase = \{ \vec{n} \in \natnums^{\card{T}} \mid n_{\sigma(t)} \in  (\blank \narrow\approx t)^\adatabase \} = \{ \vec{n} \in \natnums^{\card{T}} \mid n_{\sigma(t)} = t \}$.
		\item $t = v \in \vec{v}$: By $t = t(v) = v$, we have that $n_{\sigma(t)} = t(n_{\sigma(t)})$ is true for all $\vec{n}$, i.e., $r(t) \equiv \top$.
		\item $t(\vec{z}) = (u(\vec{x}) + v(\vec{y}))$: Let $\vec{n} \in \natnums^{\card{T}}$.
			By induction, $\vec{n} \in (r(u) \land r(v))^\adatabase$ iff for all $w = w(\vec{z}) \in T_t \setminus \{ t \}$, we have $n_{\sigma(w)} = w(n_{\sigma(z_1)},\ldots,n_{\sigma(z_{\card{\vec{z}}})})$.
			Since we have $t(n_{\sigma(z_1)},\ldots,n_{\sigma(z_{\card{\vec{z}}})}) = u(n_{\sigma(x_1)},\ldots,n_{\sigma(x_{\card{\vec{x}}})}) + v(n_{\sigma(y_1)},\ldots,n_{\sigma(y_{\card{\vec{y}}})}) = n_{\sigma(u)} + n_{\sigma(v)}$, we have $\vec{n} \in ((t \narrow\approx u + v)\sigma)^\adatabase$ iff $n_{\sigma(t)} = n_{\sigma(u)} + n_{\sigma(v)}$ iff $n_{\sigma(t)} = t(n_{\sigma(z_1)},\ldots,n_{\sigma(z_{\card{\vec{z}}})})$.
		\item $t(\vec{z}) = (u(\vec{x}) \cdot v(\vec{y}))$: Analogously to the previous case (with $\cdot$ instead of $+$).
	\end{itemize}
	
	Let $F = r(f) \land r(g) \land (f \narrow\approx g + 0)\sigma \land \sigma(0)\narrow\approx0 $ and $G = \bot$.
	If $\vec{n} \in F$, then $\mu \colon \vec{x} \cup \vec{y} \to \natnums \colon z \mapsto n_{\sigma(z)}$ is a solution for $f = g$. 
	If $\mu \colon \vec{x} \cup \vec{y} \to \natnums$ is a solution for $f = g$, then $\vec{n} \in F^\adatabase$ with $n_i = \mu(t)$ for $t$ with $\sigma(t) = \numpos{i}$.
	Hence, $F \models G$ iff $f = g$ has no solution over $\natnums$.
\end{proof}

\lemmaApproxSfCorrectTerm*
\begin{proof}
	Termination within the bounds of Theorem~\ref{thm_sf_syntax_upper_bound} can be shown exactly as Theorem~\ref{thm_sf_syntax_upper_bound} itself.
	
	Since ${\approxmodels} \subseteq {\models}$, i.e., for formulas $F, G$, we have $G \approxmodels F$ implies $G \models F$,
	any admissible rewriting for $\approxmodels$ is admissible for $\models$ and
	the second part of this lemma can be shown almost exactly as Theorem~\ref{thmFilterPropagationCorrectness}.
	Let $\mathcal{M}$ be the model of $\aprogram$ and $\adatabase$ and let $\mathcal{M'} = \{ q(\vec{d}) \in \mathcal{M} \mid \satisfy{\vec{d}}{\fnFilterExpr{q}} \} \cup \adatabase$.
	We show that $\mathcal{M'}$ is the model of $\aprogram'$ and $\adatabase$ as for Theorem~\ref{thmFilterPropagationCorrectness}:
	\begin{itemize}
		\item By definition, $\adatabase \subseteq \mathcal{M'}$.
		\item Showing that $\mathcal{M'}$ is closed under $\aprogram'$ is independent of the approximation $\approx$,
			and we can show it as before.
		\item Showing the minimality of $\mathcal{M'}$ requires that, for $b(\vec{y}) \in \shBodyNormal$ and $G$ and $M$ defined as in Algorithm~\ref{alg_filter_pushing}, $G \models \iota_{b(\vec{y})}(M)$ and $M \models \fnFilterExpr{b}$.
		The approximation based on $\approxmodels$ preserves these properties, and the argumentation for minimality of $\mathcal{M'}$ in the proof of Theorem~\ref{thmFilterPropagationCorrectness} remains valid.\qedhere
	\end{itemize}
\end{proof}

\thmCasfCorrect*
\begin{proof}
	We observe that allowing only conjunctions in filter formulas leads to more general filters,
	since, for $\atom{b} = b(\vec{y}) \in \shBodyNormal$ and $G$ and $M$ defined as in Algorithm~\ref{alg_filter_pushing}, we have
	\begin{align}
		\repr{\fnFilterExpr{b}\shortvee M} \,{\models}\, \bigwedge\{A\,{\in}\,\mathcal{A} \,{\mid}\, \iota_{\atom{b}}(\fnFilterExpr{b})\shortvee G\,{\approxmodels}\,\iota_{\atom{b}}(A) \}\label{eq_conjunction_generalises}
	\end{align}
	where $\mathcal{A}=\{\bot\}\cup\sigPredFilter[\arity{b}]$, and we assume that conjunctions are represented as subsets of $\mathcal{A}$
	with $\bigwedge\emptyset=\top$.
	
	Hence, the theorem can be shown almost exactly as Theorem~\ref{thmFilterPropagationCorrectness}.
	Let $\mathcal{M}$ be the model of $\aprogram$ and $\adatabase$ and let $\mathcal{M'} = \{ q(\vec{d}) \in \mathcal{M} \mid \satisfy{\vec{d}}{\fnFilterExpr{q}} \} \cup \adatabase$.
	We show that $\mathcal{M'}$ is the model of $\aprogram'$ and $\adatabase$ as for Theorem~\ref{thmFilterPropagationCorrectness}:
	\begin{itemize}
		\item By definition, $\adatabase \subseteq \mathcal{M'}$.
		\item Showing that $\mathcal{M'}$ is closed under $\aprogram'$ is unaffected by allowing only conjunction in filter formulas.
		\item To show minimality of $\mathcal{M'}$, we use the same arguments as in the proof of Theorem~\ref{thmFilterPropagationCorrectness}, together with \eqref{eq_conjunction_generalises}.
		\qedhere
	\end{itemize}
\end{proof}

\thmCasfTractable*
\begin{proof}
	Since the arity of predicates in $\sigPredFilter$ is bounded, the set of filter atoms
	$\sigPredFilter[\arity{b}]$ is polynomial over the positional markers $\sigNumPos_k$ for any arity
	$k$ of a non-filter predicate.
	
	Claim: Whenever \eqref{eq_sf_fltconj} is applied to update $\fnFilterExpr{b}$ to $\fnFilterExpr{b}'$,
	we have $\fnFilterExpr{b}\models\fnFilterExpr{b}'$.
	Let $C_b,C_b'\subseteq\mathcal{A}$ denote the respective sets of conjuncts.
	For a Horn approximation $\mathcal{T}$, the condition in \eqref{eq_sf_fltconj} 
	is equivalent to (a) $\iota_{b(\vec{y})}(\fnFilterExpr{b})\approxmodels\iota_{b(\vec{y})}(A)$ and (b) $G\approxmodels\iota_{b(\vec{y})}(A)$
	holding individually, where (a) can be simplified to $\fnFilterExpr{b}\approxmodels A$.
	Since $\mathcal{T}$ is a Horn approximation of $\approxmodels$, 
	$F\approxmodels B$ for a conjunction $F$ is equivalent to $B$ being a logical consequence $\mathcal{T}$ and $F$.
	Therefore, $C_b'$ as computed in \eqref{eq_sf_fltconj} is closed under logical consequences
	from $\mathcal{T}$, i.e., if $A$ is a consequence of $C_b'$ and $\mathcal{T}$, then $A\in C_b'$.
	In particular, (a) further simplifies to $A\in C_b$, which shows $C_b'\subseteq C_b$ and
	therefore shows the claim.
	
	With filter formulas $\fnFilterExpr{b}$ corresponding to strictly decreasing sets of conjuncts from a
	polynomial set of filter atoms, the algorithm must terminate in polynomially many iterations.
	The computation in \eqref{eq_sf_fltconj} reduces to polynomially many checks of $\approxmodels$ relationships,
	each of which can be split into parts (a) and (b). Since (a) was found to be equivalent to $A\in C_b$,
	it can clearly be checked in polynomial time.
	For the remaining check (b) $G\approxmodels\iota_{b(\vec{y})}(A)$, we distinguish the cases in the theorem.
	
	Case 1: If $\mathcal{T}$ is a linear approximation, then we recursively compute a set $\mathcal{S}$ of ``necessarily false'' filter atoms
	by initialising $\mathcal{S}=\{\iota_{b(\vec{y})}(A)\}$ and applying rules of $\mathcal{T}$ backwards: 
	if rule $H\leftarrow B\in\mathcal{T}$ can be instantiated with a substitution $\sigma$ such that $H\sigma\in\mathcal{S}$, then the atom $B\sigma$ is
	added to $\mathcal{S}$. When this computation terminates, we create an expression $G'$ by replacing every atom $B$ in $G$ with $\bot$ if $B\in\mathcal{S}$,
	and with $\top$ if $B\notin\mathcal{S}$. The expression $G'$ only uses $\top$, $\bot$, $\wedge$, and $\vee$, and can be simplified to either $\top$ or $\bot$
	in polynomial time. If the result is $\top$, then $G\approxmodels\iota_{b(\vec{y})}(A)$ is true; otherwise it is false.
	
	Case 2: If the filter expressions $G_\sigPredFilter$ in $P$ do not contain $\vee$, then check (b) $G\approxmodels\iota_{b(\vec{y})}(A)$
	corresponds to a Datalog entailment check, since $G$ is a conjunction of filter atoms that one merely has to evaluate $\mathcal{T}$ over to
	decide (b). Since $\mathcal{T}$ is fixed, the complexity of this check is polynomial (the data complexity of Datalog).
\end{proof}

\propHornApproxCoNP*
\begin{proof}
	We reduce from the $\coNP$-complete problem of propositional logic unsatisfiability.
	Consider a propositional logic formula $\psi$ over propositional variables $p_1,\ldots,p_n$
	in negation normal form.
	For each $i\in\{1,\ldots,n\}$, consider unary filter predicates $t_i,f_i$ and
	set $r(p_i)=t_i(\numpos{1})$ and $r(\neg p_i)=f_i(\numpos{1})$.
	A filter formula $r(\psi)$ is obtained from $\psi$ by replacing each negated variable $\neg p_i$ by
	$r(\neg p_i)$ and each non-negated variable $p_i$ by $r(p_i)$.
	Then let $G=\bigwedge_i(r(p_i)\vee r(\neg p_i))\wedge r(\psi)$
	and let $\mathcal{T}=\{b(x)\leftarrow t_i(x)\wedge f_i(x)\mid 1\leq i\leq n\}$ for a filter predicate $b$
	not used elsewhere.
	
	Then $A$ is a consequence of $G$ and $\mathcal{T}$ iff $\psi$ is unsatisfiable.
	Indeed, if $\psi$ is satisfied by assignment $\beta$, we define
	$\beta'(t_i(\numpos{1}))=\mathit{true}$ iff $\beta(p_i)=\mathit{true}$,
	$\beta'(f_i(\numpos{1}))=\mathit{true}$ iff $\beta(p_i)=\mathit{false}$,
	and $\beta'(A)=\mathit{false}$.
	Then $\beta'$ satisfies $G$ and $\mathcal{T}$ by construction,
	so $A$ is indeed not a consequence of $G$ and $\mathcal{T}$.
	
	Conversely, if there is an assignment $\beta'$ that satisfies $G$ and
	$\mathcal{T}$ but not $A$, then we can find a satisfying assignment
	$\beta$ for $\psi$ by setting $\beta(p_i)=\mathit{true}$ iff $\beta'(t_i(\numpos{1}))=\mathit{true}$.
	Since $\beta'$ does not satisfy $A$, the premises of all rules
	of $\mathcal{T}$ are not satisfied either, hence
	$\beta'(t_i(\numpos{1}))=\mathit{true}$ and $\beta'(f_i(\numpos{1}))=\mathit{false}$ are not
	both true for any $i$. The fact that one of the two must be true follows since $G$ is satisfied.
\end{proof}

\section{Proofs for Section~\ref{sec_negation}}

We first state an auxiliary result for plain Datalog.

\begin{lemma}\label{lemmaAdmissibilityStrongEnough}
	Let $\aprogram$ be a Datalog program, let $\aprogram'$ be an admissible rewriting of $\aprogram$, and let $\sigPredIDB$ be the IDB predicates of $\aprogram$.
	For $p(\vec{c})$ with $p \in \sigPredIDB$, if $\aprogram',\adatabase \models p(\vec{c})$, then $\vec{c} \in \fnFilterExpr{p}^\adatabase$.
	
	This result also holds for any choice of filter formulas $\fnFilterExpr{p}$, even if not computed by applying Algorithm~\ref{alg_filter_pushing} to $\aprogram$,
	as long as the conditions of admissibility hold with these filters.
\end{lemma}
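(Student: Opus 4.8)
The plan is to show that the facts derivable from $\aprogram'$ and $\adatabase$ for IDB predicates are always consistent with the computed filters, by induction on the structure of a derivation (proof tree) over $\aprogram'$ and $\adatabase$. Concretely, I would argue that the set $\mathcal{M}' = \{ q(\vec{d}) \in \mathcal{M} \mid \vec{d} \in \fnFilterExpr{q}^\adatabase \} \cup \adatabase$ — where $\mathcal{M}$ is the model of $\aprogram$ and $\adatabase$ — is closed under $\aprogram'$, and since the model of $\aprogram'$ and $\adatabase$ is the least such closed set, it is contained in $\mathcal{M}'$. Once that containment holds, every fact $p(\vec{c})$ with $p \in \sigPredIDB$ and $\aprogram',\adatabase \models p(\vec{c})$ lies in $\mathcal{M}'$, hence either $p(\vec{c}) \in \adatabase$ (impossible, since $p$ is IDB and $\adatabase$ only contains EDB facts) or $\vec{c} \in \fnFilterExpr{p}^\adatabase$, which is the claim.

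The key step is therefore the closure argument, which is essentially the ``$\mathcal{M}'$ is closed under $\aprogram'$'' part already carried out inside the proof of Theorem~\ref{thmFilterPropagationCorrectness}, and I would lift it verbatim. Take a rule $\arule' = h(\vec{x}) \leftarrow \shBodyNormal \wedge \psi \in \aprogram'$, the admissible rewriting of $\arule = h(\vec{x}) \leftarrow \shBodyNormal \wedge \shBodyFilterG$, and a mapping $\sigma$ with $\shBodyNormal\sigma \subseteq \mathcal{M}'$ and $\psi\sigma \subseteq \adatabase$. Every IDB atom $b(\vec{y}) \in \shBodyNormal$ then satisfies $\vec{y}\sigma \in \fnFilterExpr{b}^\adatabase$, so $\iota_{b(\vec{y})}(\fnFilterExpr{b})\sigma \subseteq \adatabase$; collecting these gives $F_-\sigma \subseteq \adatabase$, and together with $\psi\sigma \subseteq \adatabase$ and admissibility ($\psi \wedge F_- \models F_+$) we get $F_+\sigma \subseteq \adatabase$, i.e., $\vec{x}\sigma \in \fnFilterExpr{h}^\adatabase$ and $\shBodyFilterG\sigma \subseteq \adatabase$. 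Hence $(\shBodyNormal \wedge \shBodyFilterG)\sigma \subseteq \mathcal{M}$, so $h(\vec{x})\sigma \in \mathcal{M}$ by closure of $\mathcal{M}$ under $\aprogram$, and since $\vec{x}\sigma \in \fnFilterExpr{h}^\adatabase$ we conclude $h(\vec{x})\sigma \in \mathcal{M}'$. This establishes closure.

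For the final sentence of the statement — that the result holds for any admissible choice of filter formulas $\fnFilterExpr{p}$, not only those produced by Algorithm~\ref{alg_filter_pushing} — I would note that the closure argument above only ever uses the two admissibility inequalities $F_+ \models \psi$ and $\psi \wedge F_- \models F_+$ from Definition~\ref{def_admissible}, with $F_+$ and $F_-$ built from whatever filters $\fnFilterExpr{p}$ are fixed; it never invokes any property specific to the output of the algorithm (such as $\fnFilterExpr{p}$ being a least fixpoint, or the $\iota$-entailments of lines \alglineref{line_fp_bodyatomfilter}--\alglineref{line_fp_filterunion}). So the same proof goes through unchanged. The main (very mild) obstacle is just bookkeeping: making sure the normal-form assumptions (no constants, no repeated variables in atoms) let us freely identify $\iota_{b(\vec{y})}(\fnFilterExpr{b})\sigma \subseteq \adatabase$ with $\vec{y}\sigma \in \fnFilterExpr{b}^\adatabase$, and being careful that the base case ($b(\vec{y})$ an EDB atom, or the trivial derivation of an EDB fact) is handled separately since the lemma's conclusion is only claimed for IDB predicates.
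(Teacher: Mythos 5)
Your proposal is correct and follows essentially the same approach as the paper: define the filtered set $\mathcal{M}'$, show it is closed under $\aprogram'$ using only the admissibility condition $\psi \wedge F_- \models F_+$, and conclude by minimality of the least model of $\aprogram'$. The only (immaterial) difference is that you filter the least model of the original program $\aprogram$ (reusing the closure step from the proof of Theorem~\ref{thmFilterPropagationCorrectness}), whereas the paper's proof of this lemma filters the least model of $\aprogram'$ itself and concludes $\inter{M}=\inter{M'}$; both variants rest on the identical closure argument and both correctly observe that no algorithm-specific property of $\fnFilterExpr{p}$ is needed.
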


\begin{proof}
	Let $\inter{M}$ be the model of $P'$ and $\adatabase$.
	Let $\inter{M'} \coloneqq \{ p(\vec{t}) \in \inter{M} \mid \vec{t} \in \fnFilterExpr{p}^\adatabase \} \cup \adatabase$.
	Since $\adatabase \subseteq \inter{M}$, we have $\inter{M'} \subseteq \inter{M}$.
	
	Let $\arule' = h(\vec{x}) \leftarrow \shBodyNormal \land \pi_\psi \in \aprogram'$ be an admissible rewriting of $\arule \in \aprogram$ and
	let $\sigma$ be a mapping such that $\shBodyNormal\sigma \subseteq \inter{M'}$ and $\psi\sigma \subseteq \adatabase$.
	For each $b(\vec{y}) \in \shBodyNormal$ with $b \in \sigPredIDB$, $b(\sigma(\vec{y})) \notin \adatabase$ and, hence, $\sigma(\vec{y}) \in \fnFilterExpr{b}^\adatabase$.
	Let $F_-$ and $F_+$ be as in Def.~\ref{def_admissible}.
	We have $F_-\sigma \subseteq \adatabase$.
	By admissibility,  $F_+\sigma \subseteq \adatabase$.
	In particular, $\sigma(\vec{x}) \in \fnFilterExpr{h}^\adatabase$.
	Since $\inter{M}$ is model of $P'$ and $\adatabase$, we have $h(\sigma(\vec{x})) \in \inter{M}$ and, therefore, $h(\sigma(\vec{x})) \in \inter{M'}$.
	This shows that $\inter{M'}$ satisfies $\aprogram$ and $\adatabase$. By definition, $\inter{M'} \subseteq \inter{M}$, so 
	$\inter{M} = \inter{M'}$ since $\inter{M}$ is the least model.
\end{proof}

The following statement is again about normal logic programs.
Here we assume again that filters $\fnFilterExpr{p}$ were computed for $\aprogram$ as in Section~\ref{sec_negation}.

\begin{lemma}\label{lemmaAspRelevantNafAtomPresent}
	Let $\aprogram'$ be an admissible rewriting of $\aprogram$, 
	and let $\mathcal{M}$ be the least model of $\ground{\aprogram}^{\mathcal{C}}$ and $\mathcal{C}$
	for fact sets $\mathcal{B}$ and $\mathcal{C}$ such that
	(i) $\mathcal{B} \subseteq \mathcal{C}$,
	(ii) $\mathcal{B}$ is closed under $\ground{\aprogram'}^{\mathcal{B}}$, and
	(iii) for all $p(\vec{c})\in\mathcal{C}\setminus\mathcal{B}$, we have $\vec{c} \notin \fnFilterExpr{h}^\adatabase$.
	
	Then, for all $h(\vec{c})\in\mathcal{M}\setminus\mathcal{B}$, we have $\vec{c} \notin \fnFilterExpr{h}^\adatabase$.
\end{lemma}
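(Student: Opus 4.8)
The plan is to prove the contrapositive: for every $h(\vec{c})\in\mathcal{M}$, if $\vec{c}\in\fnFilterExpr{h}^\adatabase$ then $h(\vec{c})\in\mathcal{B}$. Since $\mathcal{M}$ is the least model of the Datalog program $\ground{\aprogram}^{\mathcal{C}}$ with database $\mathcal{C}$, it is the union of the usual bottom-up stages $\mathcal{M}_0=\mathcal{C}\subseteq\mathcal{M}_1\subseteq\cdots$, and I would do induction on the least stage at which a given atom appears. For the base case $h(\vec{c})\in\mathcal{C}$ with $\vec{c}\in\fnFilterExpr{h}^\adatabase$, condition~(iii) immediately gives $h(\vec{c})\notin\mathcal{C}\setminus\mathcal{B}$, i.e.\ $h(\vec{c})\in\mathcal{B}$ (for EDB body atoms encountered later I will use that $\mathcal{B}$ and $\mathcal{C}$ agree on EDB facts, which is what~(iii) says under the convention $\fnFilterExpr{p}=\top$ for EDB predicates $p$).

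For the inductive step, let $h(\vec{c})$ first appear at stage $n+1$, witnessed by a ground instance $h(\vec{c})\leftarrow\shBodyNormal\sigma\land\shBodyFilterG\sigma\in\ground{\aprogram}^{\mathcal{C}}$ of a rule $\arule=h(\vec{x})\leftarrow\shBodyNormal\land\shBodyNaf\land\shBodyFilterG$ of $\aprogram$, where $\shBodyNormal\sigma\subseteq\mathcal{M}_n$, $\shBodyFilterG\sigma\subseteq\adatabase$ (the filter atoms are EDB, so satisfaction in $\mathcal{M}$ and in $\adatabase$ coincide), and $\shBodyNaf\sigma\cap\mathcal{C}=\emptyset$. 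Let $\arule'=h(\vec{x})\leftarrow\shBodyNormal\land\shBodyNaf\land\psi\in\aprogram'$ be the admissible rewriting of $\arule$, and let $F_+$ and $F_-$ be as in Definition~\ref{def_admissible} for the underlying Datalog rule $h(\vec{x})\leftarrow\shBodyNormal\land\shBodyFilterG$. The key point, which I would extract from the termination fixpoint of Algorithm~\ref{alg_filter_pushing} exactly as in the minimality argument of the proof of Theorem~\ref{thmFilterPropagationCorrectness}, is that $F_+\models\iota_{q(\vec{y})}(\fnFilterExpr{q})$ for every IDB atom $q(\vec{y})\in\shBodyNormal$, hence $F_+\models F_-$. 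Since $\sigma(\vec{x})=\vec{c}\in\fnFilterExpr{h}^\adatabase$ and $\shBodyFilterG\sigma\subseteq\adatabase$, we get $F_+\sigma\subseteq\adatabase$, and therefore $F_-\sigma\subseteq\adatabase$; thus $\sigma(\vec{y})\in\fnFilterExpr{q}^\adatabase$ for each IDB atom $q(\vec{y})\in\shBodyNormal$, and the induction hypothesis yields $q(\sigma(\vec{y}))\in\mathcal{B}$. The EDB atoms of $\shBodyNormal$ belong to $\mathcal{M}$ only if they belong to $\mathcal{C}$ (the reduct has only IDB heads), hence to $\mathcal{B}$. Altogether $\shBodyNormal\sigma\subseteq\mathcal{B}$.

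It then remains to fire $\arule'$ in $\ground{\aprogram'}^{\mathcal{B}}$. From $\shBodyNaf\sigma\cap\mathcal{C}=\emptyset$ and $\mathcal{B}\subseteq\mathcal{C}$ we get $\shBodyNaf\sigma\cap\mathcal{B}=\emptyset$, so $h(\vec{c})\leftarrow(\shBodyNormal\land\psi)\sigma\in\ground{\aprogram'}^{\mathcal{B}}$. Admissibility of $\psi$ gives $F_+\models\psi$, and $F_+\sigma\subseteq\adatabase$, so $\psi\sigma\subseteq\adatabase$. Since $\shBodyNormal\sigma\subseteq\mathcal{B}$ and, by~(ii), $\mathcal{B}$ is closed under $\ground{\aprogram'}^{\mathcal{B}}$, we conclude $h(\vec{c})\in\mathcal{B}$, which closes the induction.

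I expect the main obstacle to be a careful justification of $F_+\models F_-$ in the present, ASP-modified setting. It follows from the fixpoint reached by Algorithm~\ref{alg_filter_pushing}: in the last iteration, processing $\arule$ with the final $\fnFilterExpr{h}$ sets $G=F_+$ and $M=\bigwedge\{F\mid F_+\models\iota_{q(\vec{y})}(F)\}$ and leaves each $\fnFilterExpr{q}$ unchanged, which forces $M\models\fnFilterExpr{q}$ and hence $F_+\models\iota_{q(\vec{y})}(M)\models\iota_{q(\vec{y})}(\fnFilterExpr{q})$; one must check that the modified initialisation~\eqref{eq_sf_init_asp} and the additional looping over negated body atoms only enlarge some $\fnFilterExpr{q}$ and never weaken this entailment for the positive IDB body occurrences. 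A secondary point of care is the bookkeeping for the generalised filter expressions $\shBodyFilterG$ and $\psi$: being positive Boolean combinations of EDB filter atoms, they are always evaluated over $\adatabase$, so writing $\shBodyFilterG\sigma\subseteq\adatabase$, $F_+\sigma\subseteq\adatabase$, $\psi\sigma\subseteq\adatabase$ for ``satisfied under $\sigma$'' is legitimate but should be made explicit once.
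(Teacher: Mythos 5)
Your proposal is correct and follows essentially the same route as the paper's proof: an induction along the bottom-up derivation of $\mathcal{M}$ from $\mathcal{C}$, using the fixpoint of Algorithm~\ref{alg_filter_pushing} (via $G\models\iota_{b(\vec{y})}(M)\models\iota_{b(\vec{y})}(\fnFilterExpr{b})$, which is your $F_+\models F_-$) to push the head filter onto the IDB body atoms, and then firing the admissibly rewritten rule in $\ground{\aprogram'}^{\mathcal{B}}$ to contradict (or, in your contrapositive phrasing, exploit) the closure of $\mathcal{B}$. The paper argues by contradiction where you argue the contrapositive directly, but the substance is identical.
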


\begin{proof}
	$\mathcal{M}$ can be obtained from $\mathcal{C}$ by a sequence of applications of rules $\arule_1\sigma_1, \arule_2\sigma_2, \ldots$ with $\arule_i\sigma_i \in \ground{\aprogram}^{\mathcal{C}}$, where we denote $\arule_i$ as $h_i(\vec{x_i}) \leftarrow  \shBodyNormali{i} \land \shBodyFilterGi{i}$.
	We show the lemma inductively, i.e., we show that if, for all $j < i$, $\sigma_j(\vec{x_j}) \notin \fnFilterExpr{h_j}^\adatabase$ if $h_j(\sigma_j(\vec{x_j})) \notin \mathcal{B}$, then $\sigma_i(\vec{x_i}) \notin \fnFilterExpr{h_i}^\adatabase$ if $h_i(\sigma_i(\vec{x_i})) \notin \mathcal{B}$.
	
	Assume for a contradiction that $\sigma_i(\vec{x_i}) \in \fnFilterExpr{h_i}^\adatabase$ with $h_i(\sigma_i(\vec{x_i})) \notin \mathcal{B}$.
Since $\arule_i\sigma_i$ is applicable,
	$\shBodyFilterGi{i}\sigma_i \subseteq \adatabase$.
	For $b(\vec{y}) \in \shBodyNormali{i}$, let $G$ and $M$ be defined as in Algorithm~\ref{alg_filter_pushing}.
	Since $G \models \iota_{b(\vec{y})}(M)$ and $\iota_{b(\vec{y})}(M) \models \iota_{b(\vec{y})}(\fnFilterExpr{b})$, 
	we have $\iota_{b(\vec{y})}(\fnFilterExpr{b})\sigma_i \subseteq \adatabase$, i.e., $\satisfy{\sigma_i(\vec{y})}{\fnFilterExpr{b}}$.
	By induction, $b(\sigma_i(\vec{y})) \in \mathcal{B}$ for all $j < i$.
	Let $\tau_i = \shHeadi{i} \leftarrow \shBodyNormali{i} \land \shBodyNafi{i} \land \shBodyFilterGi{i}$ be the rule $\tau_i\in\aprogram$ from which $\arule_i\in\ground{\aprogram}^{\mathcal{C}}$ stems, let $\tau'_i = \shHeadi{i} \leftarrow \shBodyNormali{i} \land \shBodyNafi{i} \land \psi_i$ be an admissible rewriting of $\tau_i$ such that $\tau'_i\in \aprogram'$, and let $\arule'_i = \shBodyNormali{i} \land \psi_i$.
	Since $\arule_i\sigma_i \in \ground{P}^{\mathcal{C}}$, we have $\shBodyNafi{i}\sigma_i \cap \mathcal{C} = \emptyset$ and, by $\mathcal{B} \subseteq \mathcal{C}$, $\shBodyNafi{i}\sigma_i \cap \mathcal{B} = \emptyset$,
	i.e., $\arule'_i\sigma_i \in \ground{P'}^{\mathcal{B}}$.
	Moreover, $\shBodyNormali{i}\sigma_i \subseteq \mathcal{B}$.
	Since $\sigma_i(\vec{x}_i) \in \fnFilterExpr{h_i}^\adatabase$, $\shBodyFilterG\sigma_i \subseteq \adatabase$, and $\tau'_i$ is an admissible rewriting of $\tau_i$, we have $F_+ \models \psi_i$ and $\psi_i\sigma_i \subseteq \adatabase$.
	Therefore, $\arule'_i\sigma_i$ is applicable for $\mathcal{B}$, which yields the required contradiction, since $\mathcal{B}$ is already closed under $\arule'_i\sigma_i$.
\end{proof}

\begin{lemma}\label{lemma_bijection_well_defined}
	Let $P'$ be an admissible rewriting of $\aprogram$ and let $\mathcal{A}$ be a stable model of $\aprogram$ and $\adatabase$.
	Then, $\mathcal{A'} = \{ p(\vec{c}) \in \mathcal{A} \mid \vec{c} \in \fnFilterExpr{p}^\adatabase \}$ is a stable model of $\aprogram'$ and $\adatabase$.
\end{lemma}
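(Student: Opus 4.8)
The plan is to unfold the definition of stable model: it suffices to show that $\mathcal{A}'$ is the least model of the reduct $\ground{\aprogram'}^{\mathcal{A}'}$ together with $\adatabase$, while using that $\mathcal{A}$ being a stable model of $\aprogram$ means $\mathcal{A}$ is the least model of $\ground{\aprogram}^{\mathcal{A}}$ and $\adatabase$. I would first record two easy facts: $\adatabase\subseteq\mathcal{A}'$ (EDB facts are not filtered, so $\mathcal{A}'$ retains all of $\adatabase$, which is exactly the EDB part of $\mathcal{A}$) and $\mathcal{A}'\subseteq\mathcal{A}$. The second makes one direction of the reduct comparison trivial: if $\arule\in\aprogram$ and a grounding $\sigma$ satisfy $\shBodyNaf\sigma\cap\mathcal{A}=\emptyset$, then $\shBodyNaf\sigma\cap\mathcal{A}'=\emptyset$ too, so the admissible rewriting $\arule'\in\aprogram'$ of $\arule$ contributes $\arule'\sigma$ to $\ground{\aprogram'}^{\mathcal{A}'}$ whenever $\arule\sigma$ lies in $\ground{\aprogram}^{\mathcal{A}}$. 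The two parts of the argument then closely mirror the proof of Theorem~\ref{thmFilterPropagationCorrectness}.

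For closure of $\mathcal{A}'$ under $\ground{\aprogram'}^{\mathcal{A}'}$, take an admissible rewriting $\shHead\leftarrow\shBodyNormal\land\shBodyNaf\land\psi$ of $\arule=\shHead\leftarrow\shBody\in\aprogram$ and a grounding $\sigma$ with $\shBodyNaf\sigma\cap\mathcal{A}'=\emptyset$ and $(\shBodyNormal\land\psi)\sigma\subseteq\mathcal{A}'$. Exactly as in the Datalog case, admissibility together with the computed filters of the IDB atoms of $\shBodyNormal$ (which hold because those atoms lie in $\mathcal{A}'$) yields $F_+\sigma\subseteq\adatabase$ in the notation of Definition~\ref{def_admissible}, hence $\shBodyFilterG\sigma\subseteq\adatabase$ and $\satisfy{\sigma(\vec{x})}{\fnFilterExpr{h}}$. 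The new ingredient is to upgrade $\shBodyNaf\sigma\cap\mathcal{A}'=\emptyset$ to $\shBodyNaf\sigma\cap\mathcal{A}=\emptyset$: then closure of $\mathcal{A}$ under $\ground{\aprogram}^{\mathcal{A}}$ places $h(\sigma(\vec{x}))\in\mathcal{A}$, and $\satisfy{\sigma(\vec{x})}{\fnFilterExpr{h}}$ places it in $\mathcal{A}'$. This upgrade is exactly what the enlarged initial filters of \eqref{eq_sf_init_asp} (and the body loop of Algorithm~\ref{alg_filter_pushing} modified to range over negated atoms) guarantee: for each $b(\vec{y})\in\shBodyNaf$, since $F_+\sigma$ — i.e., $\iota_{h(\vec{x})}(\fnFilterExpr{h})\land\shBodyFilterG$ under $\sigma$ — holds over $\adatabase$, the definition of $M_{b(\vec{y})}$ (resp.\ of the formula $M$ computed for the negated $b(\vec{y})$ in the modified loop) and the fact that $\fnFilterExpr{b}$ only grows across iterations of Algorithm~\ref{alg_filter_pushing} give $\satisfy{\sigma(\vec{y})}{\fnFilterExpr{b}}$; together with $b(\sigma(\vec{y}))\notin\mathcal{A}'$ this forces $b(\sigma(\vec{y}))\notin\mathcal{A}$. (This is essentially Lemma~\ref{lemmaAspRelevantNafAtomPresent} specialised to a single rule application, and could also be argued directly.)

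For minimality, suppose a non-empty $\mathcal{N}\subseteq\mathcal{A}'$ is such that $\mathcal{A}'\setminus\mathcal{N}$ is still a model of $\ground{\aprogram'}^{\mathcal{A}'}$ and $\adatabase$; then $\mathcal{N}\cap\adatabase=\emptyset$. Since $\mathcal{A}$ is the \emph{least} model of $\ground{\aprogram}^{\mathcal{A}}$ and $\adatabase$ while $\mathcal{A}\setminus\mathcal{N}$ is a proper subset of $\mathcal{A}$ still containing $\adatabase$, the set $\mathcal{A}\setminus\mathcal{N}$ is not closed under $\ground{\aprogram}^{\mathcal{A}}$: there are $\arule\in\aprogram$ and a grounding $\sigma$ with $\shBodyNaf\sigma\cap\mathcal{A}=\emptyset$, $(\shBodyNormal\land\shBodyFilterG)\sigma\subseteq\mathcal{A}\setminus\mathcal{N}$, and $h(\sigma(\vec{x}))\in\mathcal{N}$. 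By the easy reduct direction above, $\arule'\sigma\in\ground{\aprogram'}^{\mathcal{A}'}$ for the admissible rewriting $\arule'$ of $\arule$. From $h(\sigma(\vec{x}))\in\mathcal{N}\subseteq\mathcal{A}'$ we get $\satisfy{\sigma(\vec{x})}{\fnFilterExpr{h}}$, which together with $\shBodyFilterG\sigma\subseteq\adatabase$ and lines \alglineref{line_fp_matchfilter}--\alglineref{line_fp_filterunion} of Algorithm~\ref{alg_filter_pushing} gives $\satisfy{\sigma(\vec{y})}{\fnFilterExpr{b}}$ for every IDB atom $b(\vec{y})\in\shBodyNormal$; hence $\shBodyNormal\sigma\subseteq\mathcal{A}'$, so $\shBodyNormal\sigma\subseteq\mathcal{A}'\setminus\mathcal{N}$, and admissibility ($F_+\models\psi$) gives $\psi\sigma\subseteq\adatabase\subseteq\mathcal{A}'\setminus\mathcal{N}$. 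Thus $\arule'\sigma$ is applicable to $\mathcal{A}'\setminus\mathcal{N}$ with head in $\mathcal{N}$, contradicting that $\mathcal{A}'\setminus\mathcal{N}$ is closed under $\ground{\aprogram'}^{\mathcal{A}'}$. So $\mathcal{A}'$ is the least model of $\ground{\aprogram'}^{\mathcal{A}'}$ and $\adatabase$, i.e., $\mathcal{A}'\in\stablemods{\aprogram'}{\adatabase}$.

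I expect the main obstacle to be the closure step, precisely because of the mismatch between the reducts $\ground{\aprogram}^{\mathcal{A}}$ and $\ground{\aprogram'}^{\mathcal{A}'}$: discarding facts that are not output-relevant could in principle unblock a negated body atom and destroy stability, and preventing this is the whole point of the modified initialisation \eqref{eq_sf_init_asp}. The remaining fiddly points should be (i) confirming that $\fnFilterExpr{p}$ only grows across the iterations of Algorithm~\ref{alg_filter_pushing} (so its final value subsumes the initial $N_\arule^p$ disjuncts), (ii) predicates that are stratifiable yet occur in a negated atom, for which \eqref{eq_sf_init_asp} yields $\bot$ and only the modified body loop contributes, and (iii) the bookkeeping with the marker-to-variable maps $\iota$ when transporting filter formulas between $\sigNumPos_k$ and rule variables.
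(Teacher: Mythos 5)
Your proposal is correct and follows essentially the same route as the paper's proof: closure of $\mathcal{A}'$ under $\ground{\aprogram'}^{\mathcal{A}'}$ via admissibility plus the extended filters for negated atoms (to lift $\shBodyNaf\sigma\cap\mathcal{A}'=\emptyset$ to $\shBodyNaf\sigma\cap\mathcal{A}=\emptyset$), and minimality by using that $\mathcal{A}$ is the least model of its own reduct to extract an applicable ground rule and transferring it to $\ground{\aprogram'}^{\mathcal{A}'}$. The only cosmetic difference is that you isolate the ``easy reduct direction'' $\mathcal{A}'\subseteq\mathcal{A}$ as a standalone observation, which the paper uses inline.
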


\begin{proof}
To show that $\mathcal{A'}$ is a stable model for $P'$ and $\adatabase$, we show that (i) $\mathcal{A'} \models \ground{P'}^{\mathcal{A'}}$ and (ii) $\mathcal{A'}_-$ is not the model of $\ground{P'}^{\mathcal{A'}}$ and $\adatabase$ for all $\mathcal{A'}_- \subset \mathcal{A'}$.
	
	$\mathcal{A'}$ is closed under $\ground{P'}^{\mathcal{A'}}$:
	Let $\arule' = p(\vec{x}) \leftarrow B_{\bar{\sigPredFilter}} \land B_{\bar{\sigPredFilter}}^- \land \psi$ be a rule in $\aprogram'$ with atoms $\shBodyNormal$, negated atoms $\shBodyNaf$, and filter formula $\psi$.
	Let $\tau' = \arule'\sigma$ be its grounding for a mapping $\sigma$
	such that $B_{\bar{\sigPredFilter}}\sigma \subseteq \mathcal{A'}$, $B_{\bar{\sigPredFilter}}^-\sigma \cap \mathcal{A'} = \emptyset$, and $\psi\sigma \subseteq \adatabase$.
	There is $\arule = p(\vec{x}) \leftarrow \shBodyNormal \land \shBodyNaf \land \shBodyFilterG \in \aprogram$ such that $\arule'$ is an admissible rewriting of $\arule$.
	Trivially, $B_{\bar{\sigPredFilter}}\sigma \subseteq \mathcal{A'} \subseteq \mathcal{A}$.
	By definition of $\mathcal{A'}$, for $b(\vec{d}) \in B_{\bar{\sigPredFilter}} \subseteq \mathcal{A'}$, we have $\vec{d} \in \fnFilterExpr{b}^\adatabase$.
	By admissibility of $\arule'$, $(\psi \land F_-)\sigma \subseteq F_+\sigma$ with $F_-$ and $F_+$ as in Def.~\ref{def_admissible}, and $F_+\sigma \subseteq \adatabase$.
	In particular, $G_\sigPredFilter\sigma \subseteq \adatabase$ and $\sigma(\vec{x}) \in \fnFilterExpr{p}^\adatabase$.
	Since the modified variant of Algorithm~\ref{alg_filter_pushing} for normal programs loops in line~\alglineref{line_fp_bodyloop} over all $\naf b(\vec{y}) \in B_{\bar{\sigPredFilter}}^-$,
	we have $\sigma(\vec{y}) \in \fnFilterExpr{b}^\adatabase$, i.e., $b(\sigma(\vec{y})) \notin \mathcal{A}$.
	Hence, $\shBodyNaf\sigma \cap \mathcal{A} = \emptyset$ and $\arule\sigma \in \ground{P}^\mathcal{A}$.
	Therefore, $p(\sigma(\vec{x})) \in \mathcal{A}$, as $\mathcal{A}$ is a stable model for $P$, and $p(\sigma(\vec{x}))\in \mathcal{A}'$, i.e., $\mathcal{A'} \models \ground{P'}^{\mathcal{A'}}$.
	
	Minimality of $\mathcal{A'}$:
	For a contradiction, suppose there is a non-empty set $\mathcal{N}\subseteq\mathcal{A'}$ such that
	$\mathcal{A'}_- = \mathcal{A'} \setminus \mathcal{N}$ is also a model of $\ground{P'}^{\mathcal{A'}}$ and $\adatabase$.
	In particular, $\adatabase\subseteq\mathcal{A'}_-$, so $\mathcal{N} \cap \adatabase = \emptyset$.
	Let $\mathcal{A}_- = \mathcal{A} \setminus \mathcal{N}$.
Since $\mathcal{A}$ is a stable model for $P$,
	there is $\arule = p(\vec{x}) \leftarrow \shBody \in P$ and mapping $\sigma$ such that $p(\sigma(\vec{x})) = p(\vec{t}) \in \mathcal{N}$,
	$\shBodyNormal\sigma \subseteq \mathcal{A}_-$, $\shBodyNaf\sigma \cap \mathcal{A} = \emptyset$, and $\shBodyFilterG\sigma \subseteq \adatabase$.
	In other words, $p(\vec{t})$ is among the ``first'' conclusions that $\mathcal{A}_-$ is missing to be a model.
	
	Now let $\arule' = p(\vec{t}) \leftarrow \shBodyNormal \land \shBodyNaf \land \psi \in P'$ be an admissible rewriting of $\arule$.	
	For $b(\vec{y}) \in \shBodyNormal$, let $G$ and $M$ be defined as in Algorithm~\ref{alg_filter_pushing}.
	Since $G \models \iota_{b(\vec{y})}(M)$ and $\iota_{b(\vec{y})}(M) \models \iota_{b(\vec{y})}(\fnFilterExpr{b})$, 
	we have $\iota_{b(\vec{y})}(\fnFilterExpr{b})\sigma \subseteq \adatabase$, i.e., $\satisfy{\sigma(\vec{y})}{\fnFilterExpr{q}}$.
	Therefore, $b(\sigma(\vec{y})) \in \mathcal{A}_-'$, and $\shBodyNormal\sigma \subseteq \mathcal{A}_-'$.
	Let $F_+$ be defined as in Def.~\ref{def_admissible}.
	Since $F_+\sigma \subseteq \adatabase$ and $\arule'$ is an admissible rewriting, $\psi\sigma \subseteq \adatabase$.
	Since $\mathcal{A'} \subseteq \mathcal{A}$, $\shBodyNaf \cap \mathcal{A'} = \emptyset$.
	Hence, $p(\vec{t}) \leftarrow \shBodyNormal\sigma \land \psi\sigma \in \ground{P'}^{\mathcal{A'}}$ is applicable for $\mathcal{A}'_-$, which yields the required contradiction.
\end{proof}

For the converse of Lemma~\ref{lemma_bijection_well_defined}, we construct a stable model 
$\mathcal{A} \in \stablemods{P}{\adatabase}$ from a stable model $\mathcal{A'} \in \stablemods{P'}{\adatabase}$.
The construction in this case requires more careful processing, re-considering additional rules in the order defined by the partial stratifiction of $P$.
We do not require $P$ to be fully stratified, but we can always partition it into
a stratified program $P^1\cup\ldots\cup P^n$ (with $n$ strata) and a remainder program $P^\ast$ that forms a final stratum 
above $P^1\cup\ldots\cup P^n$.

\begin{lemma}\label{lemma_bijection_surjective}
	Let $\aprogram'$ be an admissible rewriting of $P$ with a stable model $\mathcal{A}'\in\stablemods{\aprogram'}{\adatabase}$.
	Let $\xi : \sigPredStrat \to \{ 1, \ldots, n \}$ be a mapping such that
	(i) $\xi(p) \leq \xi(q)$ if $p \to_+ q$ and
	(ii) $\xi(p) < \xi(q)$ if $p \to_- q$.
	Let $\{ P^1, \ldots, P^n, P^\ast \}$ be a partition of $P$ such that $P^i = \{ p(\vec{x}) \leftarrow B \in P \mid p \in \sigPredStrat, \xi(p) = i \} $ and $P^\ast = \{ p(\vec{x}) \leftarrow B \in P \mid p \notin \sigPredStrat\} $.
	Let $\mathcal{A}_0 = \mathcal{A}'$, 
	let $\mathcal{A}_i$ be the model of $\ground{P^i}^{\mathcal{A}_{i-1}}$ and $\mathcal{A}_{i-1}$ for $i\in\{1,\ldots,n\}$,
	and let $\mathcal{A}$ be the model of $\ground{P^\ast}^{\mathcal{A}_n}$ and $\mathcal{A}_n$.
	
	Then $\mathcal{A}\in\stablemods{\aprogram}{\adatabase}$ and $\mathcal{A'} = \{ p(\vec{c}) \in \mathcal{A} \mid \vec{c} \in \fnFilterExpr{p}^\adatabase \}$.
\end{lemma}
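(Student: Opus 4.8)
The plan is to verify the two assertions of the lemma separately, after recording some structural facts about the layered construction. \emph{Monotonicity:} each step only adds facts, so $\mathcal{A}' = \mathcal{A}_0 \subseteq \mathcal{A}_1 \subseteq \cdots \subseteq \mathcal{A}_n \subseteq \mathcal{A}$; moreover, by conditions~(i)--(ii) on $\xi$, a stratifiable predicate $p$ receives no further facts after stratum $\xi(p)$, so $p^{\mathcal{A}} = p^{\mathcal{A}_j}$ for every $j \geq \xi(p)$, while every predicate occurring in $\mathcal{A}\setminus\mathcal{A}_n$ is non-stratifiable. \emph{Filters hold below:} every IDB fact in $\mathcal{A}'$ already satisfies its filter; since $\mathcal{A}'$ is the least model of $\ground{P'}^{\mathcal{A}'}$ and $\adatabase$, the argument of Lemma~\ref{lemmaAdmissibilityStrongEnough} can be replayed for this ground program --- the set $\{\,p(\vec{c})\in\mathcal{A}'\mid p\text{ is EDB, or }\vec{c}\in\fnFilterExpr{p}^\adatabase\,\}$ is closed under $\ground{P'}^{\mathcal{A}'}$, because admissibility supplies $\psi\wedge F_-\models F_+$ --- so minimality of $\mathcal{A}'$ forces the two sets to coincide. \emph{Filters fail above:} every fact of $\mathcal{A}\setminus\mathcal{A}'$ violates its filter; this follows by induction on the strata, applying Lemma~\ref{lemmaAspRelevantNafAtomPresent} with $\mathcal{B}=\mathcal{A}'$ and $\mathcal{C}=\mathcal{A}_{i-1}$ (and $\mathcal{C}=\mathcal{A}_n$ in the last step), using that $\mathcal{A}_i$, resp.\ $\mathcal{A}$, is contained in the least model of $\ground{P}^{\mathcal{C}}$ and $\mathcal{C}$ because $P^i\subseteq P$, resp.\ $P^\ast\subseteq P$. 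The second and third facts together already give $\mathcal{A}' = \{\,p(\vec{c})\in\mathcal{A}\mid \vec{c}\in\fnFilterExpr{p}^\adatabase\,\}$.

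It then remains to show $\mathcal{A}\in\stablemods{\aprogram}{\adatabase}$, i.e.\ that $\mathcal{A}$ is the least model of $\ground{P}^{\mathcal{A}}$ and $\adatabase$. \emph{Closure} is the easy direction: a ground rule of $\ground{P}^{\mathcal{A}}$ stemming from $P^i$ lies in $\ground{P^i}^{\mathcal{A}_{i-1}}$ (its negated atoms are over predicates decided before stratum $i$, which agree on $\mathcal{A}_{i-1}$ and $\mathcal{A}$), and a ground rule of $\ground{P}^{\mathcal{A}}$ stemming from $P^\ast$ lies in $\ground{P^\ast}^{\mathcal{A}_n}$ (since $\mathcal{A}_n\subseteq\mathcal{A}$ gives $\ground{P^\ast}^{\mathcal{A}}\subseteq\ground{P^\ast}^{\mathcal{A}_n}$). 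As $\mathcal{A}_i$ is closed under $\ground{P^i}^{\mathcal{A}_{i-1}}$ and $\mathcal{A}$ under $\ground{P^\ast}^{\mathcal{A}_n}$, and the monotonicity fact makes the positive bodies evaluate consistently between $\mathcal{A}$ and the relevant intermediate set, $\mathcal{A}$ is closed under $\ground{P}^{\mathcal{A}}$.

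For \emph{minimality} I would let $\mathcal{L}$ be the least model of $\ground{P}^{\mathcal{A}}$ and $\adatabase$, so $\mathcal{L}\subseteq\mathcal{A}$ by closure, and prove $\mathcal{A}\subseteq\mathcal{L}$ by tracking the layered construction: $\adatabase\subseteq\mathcal{L}$, and whenever a new fact is obtained in the construction of $\mathcal{A}'$, of some $\mathcal{A}_i$, or of $\mathcal{A}$ by an applicable instance $\arule\sigma$ of a rule of $P'$, $P^i$, or $P^\ast$ whose positive body already lies in $\mathcal{L}$, the corresponding original rule instance lies in $\ground{P}^{\mathcal{A}}$, so the fact is forced into $\mathcal{L}$. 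The key step --- and the one I expect to be the main obstacle --- is precisely this lifting: showing that no negated atom $\naf b(\vec{y})$ of an applicable instance $\arule\sigma$ has $b(\sigma(\vec{y}))\in\mathcal{A}$. When $b$ is stratifiable this is immediate from monotonicity; when $b$ is non-stratifiable one argues that applicability forces $G_\sigPredFilter\sigma\subseteq\adatabase$ --- directly when $\arule$ is an original rule of $P^\ast$, and via admissibility ($\psi\sigma\subseteq\adatabase$ together with $F_-\sigma\subseteq\adatabase$ yields $F_+\sigma\subseteq\adatabase$) when $\arule$ is a rule of $P'$ --- and then, since $M_{b(\vec{y})}\models\fnFilterExpr{b}$ by the definition of the enlarged initial filters in \eqref{eq_sf_init_asp}, $G_\sigPredFilter\sigma\subseteq\adatabase$ gives $\sigma(\vec{y})\in M_{b(\vec{y})}^\adatabase\subseteq\fnFilterExpr{b}^\adatabase$; combined with the already established equality $\mathcal{A}' = \{\,p(\vec{c})\in\mathcal{A}\mid \vec{c}\in\fnFilterExpr{p}^\adatabase\,\}$ this forces $b(\sigma(\vec{y}))\in\mathcal{A}'\subseteq\mathcal{A}_n$, contradicting the survival of $\arule\sigma$ in the relevant reduct. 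This is exactly the point at which the enlarged initial filters for non-stratifiable negated predicates are indispensable, and where the bookkeeping between $\mathcal{A}'$, the intermediate $\mathcal{A}_i$, and $\mathcal{A}$ demands the most care.
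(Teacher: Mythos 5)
Your overall strategy matches the paper's: first establish $\mathcal{A'} = \{ p(\vec{c}) \in \mathcal{A} \mid \vec{c} \in \fnFilterExpr{p}^\adatabase \}$ (via Lemma~\ref{lemmaAspRelevantNafAtomPresent} applied stratum by stratum, plus a Lemma~\ref{lemmaAdmissibilityStrongEnough}-style argument for the facts of $\mathcal{A}'$), then closure of $\mathcal{A}$ under $\ground{P}^{\mathcal{A}}$, then minimality. Casting minimality as ``the least model $\mathcal{L}$ of $\ground{P}^{\mathcal{A}}$ and $\adatabase$ absorbs every fact produced by the layered construction'' is a legitimate restructuring of the paper's argument by contradiction; the proof obligations are identical.

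There is, however, a genuine gap in the key lifting step. You split the negated atoms $\naf b(\vec{y})$ by whether $b$ is stratifiable and claim that for stratifiable $b$ the conclusion $b(\sigma(\vec{y})) \notin \mathcal{A}$ is ``immediate from monotonicity''. That is correct for instances of rules from $P^i$ or $P^\ast$, where the reduct is taken w.r.t.\ $\mathcal{A}_{i-1}$ (resp.\ $\mathcal{A}_n$) and $b^{\mathcal{A}_{i-1}} = b^{\mathcal{A}}$ because $\xi(b) < i$. It fails for the rule instances of $P'$ used to build $\mathcal{A}'$: there you only know $b(\sigma(\vec{y})) \notin \mathcal{A}'$, and $\mathcal{A}' = \mathcal{A}_0$ sits \emph{below} the stratum $\xi(b)$ at which $b$ is completed, so $\mathcal{A}$ may contain $b$-facts absent from $\mathcal{A}'$ (exactly those violating $\fnFilterExpr{b}$); monotonicity points the wrong way here. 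The missing argument is the paper's case (1b): because the modified line~\alglineref{line_fp_bodyloop} of Algorithm~\ref{alg_filter_pushing} also loops over negated IDB atoms, the admissibility-derived fact $F_+\sigma \subseteq \adatabase$ yields $G\sigma \subseteq \adatabase$ and hence $\sigma(\vec{y}) \in \fnFilterExpr{b}^\adatabase$ after the corresponding update of $\fnFilterExpr{b}$; only then does the already established set equality turn $b(\sigma(\vec{y})) \in \mathcal{A}$ into $b(\sigma(\vec{y})) \in \mathcal{A}'$, giving the contradiction. Your enlarged-initialisation argument for non-stratifiable $b$ does not cover this case, since for stratifiable $b$ the initial filter is $\bot$ and $M_{b(\vec{y})} \models \fnFilterExpr{b}$ need not hold. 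The remainder of the proposal (the use of the two auxiliary lemmas, the closure argument, and the treatment of non-stratifiable negated atoms) is sound.
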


\begin{proof}
	We first show that $\mathcal{A'} = \{ p(\vec{c}) \in \mathcal{A} \mid \vec{c} \in \fnFilterExpr{p}^\adatabase \}$ ($\ddagger$).
	By the definitions, $\adatabase \subseteq \mathcal{A'} = \mathcal{A}_0 \subseteq \ldots \subseteq \mathcal{A}_n \subseteq \mathcal{A}$.
	Now by repeated application of Lemma~\ref{lemmaAspRelevantNafAtomPresent} with $\mathcal{B} = \mathcal{A'}$,
	$\mathcal{C}$ iterating over $\mathcal{A}_0, \ldots, \mathcal{A}_n$, and
	$P$ iterating over $P^1, \ldots, P^n, P^\ast$, we get that, for all $p(\vec{c}) \in \mathcal{A} \setminus \mathcal{A'}$,
	$\vec{c} \notin \fnFilterExpr{p}^\adatabase$. This establishes claim ($\ddagger$).
	
	It remains to show that $\mathcal{A} \in \stablemods{P}{\adatabase}$, by verifying the relevant properties.

	\paragraph*{$\mathcal{A}$ closed under $\ground{P}^{\mathcal{A}}$:}
	Let $\arule = h(\vec{c}) \leftarrow \shBodyNormal \land \shBodyFilterG \in \ground{P}^{\mathcal{A}}$,
	such that $\mathcal{A}\models\shBodyNormal \land \shBodyFilterG$.
	If $h \in \sigPredStrat$ with $k = \xi(h)$, then $\mathcal{A}_k\models\shBodyNormal \land \shBodyFilterG$ and $\arule \in \ground{P^k}^{\mathcal{A}_{k-1}} \supseteq \ground{P}^{\mathcal{A}}$; hence, $h(\vec{c}) \in \mathcal{A}_k \subseteq \mathcal{A}$.
	Otherwise, $h \notin \sigPredStrat$ and $\arule \in \ground{P^\ast}^{\mathcal{A}_n} \supseteq \ground{\aprogram}^{\mathcal{A}}$; hence, $h(\vec{c}) \in \mathcal{A}$.
	Therefore, $\mathcal{A} \models \ground{P}^{\mathcal{A}}$.
	
	\paragraph*{Minimality of $\mathcal{A}$:}
	For a contradiction, suppose there is a non-empty set $\mathcal{N}\subseteq\mathcal{A}$ such that
	$\mathcal{A}_- = \mathcal{A} \setminus \mathcal{N}$ is also a model of $\ground{P}^{\mathcal{A}}$ and $\adatabase$.
	In particular, $\adatabase\subseteq\mathcal{A}_-$, so $\mathcal{N} \cap \adatabase = \emptyset$.
	\begin{itemize}
		\item Case 1: there is $p(\vec{c}) \in \mathcal{N}\cap\mathcal{A}'$.
			Since $\mathcal{A'}$ is the least model of $\ground{P'}^{\mathcal{A'}}$ and $\adatabase$, and since 
			$\adatabase\subseteq\mathcal{A'}\setminus\mathcal{N}$, we get $\mathcal{A'}\setminus\mathcal{N}\not\models\ground{P'}^{\mathcal{A'}}$.
		 	Therefore, there is $\tau' = \shHead \leftarrow \shBodyNormal \land \shBodyNaf \land \shBodyFilterGRewrite$ in $P'$ and mapping $\sigma$ such that
		 	$h(\vec{c}) = h(\sigma(\vec{x})) \in \mathcal{N}$, $\shBodyNormal\sigma \subseteq \mathcal{A'}\setminus\mathcal{N}$,
		 	$\shBodyNaf\sigma \cap \mathcal{A'} = \emptyset$, and
		 	$\shBodyFilterGRewrite\sigma \subseteq\adatabase$.
		 	There is $\tau \in P$ with $\tau = \shHead \leftarrow \shBodyNormal \land \shBodyNaf \land \shBodyFilterG$ such that $\tau'$ is an admissible rewriting of $\tau$.
Let $F_-$ and $F_+$ be as in Definition~\ref{def_admissible}.
		 	By Lemma~\ref{lemmaAdmissibilityStrongEnough}, for $b(\vec{y}) \in \shBodyNormal$ with IDB predicate $b$, $\sigma(\vec{y}) \in \fnFilterExpr{b}^\adatabase$.
		 	Therefore, $F_-\sigma \subseteq \adatabase$ and, by admissibility of $\tau'$, we have $F_+\sigma \subseteq \adatabase$,
		 	and in particular $\shBodyFilterG\sigma \subseteq\adatabase$ and $\sigma(\vec{x})\in\fnFilterExpr{h}^\adatabase$.
Moreover, $\shBodyNormal\sigma \subseteq \mathcal{A'}\setminus\mathcal{N}\subseteq\mathcal{A}_-$.
		 	
		 	It remains to show that $\shBodyNaf\sigma \cap \mathcal{A} = \emptyset$. 
		 	Therefore, consider an arbitrary $b(\vec{d})\sigma \in \shBodyNaf\sigma$.
		 	\begin{enumerate}[({1}a)]
		 	\item If $b$ is an EDB predicate, then $b(\vec{d})\notin\mathcal{A}$, since $\mathcal{A}$ and $\mathcal{A}'$ have the same EDB facts and
		 	$b(\vec{d}) \notin \mathcal{A'}$.
		 	\item If $b \in \sigPredStrat$, then $\vec{d} \in \fnFilterExpr{b}^\adatabase$. Indeed, the modified line~\alglineref{line_fp_bodyloop}
		 	in Algorithm~\ref{alg_filter_pushing} considers $b(\vec{y})$.
		 	From our earlier observation that $F_+\sigma \subseteq \adatabase$, we get that $G\sigma \subseteq \adatabase$
		 	for $G$ as in \alglineref{line_fp_matchfilter}, so the susequent update of $\fnFilterExpr{b}$ ensures $\vec{d} \in \fnFilterExpr{b}^\adatabase$.
		 	
Therefore, since $\vec{d} \in \fnFilterExpr{b}^\adatabase$, claim ($\ddagger$) implies that $b(\vec{d}) \notin \mathcal{A'} \subseteq \mathcal{A}$.
		 	\item If $b \notin \sigPredStrat$ is an IDB predicate, then $\vec{d} \in \fnFilterExpr{b}^\adatabase$ by the modified initialisation \eqref{eq_sf_init_asp} for line \alglineref{line_fp_init} of Algorithm~\ref{alg_filter_pushing}.
		 	Using ($\ddagger$) and $\vec{d} \in \fnFilterExpr{b}^\adatabase$, we have $b(\vec{d}) \notin \mathcal{A'} \subseteq \mathcal{A}$.
		 	\end{enumerate}	
		 	Hence, $\tau\sigma\in\ground{P}^{\mathcal{A}}$ but $\mathcal{A_-}\not\models\tau\sigma$ -- contradiction.
		\item Case 2: $\mathcal{N}\cap\mathcal{A}'=\emptyset$ and there is $p(\vec{c}) \in \mathcal{N}$ with $p \in \sigPredStrat$ and $k = \xi(p)$. W.l.o.g., assume that $k$ is minimal,
			and let $\mathcal{A}_k^-=\mathcal{A}_k\setminus\mathcal{N}$.
			By minimality of $k$ and $\mathcal{N}\cap\mathcal{A}'=\emptyset$, we have $\mathcal{A}_{k-1}\subseteq\mathcal{A}_k^-$.
			Yet $\mathcal{A}_k^-$
			is not a model of $\ground{P^k}^{\mathcal{A}_{k-1}}$ and $\mathcal{A}_{k-1}$, since it is strictly smaller than the least model $\mathcal{A}_k$.
			Therefore, $\mathcal{A}_k^- \not\models\ground{P^k}^{\mathcal{A}_{k-1}}$, so			
			there is $\arule = h(\vec{x}) \leftarrow \shBodyNormal \land \shBodyNaf \land \shBodyFilterG \in P^k$ and mapping $\sigma$ such that
			$h(\sigma(\vec{x})) \in \mathcal{N}$, $\shBodyNormal\sigma \subseteq \mathcal{A}_k^-$, $\shBodyNaf\sigma \cap \mathcal{A}_{k-1} = \emptyset$, and
			$\shBodyFilterG\sigma \subseteq \adatabase$.
So $\shBodyNormal\sigma \subseteq \mathcal{A}_k^- \subseteq \mathcal{A_-}$.
			For $b(\vec{d}) \in \shBodyNaf\sigma$, there are two cases:
			\begin{enumerate}[({2}a)]
		 	\item Same as (1a).
		 	\item If $b$ is an IDB predicate, then $b\in\sigPredStrat$ with $\xi(b) \leq k-1$. Again, $b(\vec{d}) \notin \mathcal{A}$, since $\mathcal{A}$ and $\mathcal{A}_{k-1}$ have the same facts for predicates of lower strata, and $b(\vec{d}) \notin \mathcal{A}_{k-1}$.
		 	\end{enumerate}			
			Hence, $\shBodyNaf\sigma \cap \mathcal{A} = \shBodyNaf\sigma \cap \mathcal{A}_{k-1} = \emptyset$ and $\arule\sigma \in \ground{P}^{\mathcal{A}}$.
			Therefore, $\mathcal{A_-}$ is not closed under $\ground{P}^{\mathcal{A}}$ -- contradiction.
		\item Case 3: $\mathcal{N}\cap\mathcal{A}'=\emptyset$ and for all $p(\vec{c}) \in \mathcal{A}$ with $p \in \sigPredStrat$, $p(\vec{c}) \notin \mathcal{N}$.
			Then $\mathcal{A}_n\subseteq\mathcal{A}_-$.
			As before, since $\mathcal{A}$ is the least model of $\ground{P^\ast}^{\mathcal{A}_n}$ and $\mathcal{A}_n$,
			$\mathcal{A}_-\not\models\ground{P^\ast}^{\mathcal{A}_n}$, so
		 	there is a rule $\tau = \shHead \leftarrow \shBodyNormal \land \shBodyNaf \land \shBodyFilterG$ with $h \notin \sigPredStrat$ and $\tau\in\aprogram$, and mapping $\sigma$ such that $h(\sigma(\vec{x})) \in \mathcal{N}$, $\shBodyNormal\sigma \subseteq \mathcal{A}_-$, $\shBodyNaf\sigma \cap \mathcal{A}_{n} = \emptyset$, and
		 	$\shBodyFilterG\sigma \subseteq \adatabase$.
We still require that $\shBodyNaf\sigma \cap \mathcal{A} = \emptyset$.
		 Therefore, consider any $\naf b(\vec{d}) \in \shBodyNaf\sigma$. We show that $b(\vec{d}) \notin \mathcal{A}$:
		 \begin{enumerate}[({3}a)]
		 	\item Same as (1a).
		 	\item Same as (2b), using $n$ instead of $k-1$.
\item Same as (1c).
\end{enumerate}
		 Hence, $\arule\sigma \in \ground{P}^{\mathcal{A}}$, which yields the required contradiction, since $\arule\sigma$ is applicable for $\mathcal{A}_-$.\qedhere
	\end{itemize}
\end{proof}

\begin{lemma}\label{lemma_bijection_injective}
	Let $P'$ be an admissible rewriting of $P$,
	let $\mathcal{A'}$ be a stable model of $P'$ and $\adatabase$, and
	let $P^1,\ldots, P^n, P^\ast$ and $\mathcal{A}_0, \ldots, \mathcal{A}_n, \mathcal{A}$ as in Lemma~\ref{lemma_bijection_surjective}.
	If $\mathcal{B} \in \stablemods{P}{\adatabase}$ and $\{p(\vec{c}) \in \mathcal{B} \mid \vec{c} \in \fnFilterExpr{p}^\adatabase\} = \mathcal{A'}$, then $\mathcal{A} = \mathcal{B}$.
\end{lemma}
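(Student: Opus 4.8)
The plan is to prove the two inclusions $\mathcal{A}\subseteq\mathcal{B}$ and $\mathcal{B}\subseteq\mathcal{A}$, using throughout that, since $\mathcal{B}\in\stablemods{P}{\adatabase}$, $\mathcal{B}$ is the \emph{least} model of the Datalog program $\ground{P}^{\mathcal{B}}$ together with $\adatabase$. All of $\mathcal{A}'=\mathcal{A}_0,\mathcal{A}_1,\ldots,\mathcal{A}_n,\mathcal{A},\mathcal{B}$ share the same EDB facts, and each increment $\mathcal{A}_i\setminus\mathcal{A}_{i-1}$ (resp.\ $\mathcal{A}\setminus\mathcal{A}_n$) only adds facts whose head predicate has stratum exactly $i$ (resp.\ is non-stratifiable). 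The technical core is an induction on $i$ establishing the invariant: $\mathcal{A}_i\subseteq\mathcal{B}$, and $\mathcal{A}_i$ and $\mathcal{B}$ assign the same extension to every stratifiable predicate $p$ with $\xi(p)\leq i$. I phrase this as an ``agreement'' condition rather than describing $\mathcal{A}_i$ explicitly, since $\mathcal{A}_0=\mathcal{A}'$ may already contain facts for predicates of high stratum or for non-stratifiable predicates. The base case $i=0$ is immediate from $\mathcal{A}_0=\mathcal{A}'=\{q(\vec c)\in\mathcal{B}\mid \vec c\in\fnFilterExpr{q}^\adatabase\}\subseteq\mathcal{B}$ and the absence of predicates of stratum $\leq 0$.

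For the step $i-1\mapsto i$, I first obtain $\mathcal{A}_i\subseteq\mathcal{B}$ by showing $\mathcal{B}$ is closed under $\ground{P^i}^{\mathcal{A}_{i-1}}$ (then $\mathcal{B}\supseteq\mathcal{A}_{i-1}$ contains the least model $\mathcal{A}_i$ of that program): a rule in $\ground{P^i}^{\mathcal{A}_{i-1}}$ has a stratum-$i$ head, so its negated atoms involve only EDB predicates or stratifiable predicates of stratum $<i$ (a negative edge into a stratifiable vertex forces the source to be stratifiable and to have strictly smaller $\xi$), on which $\mathcal{A}_{i-1}$ and $\mathcal{B}$ agree by the IH; hence the rule is also in $\ground{P}^{\mathcal{B}}$, and closure of $\mathcal{B}$ applies. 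For the agreement at stratum $i$ — the inclusion $\mathcal{A}_i\subseteq\mathcal{B}$ being one direction and the lower strata being inherited from the IH — I use minimality of $\mathcal{B}$: if some stratum-$i$ fact of $\mathcal{B}$ were missing from $\mathcal{A}_i$, then $\mathcal{B}'=\mathcal{B}\setminus\{q(\vec d)\in\mathcal{B}\mid \xi(q)=i,\, q(\vec d)\notin\mathcal{A}_i\}$ would be a strictly smaller model of $\ground{P}^{\mathcal{B}}$ still containing $\adatabase$, a contradiction. Checking that $\mathcal{B}'$ is closed is routine except for rules with a stratum-$i$ head; for such a rule the negated atoms behave the same in $\mathcal{A}_{i-1}$ and $\mathcal{B}$, so it lies in $\ground{P^i}^{\mathcal{A}_{i-1}}$, and since $\mathcal{A}_i\subseteq\mathcal{B}$ the part of $\mathcal{B}'$ on strata $\leq i$ and EDB predicates coincides with $\mathcal{A}_i$'s, so the positive body lies in $\mathcal{A}_i$ and $\mathcal{A}_i\models\ground{P^i}^{\mathcal{A}_{i-1}}$ supplies the head; for a head of any other stratum or a non-stratifiable head the head is not a deleted fact, so closure transfers directly from $\mathcal{B}$.

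The induction gives $\mathcal{A}_n\subseteq\mathcal{B}$ with $\mathcal{A}_n$ and $\mathcal{B}$ agreeing on all stratifiable (and EDB) predicates; the same then holds for $\mathcal{A}$ in place of $\mathcal{A}_n$. For $\mathcal{A}\subseteq\mathcal{B}$, I show $\mathcal{B}$ is closed under $\ground{P^\ast}^{\mathcal{A}_n}$ (so, containing $\mathcal{A}_n$, it contains the least model $\mathcal{A}$). The new difficulty is a negated atom $\naf b(\vec y)$ of a $P^\ast$-rule where $b$ is a non-stratifiable IDB predicate: here the initialisation~\eqref{eq_sf_init_asp} enters. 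If the rule fires on $\mathcal{B}$ then $\sigma$ satisfies its generalised filter $G_\sigPredFilter$ over $\adatabase$, and since $G_\sigPredFilter\models\iota_{b(\vec y)}(M_{b(\vec y)})$ we get $\sigma(\vec y)\in (N_\arule^b)^\adatabase\subseteq\fnFilterExpr{b}^\adatabase$; thus $b(\sigma(\vec y))\in\mathcal{B}$ would place it in $\{q(\vec e)\in\mathcal{B}\mid \vec e\in\fnFilterExpr{q}^\adatabase\}=\mathcal{A}'=\mathcal{A}_0\subseteq\mathcal{A}_n$, contradicting $b(\sigma(\vec y))\notin\mathcal{A}_n$; hence the atom behaves as over $\mathcal{A}_n$, the rule lies in $\ground{P}^{\mathcal{B}}$, and its head lies in $\mathcal{B}$. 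Finally, for $\mathcal{B}\subseteq\mathcal{A}$ I show $\mathcal{A}$ is a model of $\ground{P}^{\mathcal{B}}$ and $\adatabase$ (whence $\mathcal{B}\subseteq\mathcal{A}$ by minimality): given a rule of $\ground{P}^{\mathcal{B}}$ with positive body in $\mathcal{A}$, if its head is stratifiable of stratum $i$ then agreement on lower strata makes it a rule of $\ground{P^i}^{\mathcal{A}_{i-1}}$ with body in $\mathcal{A}_i$, so the head lies in $\mathcal{A}_i\subseteq\mathcal{A}$; if its head is non-stratifiable then, since $\mathcal{A}_n\subseteq\mathcal{B}$, every negated atom false in $\mathcal{B}$ is also false in $\mathcal{A}_n$, so it is a rule of $\ground{P^\ast}^{\mathcal{A}_n}$ and its head lies in $\mathcal{A}$. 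Combining the inclusions yields $\mathcal{A}=\mathcal{B}$.

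The step I expect to be most delicate is the minimality argument inside the induction: the shrunken set $\mathcal{B}'$ must be simultaneously a proper subset of $\mathcal{B}$ and closed under $\ground{P}^{\mathcal{B}}$, and the closure check rests on the stratum bookkeeping matching up exactly — in particular it is the already-proved $\mathcal{A}_i\subseteq\mathcal{B}$ that forces the stratum-$i$ parts of $\mathcal{B}'$ and $\mathcal{A}_i$ to agree. The other subtle point is the handling of negated non-stratifiable atoms in the $P^\ast$ rules, which is precisely where the non-trivial initialisation~\eqref{eq_sf_init_asp} and the $N_\arule^p$-construction are indispensable; this closely parallels Case~3 in the proof of Lemma~\ref{lemma_bijection_surjective}.
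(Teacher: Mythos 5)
Your proof is correct and follows essentially the same route as the paper's: a stratum-by-stratum induction that identifies each $\mathcal{A}_i$ with the restriction of $\mathcal{B}$ to strata $\leq i$ (plus $\mathcal{A}'$), followed by the final $P^\ast$ step. You additionally spell out two points the paper's proof only asserts — the minimality argument on $\mathcal{B}$ giving the reverse inclusion at each stratum, and the use of the initialisation \eqref{eq_sf_init_asp} to rule out that a negated non-stratifiable atom lies in $\mathcal{B}\setminus\mathcal{A}_n$ and thereby separates the reducts $\ground{P^\ast}^{\mathcal{A}_n}$ and $\ground{P^\ast}^{\mathcal{B}}$ — but the underlying decomposition is identical.
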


\begin{proof}
	Let $\mathcal{B}_i = \{ p(\vec{c}) \in \mathcal{B} \mid \xi(p) \leq i \} \cup \mathcal{A'}$ for $0 \leq i \leq n$.
	We show $\mathcal{A}_i = \mathcal{B}_i$ by induction on $i$:
	\begin{itemize}
		\item By definition, $\mathcal{B}_0 = \mathcal{A}' = \mathcal{A}$.
		\item Assume that $\mathcal{B}_i = \mathcal{A}_i$.
			$\mathcal{A}_{i+1}$ is the model of $\ground{P^{i+1}}^{\mathcal{A}_i}$ and $\mathcal{A}_i$.
			By induction, $\mathcal{A}_{i+1}$ is the model of $\ground{P^{i+1}}^{\mathcal{B}_i}$ and $\mathcal{B}_i$.
			Moreover, $\mathcal{B}$ is the model of $\ground{P}^{\mathcal{B}}$ and $\mathcal{B}$.
			We have $\mathcal{B}_{i+1}$ is the model of $\ground{P}^{\mathcal{B}_i}$ and $\mathcal{B}_i$, since each $P^k$ defines the predicates $q$ with $\xi(q) = k$.
			Hence, $\mathcal{A}_{i+1} = \mathcal{B}_{i+1}$.
	\end{itemize}
	
	$\mathcal{A}$ is the model of $\ground{P^\ast}^{\mathcal{A}_n}$ and $\mathcal{A}_n$.
	$\mathcal{B}$ is the model of $\ground{P}^{\mathcal{B}}$ and $\mathcal{B}$.
	Since $\mathcal{B}$ is stable model of $P$ and $\adatabase$, and $\adatabase \subseteq \mathcal{B}_n \subseteq \mathcal{B}$,
	we have $\mathcal{B}$ is the model of $\ground{P}^{\mathcal{B}}$ and $\mathcal{B}_n$.
	$P^\ast$ defines exactly the predicates $p \notin \sigPredStrat$,
	so $\mathcal{B}$ is the model of $\ground{P^\ast}^{\mathcal{B}_n}$ and $\mathcal{B}_n$.
	By $\mathcal{A}_n = \mathcal{B}_n$, we have $\mathcal{A} = \mathcal{B}$.
\end{proof}

\theoAspRewritingCorrect*
\begin{proof}
	Let $\mu$ be the mapping $\stablemods{\aprogram}{\adatabase} \to \stablemods{\aprogram'}{\adatabase} \colon \mathcal{A}\mapsto\{p(\vec{c}) \in \mathcal{A} \mid \vec{c} \in \fnFilterExpr{p}^\adatabase\}$.
	By Lemma~\ref{lemma_bijection_well_defined}, $\mu$ is well-defined.
	By Lemma~\ref{lemma_bijection_surjective}, $\mu$ is surjective.
	By Lemma~\ref{lemma_bijection_injective}, $\mu$ is injective.
	Hence, $\mu$ is a bijection.
\end{proof}
 \end{document}